\newtheorem{prop}{Proposition}
\title{Safe and Efficient Intersection Control of Connected and Autonomous Intersection Traffic}
\author{Qiang Lu}
\abstract
{

In this dissertation, we address a problem of safe and efficient intersection crossing traffic management of autonomous and connected ground traffic. Toward this objective, an algorithm that is called the Discrete-time occupancies trajectory based Intersection traffic Coordination Algorithm (DICA) is proposed. All vehicles in the system are Connected and Autonomous Vehicles (CAVs) and capable of wireless Vehicle-to-Intersection communication. The main advantage of the proposed DTOT-based intersection management is that it enables us to utilize the space within an intersection more efficiently resulting in less delay for vehicles to cross the intersection. In the proposed framework, an intersection coordinates the motions of CAVs based on their proposed DTOTs to let them cross the intersection efficiently while avoiding collisions. In case when there is a collision between vehicles' DTOTs, the intersection modifies conflicting DTOTs to avoid the collision and requests CAVs to approach and cross the intersection according to the modified DTOTs. 
We then prove that the basic DICA is deadlock free and also starvation free. We also show that the basic DICA has a computational complexity of $\mathcal{O}(n^2 L_m^3)$ where $n$ is the number of vehicles granted to cross an intersection and $L_m$ is the maximum length of intersection crossing routes. 
To improve the overall computational efficiency of the algorithm, the basic DICA is enhanced by several computational approaches. The enhanced algorithm has the computational complexity of $\mathcal{O}(n^2 L_m \log_2 L_m)$.

Next we addressed the problem of evacuating emergency vehicles as quickly as possible through autonomous and connected intersection traffic in this dissertation. The proposed intersection control algorithm Reactive DICA aims to determine an efficient vehicle-passing sequence which allows the emergency vehicle to cross an intersection as soon as possible while the travel times of other vehicles are minimally affected. When there are no emergency vehicles within the intersection area, the vehicles are controlled by DICA. When there are emergency vehicles entering communication range, we prioritize emergency vehicles through optimal ordering of vehicles. Since the number of possible vehicle-passing sequences increases rapidly with the number of vehicles, finding an efficient sequence of vehicles in a short time is the main challenge of the study. A genetic algorithm is proposed to solve the optimization problem which finds the optimal vehicle sequence that gives the emergency vehicles the highest priority.


We verify the efficiency of the proposed approaches through simulations using an open-source traffic simulator, called the Simulation of Urban Mobility (SUMO). 
The simulation results show that DICA performs better than another existing intersection management scheme: Concurrent Algorithm in \cite{kim2014mpc}. 
The overall throughput as well as the computational efficiency of the computationally enhanced DICA are also compared with those of an optimized traffic light control.
The efficiency of the proposed Reactive DICA is validated through comparisons with DICA and a reactive traffic light algorithm. The results show that Reactive DICA is able to decrease the travel times of emergency vehicles significantly in light and medium traffic volumes without causing any noticeable performance degradation of normal vehicles.
}
\begin{document}

\chapter{Introduction} \label{chapter:intro}
\section{Introduction} \label{sec:intro}
Recent years, due to the rapidly increasing demand for transportation from the larger and larger population, roads have become more and more congested. Careful city planning can usually alleviate such transportation problems, but the unexpected growth of population and vehicle usage leads to persistent congestions. 
As efficient ways to address congestion problems, self-driving vehicles and autonomous transportation systems have attracted a lot of research and development efforts from academia, industry, and government. 
For example, during the mid-1990s the California PATH (Partners for Advanced Transportation Technology) launched the Automated Highway System (AHS) program \cite{horowitz2000control} and the US DARPA (Defense Advanced Research Projects Agency) held a series of autonomous vehicle challenges during the 2000s \cite{darpa} to take advantages in sensing and computation technologies. Also, many companies have already made decisions to hugely invest in developing their own self-driving vehicles or vehicles with advanced driving assistance systems \cite{bengler2014three}. 
However, despite many recent successful road testing results of several self-driving cars such as Google driverless car \cite{markoff2010google}, it is hard to argue that the overall system-wide traffic safety, as well as throughput, will be improved substantially when we have a few Connected and Autonomous Vehicles (CAVs) among all other conventional vehicles. 
In fact, the potential of autonomous vehicles in terms of the traffic efficiency and safety will be unleashed when most cars on roads are autonomous and connected \cite{Tientrakool2011p}. 
Thus, in addition to many efforts to make today's traffic more efficient by improving utilization of traditional traffic infrastructure such as the work presented in \cite{chen}, we believe that it is also very important to develop traffic control algorithms that take advantages of connectivity and autonomy of autonomous vehicles to prepare for the next generation transportation system. 
However, while there have been many efforts toward this direction, the development of safe and efficient autonomous transportation systems is still at its early stage. In this paper, among many research problems like vehicle path planning \cite{konar}, autonomous parking control \cite{li2003autonomous}, collision avoidance \cite{mammeri,colombo}, relation between occupant experience and intersection capacity \cite{le2015autonomous}, intersection management of mixed traffic \cite{onieva2015multi}, traffic coordination with power \cite{jiang2017traffic}, etc. that should be addressed toward this objective, we are particularly interested in addressing a problem of safe and efficient intersection crossing traffic management of autonomous connected traffic. Compared with highways, road intersections are more complicated places where accidents are more likely to happen and become the bottleneck of traffic performance improvement.

In literature, there are a number of notable results for autonomous intersection crossing traffic management. 
In reference \cite{kowshik2011provable}, a scheme that consists of a time-slot allocation intersection crossing algorithm, and an algorithm for updating failsafe maneuvers of each vehicle so as to avoid collisions while crossing an intersection was proposed. 
In \cite{lee2012development}, Lee et al. proposed an algorithm, called the Cooperative Vehicle Intersection Control (CVIC), which manipulates every individual vehicle's driving motion by providing them proper acceleration or deceleration rate so that vehicles can cross the intersection safely. 
Wu et al. \cite{wu2012cooperative} introduced a new intersection traffic management framework that is formulated as a combinatorial optimization problem and solved the problem approximately using the ant colony system algorithm \cite{dorigo1996ant}. 
Fei Yan et al. \cite{yan2013autonomous} combined vehicles whose routes are compatible with each other into mini groups and obtained an efficient vehicle passing sequence by their proposed genetic algorithm. A nonlinear programming formulation for autonomous intersection control was developed in \cite{zhu2015} where the nonlinear constraints were relaxed by a set of linear inequalities. 
Kyoung-Dae Kim and P.R. Kumar \cite{kim2014mpc} developed a Model Predictive Control (MPC) framework which dynamically generates a sequence of collision-free motions. Their paper considered two algorithms, a simple First In First Out (FIFO) Crossing algorithm and a Concurrent Algorithm, and shown the corresponding system-wide safety and crossing traffic liveness.
A polling-systems-based algorithm was proposed in \cite{miculescu2016} with provable guarantees on safety and performance. In the algorithm, a rigorous upper bound is provided for the expected wait time. 
Most of them are centralized approaches in which control decisions are made typically by a central agent. Decentralized intersection control approaches have also been proposed in literature. For example, \cite{malikopoulos2016decentralized} formulated a decentralized framework whereby each autonomous vehicle minimizes its energy consumption under the throughput-maximizing timing constraints and hard safety constraints to avoid rear-end and lateral collisions. A complete analytical solution of the decentralized problems was presented in the paper. These approaches are similar in that they all ensure safety within an intersection by preventing vehicles with conflicting intersection crossing routes from being inside the intersection at the same time.

To further improve the overall intersection control performance, some researchers studied algorithms that allow conflicting vehicles exist inside an intersection simultaneously. To achieve this objective, most approaches discretized the intersection space into grid cells so that vehicles of conflicting routes can exist at the same time within an intersection but not within a same cell.
Kurt Dresner and Peter Stone \cite{dresner2008multiagent} presented a reservation-based approach Autonomous Intersection Management (AIM) which treats vehicles and intersections as autonomous agents in a multiagent system. In AIM, vehicles request and receive time slots from the intersection during which they may pass. However, no global coordination is made for crossing vehicles to obtain optimal traffic flow. Following AIM, similar researches and improved approaches were proposed like expediting the crossing of emergency vehicles \cite{dresner2006human}, determining the priority of requests using auctions \cite{carlino2013auction, vasirani2012market}, etc. To improve previous AIM models, Levin et al. \cite{levin2017optimizing} studied the approach to choose the optimal subset of vehicles to move at every time step by formulating an Integer Program (IP). Arbitrary objective functions can be admitted by the IP so a more general class of policies can be applied to optimize the order of vehicles to cross the intersection. 
Representative centralized approaches also include auction-based intersection managements proposed in \cite{carlino2013auction, vasirani2012market}. 
A series of decentralized approaches only based on vehicle-to-vehicle (V2V) communications were proposed in \cite{Azimi2011p, Azimi2012p, Azimi2013p, Azimi2014} by Reza Azimi et al. In their papers, the intersection area is considered as a grid which is divided into small cells. Each cell in the intersection grid is associated with a unique identifier. One of their advanced intersection protocols named AMP-IP (Advanced Maximum Progression Intersection Protocol) allows the lower-priority vehicle to go ahead and cross the conflicting cell before the arrival of the higher-priority vehicle if there is enough time for lower-priority vehicle to clear the cell.
Roughly speaking, these approaches are all based on the grid cell partitioning of an intersection space. 
In \cite{dresner2008multiagent}, the effect of the grid cell granularity on the computational efficiency of an intersection traffic management framework such as AIM was studied. Clearly, higher granularity gives more flexibility for better traffic throughput. However, the computational complexity increases proportionally to the square of the granularity. On the other hand, when the cell size becomes large for better computational efficiency, one can see that the intersection space is not utilized efficiently resulting in lower traffic throughput. 
Therefore, to overcome this trade-off issue between the granularity and computational efficiency of an algorithm, it might be a good alternative approach to utilize each vehicle's actual occupancy instead of grid cells to improve the overall traffic throughput. And this has motivated our research on this topic.

As an approach to address the above-mentioned granularity issue, we propose a novel intersection traffic management scheme based on the idea of the \textit{Discrete-Time Occupancies Trajectory (DTOT)}. Conceptually, a DTOT is a discrete-time sequence of a vehicle's actual occupancy within an intersection space. The proposed DTOT-based Intersection Control Algorithm (DICA) allows the flexibility that each vehicle can choose its path as well as motions along the path that a CAV wants to take to cross an intersection. A CAV who is approaching an intersection will check whether it is the \textit{Head Vehicle} on its lane. If it is, the CAV will propose its request to the intersection. From the request, a vehicle's enter time and exit time to the intersection, route and other detailed information of passing the intersection can be found. Then the intersection responds with a DTOT to the vehicle to avoid collisions and improve the overall intersection throughput. The DTOT corresponds to a trajectory with achievable speed and acceleration for the CAV. If the CAV is not the Head Vehicle, it will just follow other cars. Thus, the management scheme is only dealing with head vehicles which reduces largely the communication needs for vehicles and the computational complexity of the central control agent. It is assumed that a CAV always want to go through an intersection as quickly as possible. So CAVs in our algorithm always try their best to reach maximum allowed speed within the intersection.

Then, we provide an in-depth analysis of the original DTOT-based Intersection Coordination Algorithm (DICA) to show that it satisfies the liveness property in terms of deadlock as well as starvation issues and also to derive the overall computational complexity of the algorithm. Another contribution is that we propose several computational approaches to improve the overall computational efficiency of the DICA and also enhance the algorithm accordingly so that it can be operated in real-time for autonomous and connected intersection crossing traffic management. We also present simulation results that show the improved computational efficiency of the enhanced algorithm and the overall throughput performance in comparison with that of an optimized traffic light control.

Human lives and the amount of financial loss highly depend on the response time (from the time the emergency service is called to the time help is offered) of emergency vehicles. The travel time of emergency vehicles to the accident scene is critical to the response time. So it is very useful and helpful to reduce travel time of emergency vehicles on roads, especially on intersections where congestions are more likely to happen. The survival chance of injured people in an accident falls sharply if they reach the operating table later than 60 minutes after the accident \cite{martinez2010emergency}. Hence, shortening the travel times of crossing intersections for emergency vehicles will help to save lives. In reality, the current way to handle emergency vehicles is similar to using Vehicle-to-Vehicle (V2V) communication (siren and lights) to warn non-emergency vehicles on roads to yield to the emergency vehicle. Some drivers cannot respond quickly to the warnings which may result in additional time delay for emergency vehicles and even serious accidents. In most of existing approches \cite{oliveira2005making, viriyasitavat2012priority, dresner2006human} for emergency vehicles, the travel times of non-emergency vehicles will be affected significantly and the advantages of autonomous vehicles are not fully taken. 

Following computationally enhancing DICA, we extend DICA approach to include emergency vehicles in the traffic to be controlled. Our goal of intersection control is to let emergency vehicles cross intersections as fast as possible while maintaining adequate traffic performance. In this dissertation, we assume that emergency vehicles are taking normal routes which means that they will not travel in a wrong lane. A genetic algorithm is proposed to find the optimal passing sequence of vehicles whose trajectories can be rearranged. This optimal sequence aims to make the emergency vehicles cross the intersection in the fastest way. When there is no emergency vehicle inside the communication region, vehicles are controlled by DICA. Thus, the proposed algorithm is called Reactive DICA.
Among many sequence forming approaches \cite{carlino2013auction,qian2014priority,Wu2012p207,yan2013autonomous} in literature, \cite{yan2013autonomous} is the most similar approach with ours which also proposed a genetic algorithm to form vehicle sequences. 
However, unlike the approach proposed in this dissertation, they are essentially not allowing vehicles with conflicting routes to be inside the intersection at the same time.

\section{Organization of the Dissertation}

The remainder of this dissertation is organized into five chapters. 
In Chapter \ref{chapter:lr}, we review literature on autonomous transportation especially on autonomous intersection control.
In Chapter \ref{chapter:dica}, we introduce the assumptions and interactions between vehicle and intersection in DICA. Then we explain the functions in DICA in detail and analyze the liveness of DICA. Finally, DICA is simulated and compared with Concurrent Algorithm \cite{kim2014mpc} to validate its performance.
In Chapter \ref{chapter:e-dica}, we analyze the computational complexity of DICA in detail first and then proposed several computational techniques to improve overall computational efficiency. The improved performance is validated through comparisons with original DICA and an optimized traffic light control algorithm.
Chapter \ref{chapter:r-dica} describes Reactive DICA especially the proposed genetic algorithm in detail. Simulation results of Reactive DICA are compared with a reactive traffic light algorithm to show its performance.
%
Finally, Chapter \ref{chapter:conclusion} draws the conclusion of the dissertation and provides potential future work.

\chapter{Literature Review} \label{chapter:lr}

\section{Control of Highway Systems}
Highway congestion has brought intolerable burdens to many urban residents. Governments have made many attempts including building more roads and raising tolls or other related taxes, promoting public transportation or better vehicle occupany (carpooling) and developing a high-speed communications network that in many ways to reduce travel demand \cite{varaiya1993smart}. 
In the meantime, researches of Intelligent Vehicle/Highway System (IVHS) provide new ways to ease the congestion on highways. There are mainly two types of methods to increase highway capacity while ensure safety, \textsl{1}. vehicle platooning, represented by the AHS which had been deeply developed at the University of California Partners for PATH program, in cooperation with the State of California Department of Transportation (Caltrans) and the United States Federal Highway Administration (FHWA) since 1989 \cite{horowitz2000control, varaiya1993smart}. \textsl{2}. algorithms and protocols for each individual vehicle to drive fast and safely, represented by references \cite{falcone2007predictive, reghelin2012using}.

\subsection{Vehicle Platooning}
Organizing traffic into platoons which are groups of up to $20$ tightly spaced cars can increase highway capactiy greatly. $8000$ vehicles per hour per lane could be achieved while today's highways with manually controlled vehicles only have $2000$ capacity \cite{horowitz2000control}. Platooning also has a benefit of reducing aerodynamic drag because vehicles are tightly spaced. People cannot react quickly enough to drive safely with such small headways, so every vehicle in a platoon should be automated.   
A car joins a platoon when it enters the AHS and exits as a one-car platoon or part of an exiting platoon when it approaches its destination. Inside AHS, the car is controlled automatically by computers and a series of maneuvers \cite{varaiya1993smart} are executed by the car including splitting from and joining platoons and lane changing to navigate through the highway network.

Reference \cite{varaiya1993smart} introduces the finite state machine method which is used for the execution of maneuvers. Join control law, platoon leader control law and other laws are given in \cite{horowitz2000control} to ensure collision avoidance of AHS. What is noteworthy is that the on-board vehicle control system is a
hybrid control system which consists of a discrete event dynamical system (the coordination layer) and a continuous-time dynamical system (the regulation and physical layers). 

The platoon concept is useful but does not take full advantage of the microscope centralized possibilities because it does not consider the interaction of all vehicles \cite{reghelin2012using}. In a platoon, only leaders (and free agents) can initiate maneuvers, while followers maintain platoon formation at all times \cite{horowitz2000control}.

\subsection{Algorithms and Protocols for Individual Vehicles}
For an individual autonomous vehicle, many algorithms and protocols have been proposed for its longitudinal and lateral motion on a highway system while the vehicle's safety is ensured.

\subsubsection{Longitudinal Control}
The longitudinal control of a vehicle is to control the vehicle's speed and its adaptation to road features by adjusting the throttle \cite{ruan2005study} and the brake pedal as needed \cite{Naranjo2003throttle}. 
A longitudinal control system will handle issues like nonlinear vehicle dynamics, operations of vehicles from high-speed cruising to a complete full-stop, and other operations under the communication constraints \cite{rajamani2000demonstration}. Following we describe several different controllers that are designed to ensure safety of longitudinal motion of each vehicle.

Reference \cite{kowshik2011provable} gives the theorem for perpetual safety of two cars on a lane that if each car makes worst-case assumption about the car immediately in front of it, then the safety of multiple cars on a lane can be guaranteed.
Reference \cite{kim2013collision} uses a simple discrete-time kinematic model to represent the longitudinal motion of a vehicle, $x_{t+h} = f(x_t,v_t) = x_t+v_th$, where $x_t, v_t, h$ correspond to $x$-axis (forward direction) position, linear velocity of a vehicle at time t, and sampling period respectively. The paper formulated an MPC problem for the vehicle to ensure the generation of safety-guaranteed motions.

\subsubsection{Lateral Control}
Typically lane changing provides a maneuver for a fast vehicle to pass a slow vehicle, which can be observed everywhere on the highway \cite{enache2009driver}.  
A framework of lane change decision-making under typical urban driving conditions was proposed by Gipps \cite{gipps1986model}, which includes the effects of traffic lights, obstacles and types of surrounding vehicles. Taking into account the potential conflict objectives and assuming logical driver behavior, the model focuses on the decision-making process.

Reference \cite{kim2013collision} constructs the MPC problem for Lane Change of a vehicle on a multi-lane traffic and introduces the Lane Change Protocol that a vehicle can initiate its lane change action only when its state satisfies certain relations with neighboring vehicles. The paper also proposes the Yield Protocol which is
incorporated into the MPC motion planning framework to ensure the liveness of Lane Change.
A novel MPC-based approach for active steering control design is presented in \cite{falcone2007predictive}. Experimental results showed that a vehicle under MPC feedback policy is capable of stabilizing with a speed up to $21$ m/s on slippery surfaces such as snow covered or icy roads. The paper designed and experimentally tested three types of MPC controllers including a nonlinear MPC, an LTV MPC, and an LTV MPC controller of low order.

\section{Autonomous Intersection Control}
Intersection traffic control is more of importance since it is more likely to have accidents at intersections compared with highways. Besides accidents, the trip delays from the impact of intersections also lead to waste of human and natural resources. To tackle intersection control problems, many solutions have been proposed in the literature. The main problem is to provide an intersection control algorithm with guaranteed safety and improved efficiency.

\subsection{Centralized Control}
Kyoung-Dae Kim \cite{kim2013collision} proposes a framework for intersection management that integrates decisions made by the intersection in the discrete domain for vehicle ordering with decisions made by each vehicle in the continuous domain for its safety-guaranteed motion. The proposed intersection control algorithms (FIFO and Concurrent Algorithm) and protocols ensure that the intersection is safe and live. 

Reference \cite{kowshik2011provable} designs smart intersections where vehicles negotiate the intersection crossings through an interaction of centralized and distributed decision making. The route that was taken by a car $i$ is described by an ordered pair R($i$)=(O($i$),D($i$)), of origin and destination, respectively. A scheme is proposed that consists of a time-slot allocation intersection crossing algorithm, and an algorithm for updating failsafe maneuvers of each vehicle so as to avoid collisions while crossing an intersection.

An interesting result about collision avoidance at intersection is shown in \cite{colombo}, that it is an NP-hard problem to check the membership in the maximal controlled invariant set, which is the largest set of states for which there exists a control that avoids collisions. An algorithm with provable error bounds is proposed to solve such a problem approximately in polynomial running time. The paper provides and proves a tight bound on the approximated solution. 

The most existing evasion plans are not dealt with the vehicles already in the intersection which suddenly stops because of tire blowout or other mechanical failures. \cite{ZHU2013p10013} provides appropriate plans for such situations. According to the width of the intersection, two ways to avoid collisions are offered for the vehicles in intersection. And the paper proves the validity of the methods by empirical study.

\subsection{Decentralized Control}

Since 2011, researcher Reza Azimi has proposed a series of reliable intersection protocols that use only vehicle-to-vehicle (V2V) communications. The installation of a centralized infrastructure at each intersection is impractical due to the high overall system cost. Azimi advocates the use of V2V communications and distributed intersection algorithms that runs in each vehicle. He designed the vehicular network protocols that integrate mobile wireless communications standards such as Dedicated Short Range Communications (DSRC) and Wireless Access in a Vehicular Environment (WAVE). 
Collision Detection Algorithm for Intersections (CDAI) which uses a priority-based policy, Stop-Sign Protocol (SSP), Throughput Enhancement Protocol (TEP) and Throughput Enhancement Protocol with Agreement (TEPA) are proposed in \cite{Azimi2011p}. SSP is similar to actual stop-sign situation that every vehicle must stop before an intersection when there is a stop-sign. Using TEP, vehicles stop at the intersection only if the CDAI predicts a collision and assigns a lower priority to them based on the message it receives from all vehicles at the intersection. TEPA is built on TEP and is explicitly designed to handle lost V2V messages.

In \cite{Azimi2012p}, Azimi et al. define an intersection as a perfect square box that pre-defines the entry and exit points for each lane to which it is connected. The intersection area is discretized as grid cells and each cell is associated with a unique identifier. 
More advanced protocols CC-IP (Concurrent Crossing-Intersection Protocol) and MP-IP (Maximum Progression-Intersection Protocol) which are improved from \cite{Azimi2011p} are proposed. In CC-IP, CROSS message is broadcasted when a vehicle enters the intersection area, other vehicles can simultaneously pass through the intersection if they detect no potential collision with the vehicle already crossing the intersection, otherwise the vehicle stop before the intersection area. However, in MP-IP, the vehicle which has the lower priority uses CDAI and finds out the first common cell (trajectory intersecting cell, abbreviated as TIC) with the crossing higher-priority vehicle. Then, instead of not entering the intersection as in CC-IP, the vehicle stops at the cell just before entering the TIC.

Reference \cite{Azimi2013p} illustrates more advanced protocols about intersection using vehicular networks based on the work of \cite{Azimi2011p,Azimi2012p}. After detailed describing of MP-IP and AMP-IP (Advanced Maximum Progression Intersection Protocol), the paper proves that these protocols avoid deadlock situations inside the intersection area. The improvement of AMP-IP is that, the protocol allows lower priority vehicles to advance and cross the conflicting cell before the higher priority vehicle arrives.
A Safety Time Interval of $2$s is used to help lower-priority vehicles decide whether they can go through the conflicting cell safely. Simulation results show that the latest V2V intersection protocol AMP-IP yields over 85\% overall performance improvement over the common traffic light models.
In \cite{Azimip}, Azimi also investigates the use of their proposed V2V-intersection protocols for autonomous driving at roundabouts. The improvement in safety and throughput when the intersection protocols are used to traverse roundabouts is also quantified.

\subsection{Optimization Approaches}
Machine learning is an emerging aspect in traffic management and control, which also show its capability in different areas such as power system \cite{gu2014sta123tistical, jiang2014sync123hrophasor}, medical science \cite{jiang2013time}, renewable energy \cite{jiang2012fa123ult,jiang2014fau123lt,jiang2014sp123atial,jiang2016short,jiang2016sh123ort,
gu2016knowledge,ding2016automa123tic,yang2017sh123ort,jiang2017b123ig,jiang2017sh234ort,zhang2017consum234ption}. Similarly, considering the optimization, the convex optimization also illustrate its capability in different areas \cite{gu2017lowefad,gu2017cha32432nce,jiang2015sync123hrophasor,banavar2015ov123erview,jiang2015pmu}.

In recent years, many researches have been done to improve intersection control performance by using optimization approaches. Reference \cite{zhu2015} developed a novel linear programming formulation for autonomous intersection control in which the nonlinear constraints were relaxed by a set of linear inequalities. While the objective function of the optimization problem in \cite{zhu2015} involves the travel time, other studies \cite{dai2016, kamal2015} are trying to solve similar control problem using an objective function with multiple criteria like safe speeds and accelerations while avoiding collisions. 
Compared with centralized control approaches, infrastructure support is not needed in decentralized control. In the approach \cite{wu2015} proposed by Wu et al., the estimated arrival time is shared wirelessly among vehicles to obtain the best passing sequence. The problem of coordinating online a continuous flow of connected and autonomous vehicles crossing two adjacent intersections was formulated as a decentralized optimal control problem in \cite{zhang2016}. The solution gives the optimal acceleration/deceleration for each vehicle at any time to minimize fuel consumption.
Some researchers also studied the control mechanism when there is only part of the traffic are autonomous vehicles \cite{guler2014}.

\subsection{Emergency Vehicle Handling}
Many studies have been done to allow emergency vehicles to have a faster travel across intersections. Based on MAS (Multi-Agent System), \cite{oliveira2005making} introduced a state machine for the intersection controller to change traffic signal status according to lane occupation when an emergency vehicle is approaching. Some researchers have explored the priority evacuation of emergency vehicles under an autonomous and connected traffic environment. Wantanee Viriyasitavat and Ozan K. Tonguz proposed an intersection control system that only uses Vehicle-to-Vehicle (V2V) communication to give emergency vehicles priority of crossing \cite{viriyasitavat2012priority}. The paper proposed that at an intersection, a leader should be elected from all approaching vehicles to serve as the temporary traffic light infrastructure and stop at the intersection to coordinate the traffic. The green signal is always given to the lane of detected emergency vehicles and through coordination ``green-wave'' signals are displayed for the emergency vehicles to let them move at a faster speed. Kurt Dresner and Peter Stone proposed a simple way to deal with emergency vehicles under their intersection control framework AIM (Autonomous Intersection Management) \cite{dresner2006human}. Their algorithm only grants reservations to vehicles in the lanes that have approaching emergency vehicles which allow the emergency vehicle to continue on its way relatively unhindered. 
%
%
%

\chapter{DTOT-based Intersection Traffic Management} \label{chapter:dica}
As introduced in Chapter 1, DTOT is the abbreviation of Discrete-Time Occupancy Trajectory which is a sequence of a vehicle's actual occupancy within an intersection space. Based on the concept of DTOT, we propose a novel intersection control algorithm named DICA (DTOT-based Intersection traffic Coordination Algorithm) in this chapter.

\section{Assumptions} \label{sec:dtot}

\begin{figure} [!t]
	\centering
	\includegraphics[width=4.in]{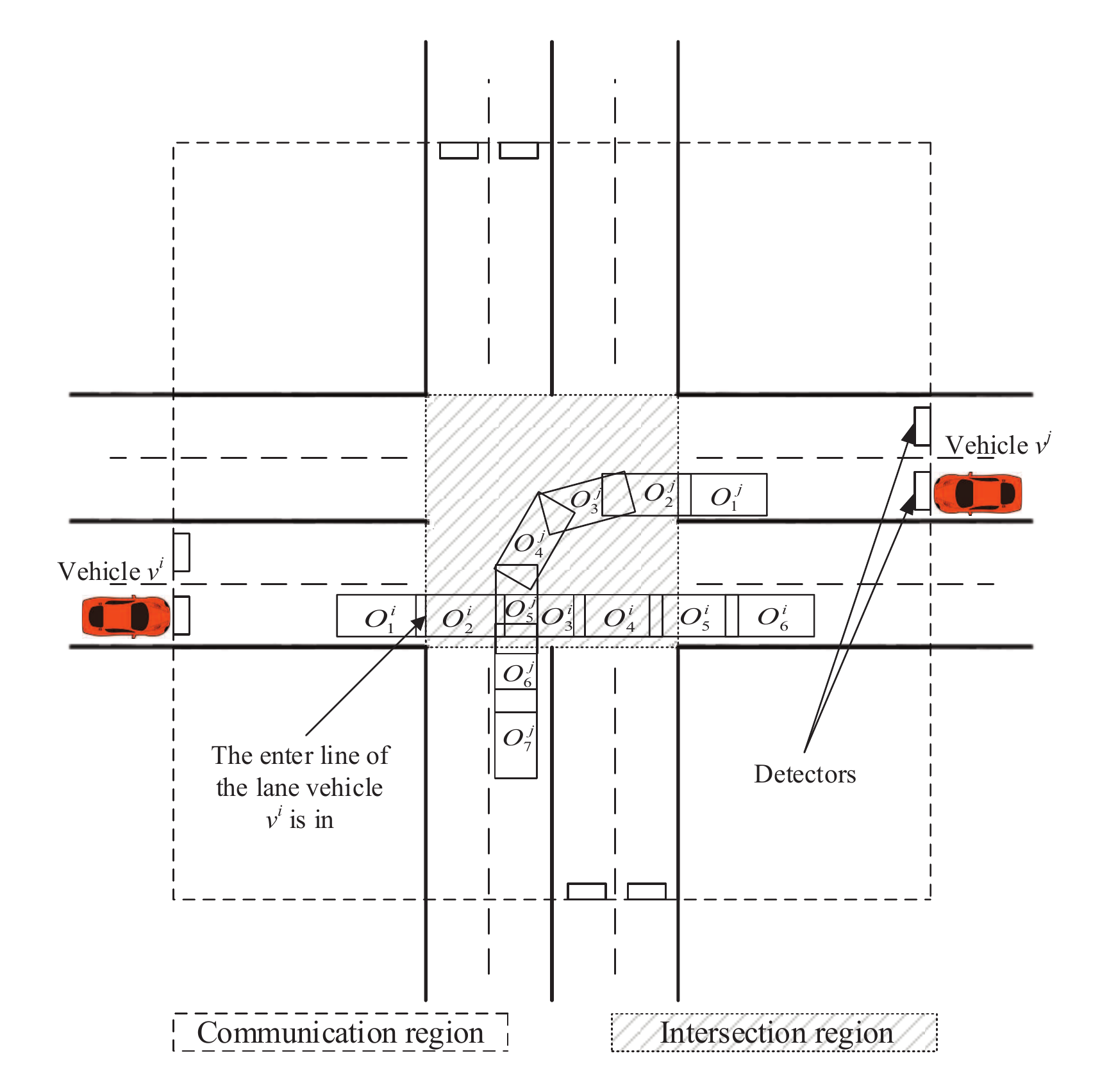}
	\caption[DTOTs of two conflicting vehicles]{DTOTs of two conflicting vehicles. ($O^p_q$ represents the $q$-th occupancy in a vehicle $v^p$'s DTOT. Note that occupancies in this figure are intentionally made very sparse for clear illustration purpose. DTOT starts with the occupancy in which the vehicle's front bumper first contacts the enter line of its lane of an intersection, and ends with the occupancy that the vehicle is completely out of the intersection region.)}
	\label{fig:Example}
\end{figure}

In this section, we introduce the basic idea and algorithm of DICA that is developed for autonomous and connected intersection crossing traffic in which all vehicles are Connected and Autonomous Vehicles (CAVs) and capable of wireless vehicular communication. 
We assume that an intersection has the wireless communication capability as well as a computation unit so that it can exchange information with vehicles and perform necessary computations to coordinate vehicles to cross the intersection safely. 
In DICA, there is no traffic light that controls the intersection crossing traffic. Instead, each vehicle communicates with the \emph{Intersection Control Agent} (ICA), to get permission to access the intersection. As shown in Figure \ref{fig:Example}, an intersection consists of two regions. The bigger region in the figure, which we call the \emph{communication region}, is defined by the wireless vehicular communication range. The smaller region in the figure, which we call the \emph{intersection region}, is the area within an intersection that is shared by all roads connected to the intersection. 
We also assume that each vehicle is equipped with an RFID (Radio Frequency IDentification) chip and there are detectors installed at the entrance of the communication region so that ICA can detect each vehicle's VIN (Vehicle Identification Number), the lane on which a vehicle is approaching an intersection, and the time when a vehicle enters the communication region. 
Since all vehicles are autonomous, we assume that each vehicle can obtain its position, speed, and the relative distance to an intersection precisely and also can avoid collisions with other vehicles autonomously when it is approaching an intersection. With regard to wireless vehicular connectivity, we only require information exchange between CAVs and ICA. Thus, there is no V2V communication.

\section{Interaction between ICA and a CAV}
A CAV is considered a \emph{head vehicle} in its lane if there are no vehicles in front of it or the vehicle which is immediately in front of it has begun to enter the intersection region. 
As shown in Figure \ref{fig:architecture}, the interaction between a CAV and ICA is initiated from the CAV, when it becomes a head vehicle, by sending a REQUEST message to reserve a sequence of spaces and times to cross the intersection. 
ICA knows whether a vehicle is a head vehicle or not according to the list of vehicles for each lane. Thus, a REQUEST message not from a head vehicle will be neglected by ICA. The list can be constructed in ICA since, as explained earlier, ICA knows each vehicle's VIN, the lane on which the vehicle is approaching, and the time when a vehicle passes a detector installed at the boundary of the communication region of an intersection. 
Each REQUEST message contains information that is necessary to reserve the space and time within the intersection region to cross the intersection such as (i) the VIN, (ii) the Vehicle Size (VS), and (iii) the Timed State Sequence (TSS).  The VS is simply the length and width of the vehicle and the TSS is the discrete time state trajectory of the vehicle starting from the entering moment of an intersection region to the moment when the vehicle crosses the intersection region completely.
Note that it is implicitly assumed that each discrete time state of a vehicle in TSS is also timed. This means that if a vehicle state $\mathbf{x}_t$ is given, then we can say that a vehicle possesses the state $\mathbf{x}$ at time $t$. For simplicity of our discussion, we assume that the state $\mathbf{x}$ of a vehicle consists of the $(x, y)$ coordinate of the vehicle's location and the orientation $\theta$. We also assume that, while it is possible that each vehicle can have different sampling period to generate its TSS, all vehicles use the same sampling period which is small enough to generate a close approximation of the vehicle's actual continuous motion within an intersection.

\begin{figure}[!htb]
	\centering
	\includegraphics[width=3.8in]{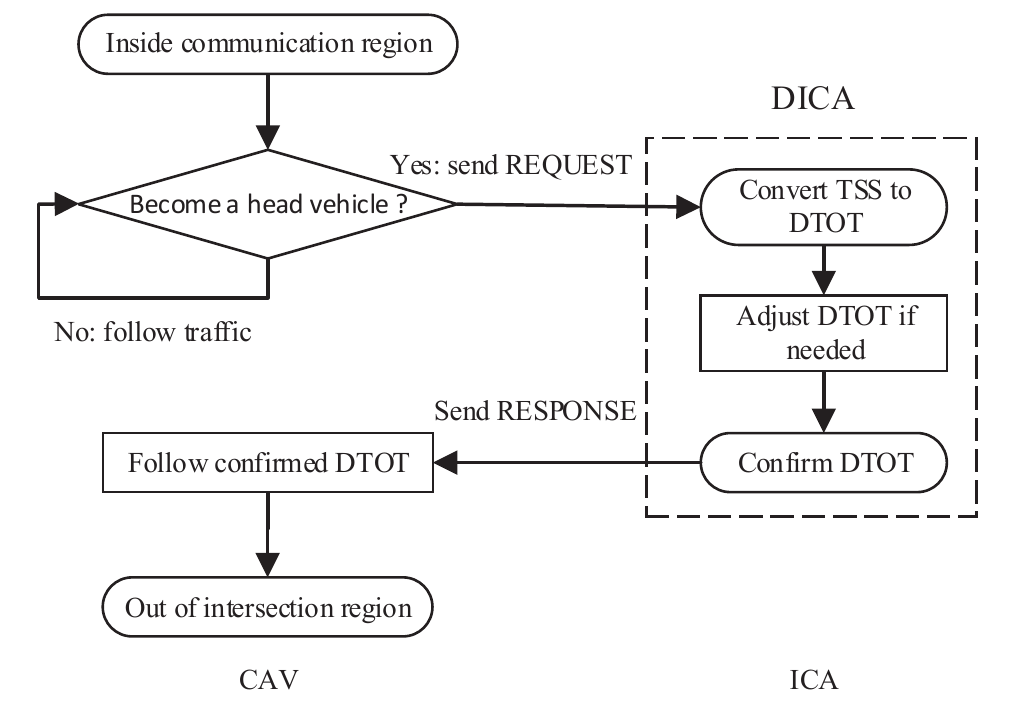}
	\caption{Interaction between a CAV and ICA.}
	\label{fig:architecture}
\end{figure}

ICA converts the TSS to the corresponding DTOT using the VS information which is also contained in the received REQUEST message. The DTOT is simply a sequence of timed rectangular spaces that a vehicle needs to occupy within an intersection region to cross the intersection.
Now, ICA uses all confirmed DTOTs to adjust the requested DTOT to avoid collisions if needed. ICA then converts the collision-free DTOT to TSS and sends it back to the vehicle using a RESPONSE message which contains (i) the VIN and (ii) TSS so that the vehicle can follow the confirmed DTOT to cross the intersection. More detailed explanation on how to process the requested TSS to generate a confirmed DTOT is presented in the following section. In the sequel, we say that a vehicle is a \emph{confirmed vehicle} if it has received a confirmed DTOT from ICA. And we assume that every vehicle is able to follow the confirmed DTOT precisely. In practice vehicles will have tracking errors to follow a given DTOT. To avoid potential collisions with other vehicles, we can increase the size of every occupancy in the DTOT by the upper bound of tracking errors. 
Since the focus of this chapter is to develop an algorithm for ICA for safer and higher throughput intersection crossing traffic, we simply assume that we have an ideal wireless vehicular communication performance such that all REQUEST and RESPONSE messages are exchanged correctly and timely. However, it is important to note that, despite such an ideal communication assumption, our DTOT-based algorithm can still be applicable in practice with small modifications of the algorithm to take into account the communication unreliability.
For instance, typically we may face two problems (i.e. package delay and lost) to handle the imperfect communications existing between CAVs and ICA in real situations. We could use the upper bound of the package delay to extend every occupancy in a DTOT which is safe for vehicles but a little bit conservative. For package lost problem, an ACK message can be added to confirm the delivery of REQUEST and RESPONSE messages whose details can be found in next Section. A CAV will send REQUEST again if it does not receive the ACK message from ICA. The same strategy could be applied to ICA and RESPONSE message.

\begin{algorithm}[!thb]   
	\caption{DICA (\textbf{D}TOT-based \textbf{I}ntersection traffic \textbf{C}oordination \textbf{A}lgorithm)}
	\begin{algorithmic}[1]
		\STATE Let $\mathcal{S}$ be the set of confirmed vehicles and $n = \vert \mathcal{S} \vert$.
		\STATE Let $v^i$ be the vehicle to be considered for confirmation.
		\STATE Convert $TSS(v^i)$ to $DTOT(v^i)$ \label{convert}
		\STATE Call checkFV$(\mathcal{S}, DTOT(v^i)) \rightarrow DTOT(v^i)$ \label{FV}
		\STATE Call getCV$(\mathcal{S}, DTOT(v^i)) \rightarrow \mathcal{C}$ \label{getCV}
		\WHILE {$\mathcal{C} \neq \emptyset$} \label{while}
		\STATE Pop the first vehicle in $\mathcal{C} \rightarrow v^j$
		\STATE Call updateDTOT$(DTOT(v^i), DTOT(v^j)) \rightarrow DTOT(v^i)$ \label{update}
		\STATE Call getCV$(\mathcal{S}, DTOT(v^i)) \rightarrow \mathcal{C}$
		\ENDWHILE \label{endwhile}
		
		\STATE Store $DTOT(v^i)$ for vehicle $v^i$ \label{store}
		\STATE Convert $DTOT(v^i)$ to $TSS(v^i)$
		\STATE Send $TSS(v^i)$ to vehicle $v^i$ \label{send}
		
	\end{algorithmic}
	\label{code:dica}
\end{algorithm}

\section{DTOT-based Intersection Traffic Coordination}
ICA processes a REQUEST message from a head vehicle according to the procedures shown in Algorithm \ref{code:dica} which we call the DTOT-based Intersection traffic Coordination Algorithm (DICA). As shown in the algorithm, we use TSS($v$) and DTOT($v$) to denote the TSS and DTOT for a vehicle $v$ respectively.  We also use $\mathcal{S}$ to denote the set of vehicles which have already been confirmed at the time when a REQUEST message is being processed. We say that two vehicles are \emph{space-time conflicting} if their trajectories are conflicting not only in space but also in time. More precisely, two vehicles are considered to be in space-time conflict in our algorithm when their DTOTs have at least one pair of occupancies that conflict in both space and time. We use another set $\mathcal{C}$ in Algorithm \ref{code:dica} to represent the subset of $\mathcal{S}$ which contains the set of vehicles whose confirmed DTOTs have space-time conflict with the DTOT of the vehicle that is currently being processed for confirmation. Vehicles in $\mathcal{C}$ are ordered in ascending order of a certain attribute of their confirmed DTOTs. To explain this attribute more clearly, let us consider a situation when DICA processes a vehicle $v^i$'s DTOT and there are two vehicles $v^j$ and $v^k$ in the set $\mathcal{C}$. Now let us suppose that DTOT($v^j$) starts to space-time conflict with DTOT($v^i$) from its $n$-th occupancy and DTOT($v^k$) starts to space-time conflict with DTOT($v^i$) from its $m$-th occupancy. If we use $O^p_{q}$ to denote the $q$-th occupancy within DTOT($v^p$) and $\tau(O^p_{q})$ be the time when the vehicle $v^p$ occupies $O^p_q$, then we say that, in this particular situation, $\tau(O^j_n)$ is the first time at which $v^j$ starts to collide with $v^i$. Similarly, $\tau(O^k_m)$ is the time at which $v^k$ starts to collide with $v^i$. In the sequel, this specific time instant for each vehicle in $\mathcal{C}$ is represented by the variable `\emph{firstTimeAtCollision}'.  
In this particular situation, $\tau(O^j_n)$ and $\tau(O^k_m)$ are denoted by $v^j.firstTimeAtCollision$ and $v^k.firstTimeAtCollision$, respectively. Vehicles in the set $\mathcal{C}$ are ordered according to this variable. Specifically, if $v^j.firstTimeAtCollision$ is earlier than $v^k.firstTimeAtCollision$, then $v^j$ gets higher priority than $v^k$ and vice versa. 
To see more clearly how the `\emph{firstTimeAtCollision}' is determined, we can consider an illustrative example shown in Figure \ref{fig:Example}. In the figure, DTOT($v^i$) and DTOT($v^j$) have space conflicts in \{$O^i_2, O^i_3$\} and \{$O^j_5, O^j_6,$\}. If we assume that these occupancies are also conflicting in time, then $v^j.firstTimeAtCollision$ with respect to the vehicle $v^i$ is $\tau(O^j_5)$. 

As shown in Algorithm \ref{code:dica}, when ICA receives a REQUEST message from a head vehicle $v^i$, it first converts the TSS($v^i$) into the corresponding DTOT($v^i$) using the vehicle's VS. Then ICA calls the function \verb+checkFV()+ to determine if there exist \emph{front vehicles} (See Section \ref{subsec:front} for more details about front vehicles.) that affect the vehicle $v^i$'s motion and also to adjust $v^i$'s DTOT if needed. Then the function \verb+getCV()+ is called to determine the set $\mathcal{C}$ which is the set of vehicles whose DTOTs are space-time conflicting with DTOT($v^i$). The \verb+updateDTOT()+ function adjusts DTOT($v^i$) appropriately so that DTOT($v^i$) avoids space-time conflict with other vehicle's DTOT. These two functions are iteratively called within the \verb+while+ loop until the set $\mathcal{C}$ becomes empty, which indicates that no vehicles in the set $\mathcal{C}$ will collide with the vehicle $v^i$. After DTOT($v^i$) is appropriately adjusted and confirmed that there is no space-time conflict with all other confirmed vehicles, then the confirmed DTOT($v^i$) is converted into TSS($v^i$). Finally, ICA sends the confirmed TSS($v^i$) back to the vehicle $v^i$ so that the vehicle can cross the intersection safely by following the confirmed DTOT.
In the following sections, we provide more detailed explanation on the subfunctions called within DICA.

\subsection{Collision Avoidance with Front Vehicles} \label{subsec:front}

\begin{figure}[!tb]
	\centering
	\includegraphics[width=2.8in]{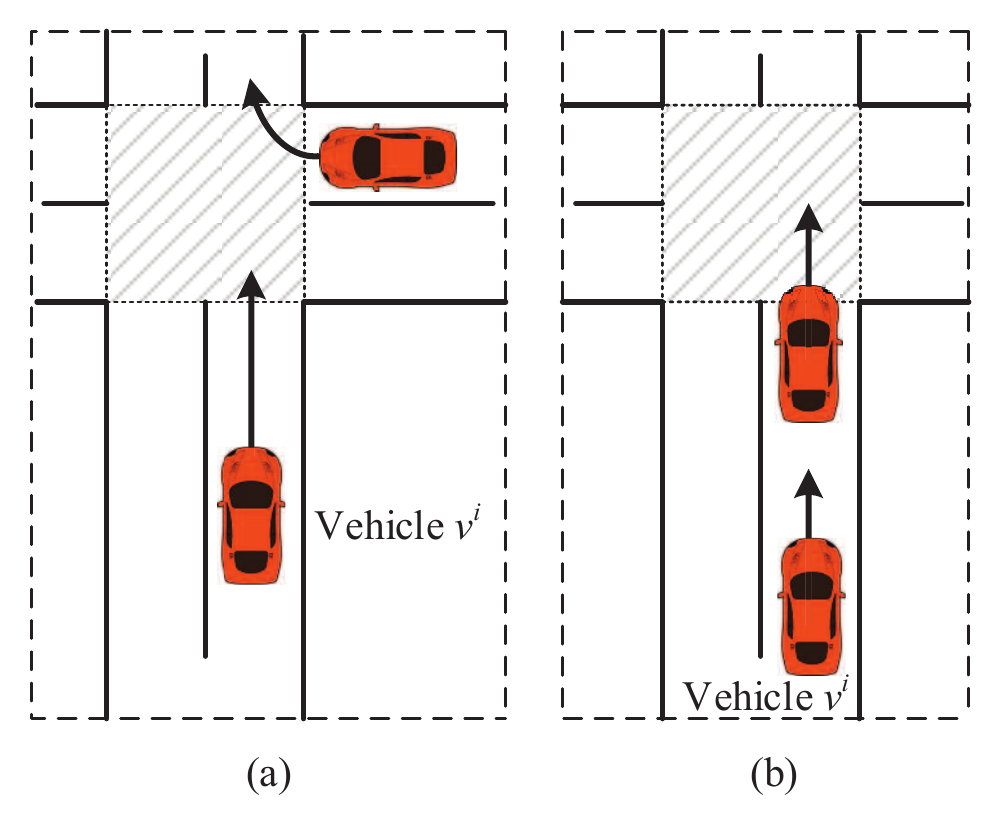}
	\caption[Example situations of front vehicles]{Example situations of front vehicles: (a) vehicles with different routes but same exit lane, and (b) vehicles with same intersection crossing routes.}
	\label{fig:Front}
\end{figure}

As shown in Figure \ref{fig:Front}, there are two types of front vehicles when a vehicle $v^i$ is approaching and crossing an intersection. 
In DICA, a vehicle is considered as a front vehicle of $v^i$ if the vehicle comes from another lane but has the same exit lane as vehicle $v^i$ or the vehicle is immediately in front of $v^i$ and has the exact same intersection crossing route as that of $v^i$. For a vehicle $v^i$, if there is another confirmed vehicle whose exit lane is the same as that of vehicle $v^i$ and will exit the intersection earlier, then they may collide immediately after crossing the intersection if the speed of vehicle $v^i$ is higher than that of the other confirmed vehicle. To address this problem, AIM \cite{dresner2008multiagent} adopted a simple strategy which gives one second separation time between these two vehicles. However, it is important to note that the separation time should depend on the speeds of the two vehicles. Hence, instead of using a fixed separation time approach, we use an approach that restricts the maximum speed of a following vehicle by the speed of the front vehicle. In the example situation (a) shown in Figure \ref{fig:Front}, the vehicle $v^i$'s maximum allowed speed within an intersection is restricted by the front vehicle's exit speed. If there is another confirmed vehicle that has the same intersection crossing route as vehicle $v^i$, we adjust $v^i$'s speed to leave adequate distance between them. In Algorithm \ref{code:dica}, the function \verb+checkFV()+ looks for the existence of above mentioned front vehicles from all confirmed vehicles and delay the new head vehicle to avoid potential collisions if needed.


\subsection{Vehicles for Collision Avoidance} \label{subsec:oti}
\begin{algorithm}[!htb]    
	\caption{getCV$(\mathcal{S}, DTOT(v^i))$}
	\begin{algorithmic}[1]
		
		\STATE $\mathcal{C} = \emptyset$
		\FOR {$v^j$ in $\mathcal{S}$}
		\FOR {$O^j_{k_j}$ in $DTOT(v^j)$} \label{Ov}
		\IF {$v^j$ not in $\mathcal{C}$} \label{CVi}
		\FOR {$O^i_{k_i}$ in $DTOT(v^i)$}
		\IF {$O^j_{k_j} \cap O^i_{k_i} \neq \emptyset$} \label{if}
		\STATE Call getOTI($O^j_{k_j})$ $\rightarrow I(O^j_{k_j}) := [\tau_{lb}({O}^j_{k_j}), \tau_{ub}({O}^j_{k_j})]$
		\STATE Call getOTI($O^i_{k_i}$) $\rightarrow I(O^i_{k_i}) := [\tau_{lb}({O}^i_{k_i}), \tau_{ub}({O}^i_{k_i})]$
		\IF {$I({O}^j_{k_j}) \cap I({O}^i_{k_i}) \neq \emptyset$}
		\STATE Assign $\tau_{lb}({O}^j_{k_j}) \rightarrow v^j.firstTimeAtCollision$
		\STATE Push $v^j$ into $\mathcal{C}$
		\STATE Sort $\mathcal{C}$ in ascending order of \emph{firstTimeAtCollision}
		\ENDIF 
		\ENDIF \label{endif}
		\ENDFOR
		\ENDIF
		\ENDFOR
		\ENDFOR
		
	\end{algorithmic}
	\label{code:GetCV}
\end{algorithm}

The function \verb+getCV()+ returns the set $\mathcal{C}$ that contains vehicles which will cause potential collisions inside the intersection with vehicle $v^i$. To better understand the operation of function \verb+getCV()+, it is necessary to introduce the way we check the space-time conflict between two occupancies from DTOTs of two vehicles. 
For every individual occupancy in a DTOT of a vehicle, we define the entrance time ($\tau_{lb}$) and the exit time ($\tau_{ub}$) of the occupancy as the times when the vehicle first contacts and is totally out of the occupancy. These two times can be estimated by taking the times of the previous and next occupancies which are the closest to the occupancy while having no overlapping area. As an example, for the occupancy $O^j_4$ of the vehicle $v^j$ in Figure \ref{fig:Example}, the entrance time $\tau_{lb}(O^j_4)$ and the exit time $\tau_{ub}(O^j_4)$ of that occupancy can be determined by $\tau(O^j_2)$ and $\tau(O^j_6)$, respectively. Note that a DTOT for a vehicle consists of many more numbers of occupancies in practice. Hence, the entrance times and exit times determined in this way can be very close to the actual entrance and exit times of the occupancy. For the first several occupancies in a DTOT, there may not be a previous occupancy that has no overlapping area with themselves. For these occupancies, we simply take the first occupancy's time in the DTOT as these occupancies' entrance time. As an example shown in Figure \ref{fig:Example}, we use $\tau(O^j_1)$ as the entrance time $\tau_{lb}(O^j_2)$ for the occupancy $O^j_2$. Similarly, we take the last occupancy's time as the exit time $\tau_{ub}$ for the last several occupancies in a DTOT.

As shown in Algorithm \ref{code:GetCV}, the function \verb+getCV()+ determines the set $\mathcal{C}$ by checking space-time conflict for every pair of occupancies $(O^i_n, O^j_m)$ for all $n, m$, and $j$ in the set $\mathcal{S}$. Since an occupancy in a DTOT is represented as a rectangle, it is relatively straightforward to do a space conflict checking. For this, Algorithm \ref{code:GetCV} simply checks if two rectangles have a non-empty intersection or not. 
If a pair of occupancies $(O^i_n, O^j_m)$ are space-conflicting, then the function continues to investigate these occupancies to determine if they are in time-conflict as well. The above-explained entrance and exit times of an occupancy are used for this purpose. For a given occupancy $O$, the function \verb+getOTI()+ calculates these entrance $\tau_{lb}(O)$ and exit $\tau_{ub}(O)$ times for that occupancy and returns a corresponding time interval $I(O) := [\tau_{lb}(O), \tau_{ub}(O)]$ which we call the \emph{occupancy time interval} in the sequel. 
Then the two occupancy time intervals for the pair of space-conflicting occupancies are compared to determine if these occupancies are also occupied around the same time. 
If a pair of occupancies $(O^i_n, O^j_m)$ are conflicting in both space and time, then the vehicle $v^j$ is included in the set $\mathcal{C}$ and the corresponding \emph{firstTimeAtCollision} is determined so that the vehicle $v^j$ is appropriately ordered within the set $\mathcal{C}$.

\subsection{DTOT Update} \label{subsec:update}
The first vehicle $v$ in the set $\mathcal{C}$ is the earliest vehicle that is space-time conflicting with vehicle $v^i$. Then, in line \ref{update} of Algorithm \ref{code:dica}, the function \verb+updateDTOT()+ modifies vehicle $v^i$'s DTOT to avoid collision with vehicle $v$ based on space-time conflicting occupancies between vehicles $v^i$ and $v$. However, it is still uncertain whether $\mathcal{C}$ will be empty or not after this update of avoiding collision with vehicle $v$. 
In fact, it is still possible that the modified DTOT of vehicle $v^i$ will be in space-time conflict with DTOTs of other confirmed vehicles.
Hence, to ensure that vehicle $v^i$ avoids collision with all other confirmed vehicles, it is necessary to construct $\mathcal{C}$ based on the updated vehicle $v^i$'s DTOT and update the DTOT again to avoid collision with the first vehicle in the set.
This process is repeated in the \verb+while+ loop in Algorithm \ref{code:dica} until the set $\mathcal{C}$ becomes empty which means that vehicle $v^i$ is not conflicting with any confirmed vehicles.
When a vehicle proposes its DTOT to ICA, we assume that it prefers to select the fastest way to pass the intersection which means the vehicle will try to use the maximum allowed speed to cross. Our current strategy for updating a vehicle's DTOT is to delay the vehicle until other confirmed vehicles cross an intersection safely. While it is an interesting future research problem to develop more sophisticated approaches to improve the overall performance, the current simple delay strategy is still very effective to ensure collision-free intersection traffic. Note that, since the times of occupancies in a vehicle's DTOT are always delayed whenever the vehicle's DTOT is updated, it is guaranteed that the vehicle can always meet the updated DTOT by simply decelerating to experience a long time before entering the intersection. The worst case is that a vehicle may need to stop and wait for some time before an intersection to meet the given confirmed TSS from ICA.

\section{Liveness Analysis} \label{sec:anal}
A \emph{deadlock} is a situation where two or more processes are unable to proceed and each process is waiting for another one to finish because they are competing for shared resources. In an intersection crossing traffic, a deadlock could happen when several vehicles are trying to cross the intersection at the same time. For example, if the coordination between vehicles who want to cross an intersection is not done appropriately, then a deadlock may occur between two vehicles on a same lane. As discussed in \cite{dresner2008multiagent}, it is possible that even when the vehicle in front cannot get confirmed due to the conflict of its intersection crossing route with those of other vehicles which are already confirmed to enter and cross an intersection, the vehicle in the back may get confirmed because its intersection crossing route is not conflicting with other confirmed vehicles' crossing routes. And the vehicle successfully reserves the space for its intersection crossing route within an intersection. In this situation, the front vehicle cannot get confirmed since some part of the intersection crossing route of it conflicts with that of the behind vehicle which is already confirmed and also the behind vehicle cannot proceed to cross the intersection due to the unconfirmed front vehicle.
A deadlock situation may also occur when several vehicles from different directions want to cross an intersection at the same time. This type of deadlock situation is discussed in detail in \cite{azimi2013reliable} for the case of four vehicles in which none of the vehicles can progress inside the intersection because each of the vehicles' next occupancies are already occupied by other vehicles. Now we show that DICA shown in Algorithm \ref{code:dica} are free from these deadlock situations.

\begin{prop} \label{prop:deadlock}
DICA is deadlock free.
\end{prop}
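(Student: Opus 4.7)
My plan is to enumerate the two qualitatively distinct deadlock scenarios identified in the surrounding discussion---(i) a same-lane deadlock in which a head vehicle is blocked while a vehicle behind it has already been confirmed, and (ii) a multi-direction intersection deadlock of the kind analyzed in \cite{azimi2013reliable} in which several vehicles from different lanes mutually block each other's next occupancies---and to show that the structure of Algorithm \ref{code:dica} rules out each one. A brief closing step will then argue that no other deadlock pattern can arise given the head-vehicle restriction together with the fact that DICA confirms complete trajectories rather than individual cells.

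For scenario (i), I will invoke the definition of a head vehicle. A vehicle $v$ qualifies as a head vehicle only when either no vehicle precedes it on its lane, or the vehicle immediately in front of $v$ has already entered the intersection region. Since ICA ignores REQUEST messages that do not originate from a head vehicle, a vehicle behind $v$ on the same lane can never obtain a confirmed DTOT before $v$ does. This structurally eliminates the AIM-style deadlock that \cite{dresner2008multiagent} warns about: the ``back'' vehicle simply cannot reserve intersection space before the ``front'' vehicle, so the scenario in which a back vehicle's confirmation blocks a front vehicle's confirmation is impossible.

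For scenario (ii), I will exploit the fact that a confirmed DTOT is a complete timed trajectory through the intersection region, not a sequence of cell reservations requested on the fly. By inspection of Algorithm \ref{code:dica}, whenever ICA reaches line \ref{store} the set $\mathcal{C}$ is empty, so the stored $DTOT(v^i)$ has already been checked against every DTOT in $\mathcal{S}$ and no pair of occupancies is simultaneously in space and time conflict. Consequently, once inside the intersection each confirmed vehicle may continuously follow its DTOT to completion without ever waiting on an occupancy currently held by another vehicle, and the mutual-blocking pattern of \cite{azimi2013reliable} cannot occur.

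The step I expect to be most delicate is arguing rigorously that the head-vehicle rule rules out \emph{all} same-lane deadlocks rather than only the specific example sketched above. Concretely, I will show that the REQUEST events on any fixed lane respect the FIFO order in which vehicles enter the communication region, so that when a vehicle becomes a head vehicle its predecessor on the lane is either already confirmed or already past the intersection region. Combining this with the complete-DTOT argument of the previous paragraph excludes any cyclic wait among the confirmed vehicles, and Proposition \ref{prop:deadlock} follows.
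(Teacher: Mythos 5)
Your proposal is correct in substance but organized quite differently from the paper's proof. The paper proceeds by induction on the time step $k$, maintaining the invariant that the set $\mathcal{S}_k$ of confirmed vehicles is deadlock free: the base case is the empty set, and the inductive step uses exactly the three facts you also rely on --- only head vehicles are processed (so no same-lane deadlock with a trailing vehicle), a confirmed DTOT is never modified afterward, and the new vehicle's DTOT is delayed only with respect to already-confirmed vehicles, all of which eventually clear the intersection. Your proof instead enumerates the two concrete deadlock patterns flagged in the surrounding discussion (the AIM-style same-lane reversal and the Azimi-style mutual blocking of next occupancies) and kills each structurally, then closes with a FIFO-on-each-lane plus complete-trajectory argument to exclude any residual cyclic wait. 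What the paper's induction buys is that it never has to argue the case analysis is exhaustive: the invariant ``every confirmed vehicle eventually crosses'' covers all deadlock shapes at once. What your route buys is a more concrete explanation of \emph{why} the two known failure modes cannot occur, which is arguably more illuminating; but the burden it takes on --- showing that no deadlock pattern other than (i) and (ii) exists --- is precisely the step you flag as delicate, and your closing argument (confirmed vehicles never wait on anything, so any wait chain terminates at a vehicle that is merely unconfirmed, which is a liveness/starvation issue rather than a deadlock) is the right way to discharge it. Both your proof and the paper's leave implicit the termination of the \texttt{while} loop in Algorithm \ref{code:dica}, i.e., that repeated delaying eventually empties $\mathcal{C}$, so neither is more rigorous than the other on that point.
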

\begin{proof}
Let $\mathcal{S}_k$ denote the set of confirmed vehicles at the $k$-th time step of DICA. Then, we show that the set $\mathcal{S}_k$ is deadlock free for all $k = 0, 1, 2, \cdots$ by induction. First, at time step $k = 0$, it is easy to see that there is no deadlock in $\mathcal{S}_0$ since no vehicle is confirmed yet, i.e., $\vert \mathcal{S}_0 \vert = 0$ where $\vert \cdot \vert$ denotes the cardinality of a set. Then, at time step $k > 0$, let us suppose that $\mathcal{S}_k$ is deadlock free and a new head vehicle $v^i$ is under consideration for confirmation. Note that, as discussed in Section \ref{sec:dtot}, a vehicle is considered by DICA for confirmation only if it is the head vehicle on its lane. Hence, it is trivial to see that there won't be a deadlock situation between the vehicle $v^i$ and other vehicle $v^{i'}$ which is behind $v^i$ since $v^{i'} \not\in \mathcal{S}_k$.
Next, let us note that once a vehicle $v^j$ is in $\mathcal{S}_k$, then the vehicle's DTOT will not be changed while and after a new vehicle $v^i$ is processed to be confirmed by DICA. 
Hence, it is easy to see that any vehicle which is in $\mathcal{S}_k$ at time step $k$ remains deadlock free at the next time step $(k+1)$. Now suppose that the new vehicle $v^i$ has been confirmed by DICA at time step $k$ and included in the set of confirmed vehicle at time step $(k+1)$, i.e., $v^i \in \mathcal{S}_{k+1}= \mathcal{S}_k \cup \{ v^i \}$. Since all vehicles in $\mathcal{S}_k \subset \mathcal{S}_{k+1}$ are deadlock free, if the new vehicle $v^i$ is deadlock free, then we know that $\mathcal{S}_{k+1}$ is deadlock free and this proves the deadlock free property of DICA. In fact, it is straightforward to see that $v^i$ is also deadlock free after its DTOT is updated and confirmed by DICA. First, note that modification of the vehicle $v^i$'s DTOT is not affected by any vehicle $v \not\in \mathcal{S}_k$. Instead, it is affected only by vehicles which are already in the $\mathcal{S}_k$. Since all vehicles in $\mathcal{S}_k$ are deadlock free and eventually proceed to cross and exit the intersection, the vehicle $v^i$'s DTOT is also updated so that the vehicle $v^i$ will eventually enter and cross the intersection while all vehicles in $\mathcal{S}_k$ cross the intersection safely. Thus, the vehicle $v^i$ is also deadlock free at time step $(k+1)$ and this concludes the proof of this proposition.
\end{proof}

In an intersection crossing traffic, a \emph{starvation} situation may occur when vehicles from a certain direction are waiting for a very long time or even indefinitely to be allowed to enter and cross an intersection while vehicles from other directions are continuously allowed to cross the intersection. Now we show that a starvation situation will not occur in an intersection crossing traffic that is coordinated by DICA.

\begin{prop} \label{prop:starvation}
DICA is starvation free.
\end{prop}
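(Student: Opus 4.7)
The plan is to establish starvation freedom by first reducing the system-level claim to a per-vehicle statement, and then proving the per-vehicle statement by an inductive argument that parallels the proof of Proposition \ref{prop:deadlock}. Observe that only head vehicles are processed by DICA, and a vehicle behind a head vehicle becomes the new head as soon as its predecessor enters the intersection region. Hence, if every head vehicle is confirmed and then crosses the intersection in finite time, then by induction on position in each lane's FIFO queue every vehicle in the communication region is eventually confirmed and crosses, which is precisely starvation freedom.

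For the per-vehicle step I will fix a time step $k$, let $\mathcal{S}_k$ denote the set of already-confirmed vehicles (each of which, by the inductive hypothesis, has a finite-length confirmed DTOT with a finite exit time $\tau^{\text{exit}}(v^j)$), and let $v^i$ be a new head vehicle under consideration. I then need three ingredients for the \texttt{while} loop in Algorithm \ref{code:dica}: first, the conflict set $\mathcal{C}$ returned by \texttt{getCV()} is a subset of $\mathcal{S}_k$ and therefore finite; second, each invocation of \texttt{updateDTOT()} strictly increases the entrance time of $DTOT(v^i)$ because it shifts every occupancy of $v^i$ forward in time past the first space-time conflict with the selected $v^j$; and third, once the entrance time of $DTOT(v^i)$ exceeds $\max_{v^j \in \mathcal{S}_k} \tau^{\text{exit}}(v^j)$, no occupancy time intervals of $v^i$ can overlap with those of any $v^j \in \mathcal{S}_k$, so \texttt{getCV()} returns $\emptyset$ and the loop exits with a confirmed, finite-length $DTOT(v^i)$.

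The main obstacle is that a delay resolving the conflict with one $v^j$ could in principle create new conflicts with vehicles that were not in $\mathcal{C}$ at the previous iteration, so $\mathcal{C}$ need not shrink monotonically. The cleanest remedy I see is a potential-function argument: consider the set $\mathcal{E}$ of vehicles in $\mathcal{S}_k$ whose exit time already lies below the current entrance time of $DTOT(v^i)$. Since the entrance time is non-decreasing across iterations, $\mathcal{E}$ is non-decreasing; and once $v^j \in \mathcal{E}$, its occupancy time intervals are disjoint from those of $v^i$, so $v^j$ can never reappear in $\mathcal{C}$. Every iteration pushes at least one new vehicle (the popped $v^j$ whose conflict is just being resolved) into $\mathcal{E}$, so the loop terminates in at most $|\mathcal{S}_k|$ iterations. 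Combined with the assumption that each CAV can precisely follow any confirmed DTOT, if necessary by decelerating or waiting outside the intersection region, $v^i$ is confirmed and then crosses in finite time, which together with the FIFO reduction of the first paragraph completes the proof of starvation freedom of DICA.
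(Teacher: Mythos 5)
There is a genuine gap here: your argument never addresses the question that starvation freedom actually turns on, namely the order in which ICA selects \emph{which} head vehicle to process next when several lanes have pending requests. Starvation, as the paper defines it, is the scenario where vehicles from one direction wait indefinitely while vehicles from other directions are continuously confirmed. Your first paragraph reduces the claim to ``every head vehicle is confirmed and crosses in finite time,'' and your remaining two paragraphs prove only that \emph{once a head vehicle is selected for processing}, the \texttt{while} loop in Algorithm \ref{code:dica} terminates. That is essentially a termination/deadlock-style argument (and indeed it parallels the iteration bound used later in the complexity analysis), but it leaves open the possibility that ICA simply never gets around to a particular head vehicle's REQUEST because newly arriving head vehicles on other lanes keep being served first. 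The paper's proof is built precisely around closing this hole: it introduces the set $\mathcal{H}$ of head vehicles ordered by their entrance time $\sigma(v)$ into the \emph{communication region}, observes that ICA serves requests in that order, bounds $\vert \mathcal{H} \vert$ and $\vert \mathcal{H}^- \vert$ by the number and length of lanes, and concludes that at most $\vert \mathcal{H} \vert + \vert \mathcal{H}^- \vert$ confirmations can precede any given vehicle's turn. Without some such fairness-of-scheduling argument, your per-vehicle termination result does not imply starvation freedom.

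A secondary, more technical issue is your potential-function step. You claim each call to \texttt{updateDTOT()} pushes the popped vehicle $v^j$ into the set $\mathcal{E}$ of vehicles whose exit time lies below the current entrance time of $DTOT(v^i)$. But the update strategy described in Section \ref{subsec:update} delays $v^i$ only enough to clear the specific space-time-conflicting occupancies with $v^j$, not necessarily until after $v^j$ has completely exited the intersection, so $v^j$ need not land in $\mathcal{E}$ after its conflict is resolved. Your fallback observation --- that delaying $v^i$ past the maximum exit time of all vehicles in $\mathcal{S}_k$ must eventually empty $\mathcal{C}$ --- does salvage termination of the loop, but only if each update produces a strictly positive, uniformly bounded-below delay, which you would need to argue. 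In any case, fixing this would still leave the main scheduling-fairness gap above.
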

\begin{proof}
First, let us recall that, as discussed in Section \ref{sec:dtot}, DICA considers a vehicle for confirmation only when the vehicle becomes the head vehicle on its lane. Now let $\sigma(v)$ be the vehicle $v$'s entrance time to the communication region of an intersection, $\mathcal{H}$ be the set of head vehicles which is ordered by $\sigma(v)$ for all $v \in \mathcal{H}$, and $\mathcal{H}^-$ be the set of vehicles which are approaching to cross an intersection but not included in the set $\mathcal{H}$. Clearly, $\vert \mathcal{H} \vert$ is bounded by the number of all lanes from which vehicles are approaching an intersection to cross and $\vert \mathcal{H}^- \vert$ is also bounded by both the number of lanes and the length of lanes within the communication region of an intersection.
Note that DICA processes vehicles in $\mathcal{H}$ for confirmation according to the order of vehicles in $\mathcal{H}$. Once the first vehicle in $\mathcal{H}$ is processed and gets confirmed, then the vehicle is removed from $\mathcal{H}$. Note that if DICA is not starvation free, then there must exist at least one vehicle $v \in \mathcal{H}$ such that the vehicle $v$ will never (or at least take an unnecessarily very long time to) become the first element in the ordered set $\mathcal{H}$. Thus, to prove the starvation free property of DICA, it suffices to show that, for any vehicle $v \in \mathcal{H}$, the vehicle $v$ will be removed from $\mathcal{H}$ in finite time.
To show this, we can consider the last vehicle $v_{last}$ in the ordered set $\mathcal{H}$. If $\sigma(v_{last}) \le \sigma(v)$ for all $v \in \mathcal{H}^-$, then the vehicle $v_{last}$ will be cleared right after all other vehicles in $\mathcal{H}$ are confirmed and this is the earliest time for $v_{last}$ to be removed from $\mathcal{H}$. On the other hand, if $\sigma(v_{last}) > \sigma(v)$ for all $v \in \mathcal{H}^-$ as the worst situation for $v_{last}$, then the vehicle $v_{last}$ might need to wait until all ($\vert \mathcal{H} \vert + \vert \mathcal{H}^- \vert$) vehicles get confirmed to be considered for confirmation. Thus, it is clear that the vehicle $v_{last}$ will be cleared from $\mathcal{H}$ in finite time. 
\end{proof}

\section{Simulation}
In this section, we present simulation results of DICA and the Concurrent Algorithm \cite{kim2014mpc} under same configurations. Compared with Concurrent Algorithm, DICA provides a more efficient way of coordinating vehicles to avoid unnecessary delay. 

\subsection{Simulation Setup}
Traffic simulation is performed by the microscopic road traffic simulation package SUMO (Simulation of Urban MObility) \cite{krajzewicz2012recent}. This simulator is widely used in the research community, which makes it easy to compare performance of different algorithms. Our intelligent intersection management algorithm is implemented by the Traffic Control Interface (TraCI) in SUMO.

The simulated scenario in our simulation is the traffic of a typical isolated four way intersection with two incoming lanes and one outgoing lane on each road. $v_m=70\ km/h$ is set as the maximum allowed speed for incoming roads. We generate vehicles with random velocity within the range of $40\%*v_m$ and $v_m$ when they enter the communication region. To simulate intersection traffic as real as possible, we use different maximum allowed speeds for vehicles with different routes. Although there are not specific speed limits for vehicles who are turning left or right, people are still using some lower speed to feel comfortable and maintain safety. We choose conservative speed limits for turning based on experience from driving in daily life. We use $25\ km/h$ for right turning and $35\ km/h$ for left turning. For vehicles with through route, $65\ km/h$ is set as the speed limit.
The time step we used in simulation is $0.05\ s$. The maximum acceleration and deceleration rates are $2\ m/s^2$ and $4.5\ m/s^2$. Vehicles have a size of $5$ meters length and $1.8$ meters width. Since, in some cases, a vehicle may need to stop just before the entrance line of the intersection region to avoid collisions with other vehicles, the distance from the entrance line of the communication region to the entrance line of the intersection region should be long enough so that a vehicle can stop from its maximum speed $v_m$. Thus, from the value used for $v_m = 70\ km/h$ the maximum deceleration rate $a_{min} = 4.5\ m/s^2$, we need at least $v_m^2/(2 a_{min}) \approx 42.03\ m$. So, we use $50\ m$ for the distance from the entrance line of the communication region to the entrance line of the intersection region.
We evaluate the performance of our algorithm in situations where vehicles are spawned randomly from each direction at different probabilities. In our simulation, we consider three different traffic volumes. For each traffic volume, through different traffic generation time and randomly generated vehicles' routes, we test each case for twelve different traffic patterns. Average data of twelve different traffic patterns are used as the result for that case. Each simulation run is terminated when a certain time limit ($10$ min) has been reached.
Fig \ref{fig:screenshot} shows a screenshot of simulation in SUMO when vehicles of different routes appears within the intersection simultaneously without occurrence of collision. Inside the intersection, the straight going vehicle from East goes inside the intersection shortly after the vehicle from North to South clears the conflicting space. Vehicles whose DTOTs is not conflicting with these two can pass the intersection at the same time, for example the right-turning vehicle from South in the figure.

\begin{figure}[!t]
    \centering
    \includegraphics[width=3.3in]{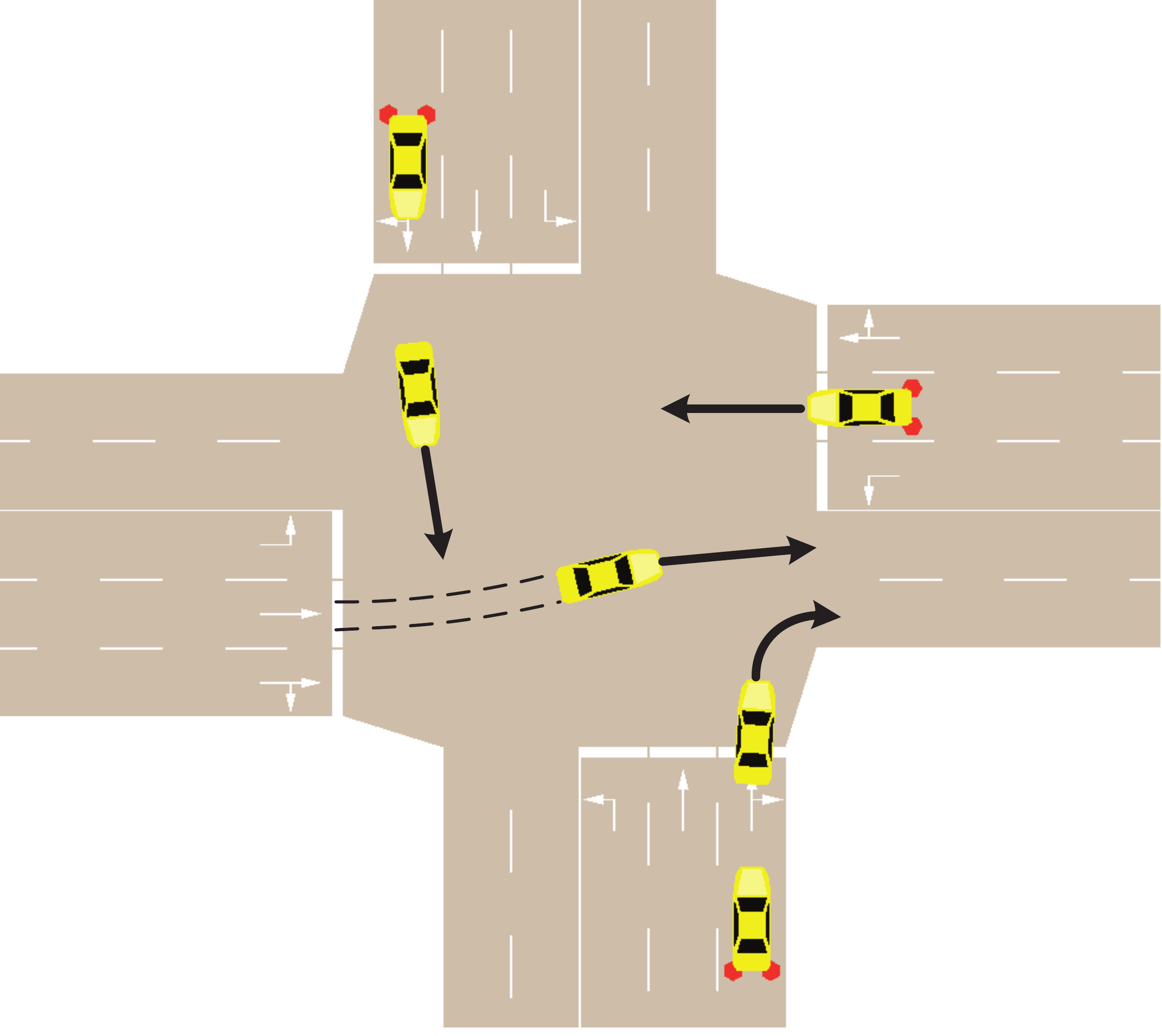}
    \caption{A screenshot of simulation which illustrates a situation when vehicles with conflicting routes cross the intersection simultaneously.}
    \label{fig:screenshot}
\end{figure}

\begin{table}[htbp] 
    \centering
    \caption{Simulation results comparison.}
    \begin{tabular}{ccccccc}
        \toprule
        \multicolumn{1}{c}{\multirow{2}[2]{*}{}} & \multicolumn{1}{c}{\multirow{2}[2]{*}{Volume}}  & \multirow{2}[2]{*}{$\rho$} & \multicolumn{3}{c}{Crossed Vehicles} & \multirow{2}[2]{*}{$\bar{\tau}_e$ } \\

        \cmidrule{4-6}\multicolumn{1}{c}{} &\multicolumn{1}{c}{} & & $\bar{\tau}$ & $\sigma_\tau$ & $\eta$ \\

        \midrule
        \multirow{3}[1]{*}{Concurrent} & Volume 1 & 93.99\% & 13.98 & 8.58 & 51.74\% & 14.85 \\
						        & Volume 2 & 60.24\% & 89.08 & 40.91 & 95.34\% & 150.77 \\
							    & Volume 3 & 30.91\% & 125.09 & 48.33 & 97.17\% & 408.90 \\ 
		\midrule
        \multirow{3}[1]{*}{DICA} & Volume 1 & 95.11\% & 7.21  & 2.97 & 10.13 \% & 7.57 \\
 						         & Volume 2 & 91.09\% & 20.47  & 15.54 & 55.86\% & 22.53 \\
					             & Volume 3 & 57.09\% & 50.60 & 32.87 & 84.37 \% & 88.92 \\
        \bottomrule
    \end{tabular}%
    \label{table:dica}%
\end{table}%

\subsection{Simulation Results}
Performance improvement has been validated through extensive simulations of DICA and Concurrent Algorithm. Simulation results of different volumes are shown in Table \ref{table:dica}. To evaluate and compare the performance, we define several performance measures. \emph{Trip time ($\tau$)} is the difference between actual exit time of the intersection and the time the vehicle enters the intersection communication range. \emph{Average trip time ($\bar{\tau}$)} is the average value of trip times of all crossed vehicles. \emph{Standard deviation ($\sigma_\tau$)} is computed based on the trip times of all crossed vehicles. \emph{Throughput ($\rho$)} is defined as the percentage of crossed vehicles against total generated vehicles. \emph{Stopped rate ($\eta$)} is obtained through dividing stopped vehicles number by crossed vehicles number. 

However, note that neither the average trip time nor the throughput alone is sufficient to correctly evaluate the performance of an algorithm. In some cases, it could be possible that one algorithm shows better performance on average trip time while another algorithm performs better on throughput. Thus, both of the two measures should be considered together to correctly compare and evaluate the performances of different intersection control algorithms. We calculated the ratio of average trip time to throughput, which is called \emph{effective average trip time ($\bar{\tau}_e$)} and believe it could show the performance of an algorithm more comprehensively, i.e. $\bar{\tau}_e = \bar{\tau} / \rho$. 

Compared with Concurrent Algorithm, DICA increases the throughput and largely decreases the average trip time. As shown in Table \ref{table:dica}, both algorithms have less and less throughput with the increase of traffic volume. Also, larger decrease of throughput is shown in Concurrent Algorithm compared with DICA. DICA's smaller values of standard deviation imply that DICA is fairer than Concurrent Algorithm. Stopped rate shows that less vehicles experience a stop at the intersection enter line for our DICA algorithm thus saves energy. Effective average trip time could tell us comprehensive information about the performance of an intersection control algorithm. Efficiency and fairness of an algorithm are integrated in this measure. With more vehicles crossed and less average trip time, DICA has much less value of effective average trip time than that of Concurrent Algorithm.

\begin{figure}[!t]
    \centering
    \includegraphics[width=4in]{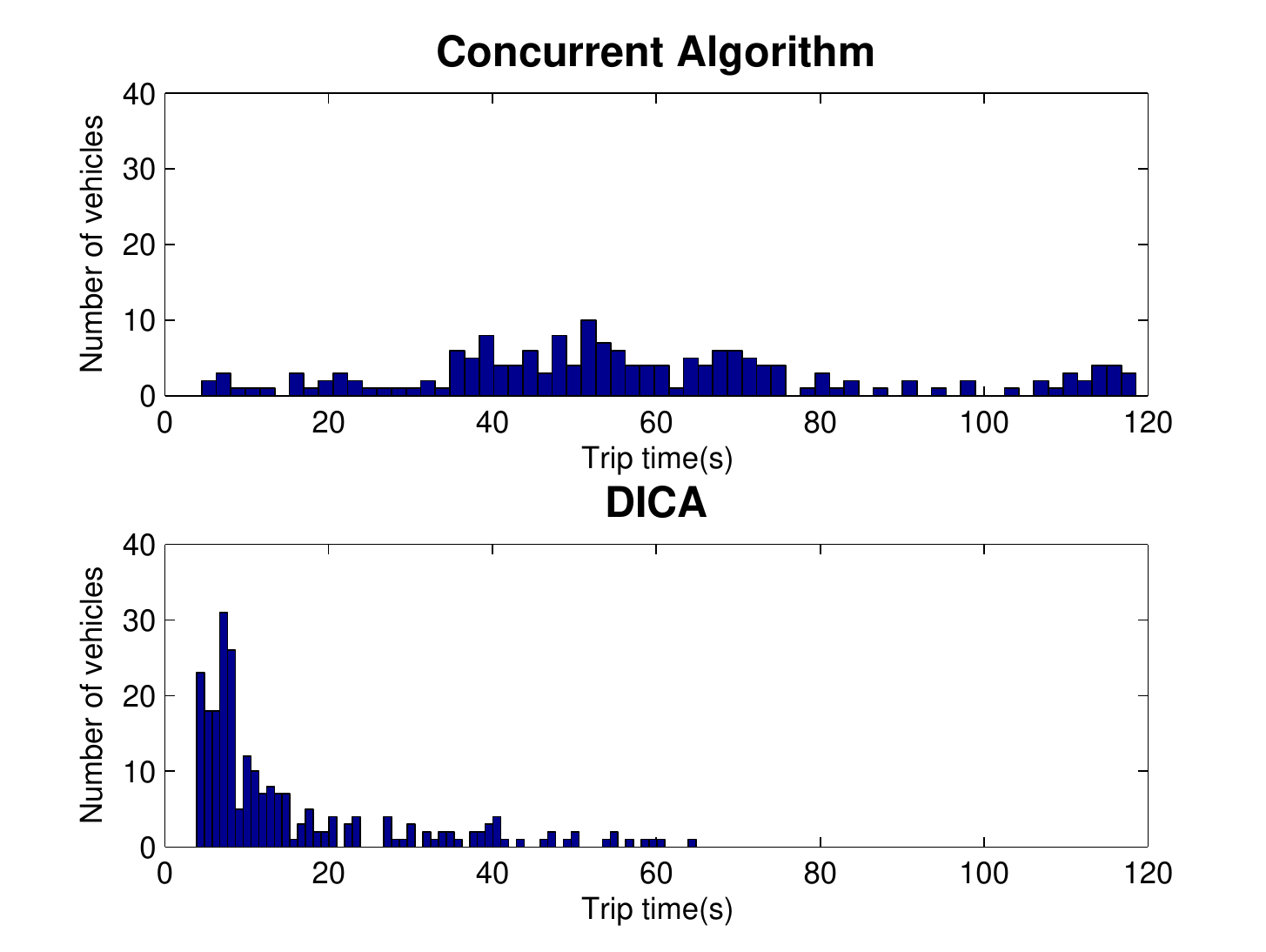}
    \caption{The histogram of Trip Times of crossed vehicles in a simulation.}
    \label{hist}
\end{figure}

The histogram of trip times of crossed vehicles in one simulation run from case 1 and one particular traffic pattern is shown in Figure \ref{hist}. From the figure we can see that under the same traffic setting, for the crossed vehicles, DICA results in less and concentrated trip times. On the other side, Concurrent Algorithm leads to much longer and wider distributed trip times. Compared with Concurrent Algorithm, DICA is fairer and much efficient for crossed vehicles.

\begin{figure}[!ht]
    \centering
    \ 
    \includegraphics[width=4in]{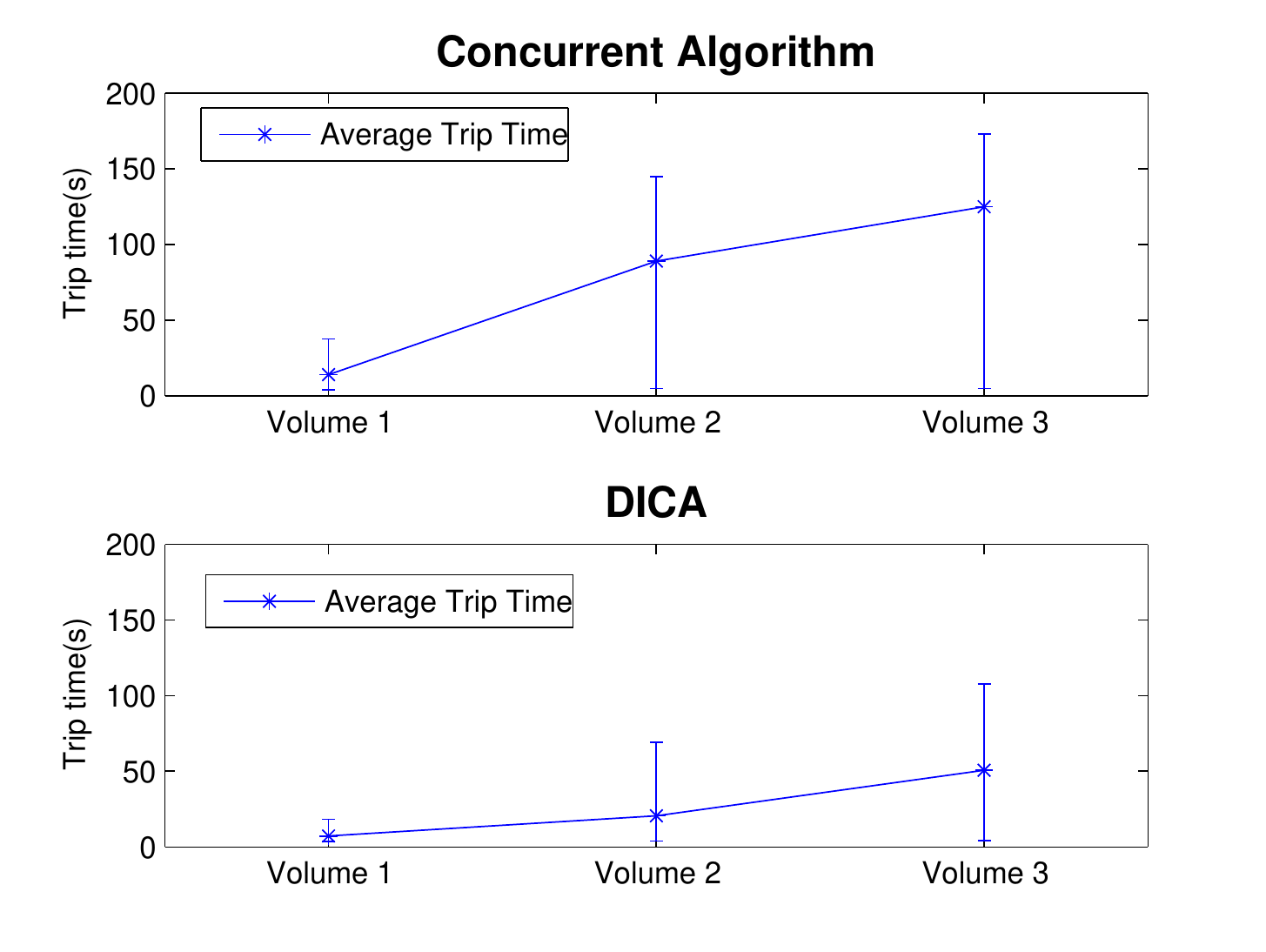}
    \caption{Comparison of Trip Times of 3 different cases between DICA and Concurrent Algorithm.}
    \label{comp}
\end{figure}
Figure \ref{comp} shows the maximum, average and minimum trip time for both algorithms under three different volumes. With heavier traffic volume, both algorithms have large maximum trip times and average trip times. However, DICA always performs better than Concurrent Algorithm in all three volumes.

\section{Summary}

In this chapter, we have developed an intelligent intersection control algorithm DICA employing the concept DTOT. V2I interaction protocol has been established for interactions between vehicles and intersection. DICA is able to manage limited intersection space at a more accurate and efficient way. Simulation results show that our algorithm achieves less effective average trip time compared with Concurrent Algorithm proposed by \cite{kim2014mpc}. 

\chapter{Computational Complexity Improvements of DICA} \label{chapter:e-dica}
In this chapter, we analyze the overall computational complexity of DICA and improve it in several computational technical approaches. We also enhance the algorithm accordingly so that it is possible to operate the algorithm in real-time for autonomous and connected intersection traffic management.

\section{Computational Complexity Analysis} \label{sec:anal}

In this section, we analyze the computational complexity of DICA shown in Algorithm \ref{code:dica}. 
Recall that $\mathcal{S}$ is the set of vehicles within the communication region of an intersection that has been confirmed to cross.
Let us assume that there are $n$ vehicles in $\mathcal{S}$, i.e., $\vert \mathcal{S} \vert = n$. 
Then we have the following result on the computational complexity analysis of DICA.

\begin{prop} \label{prop:dica}
DICA has $\mathcal{O}(n^2 L_m^3)$ computational complexity where $L_m$ is the maximum length of intersection crossing routes in an intersection.
\end{prop}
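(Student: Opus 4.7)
The plan is to charge the cost of Algorithm~\ref{code:dica} line by line, identify the while loop on lines~\ref{while}--\ref{endwhile} as the dominant contribution, and bound that loop by the product of a per-iteration cost and a bound on the number of iterations.

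First I would analyze the subroutine \texttt{getCV} (Algorithm~\ref{code:GetCV}): its outer loop ranges over the $n$ confirmed vehicles while the two nested occupancy loops contribute at most $L_m \times L_m$ constant-time tests (a rectangle intersection check, the \texttt{getOTI} call, and an interval-overlap test), giving a total cost of $\mathcal{O}(nL_m^2)$. The guard ``$v^j \notin \mathcal{C}$'' ensures that the sort on line following the push fires at most $n$ times per call and is absorbed. The subroutine \texttt{updateDTOT} merely postpones $\mathrm{DTOT}(v^i)$ past a single conflicting vehicle and rewrites at most $L_m$ occupancies, costing $\mathcal{O}(L_m)$. The preprocessing steps (the TSS$\leftrightarrow$DTOT conversions on lines~\ref{convert} and the one before line~\ref{send}, together with \texttt{checkFV} on line~\ref{FV}) each contribute at most $\mathcal{O}(nL_m^2)$ and will be dominated.

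Next I would bound the number of iterations of the while loop. By the delay-only update rule described in Section~\ref{subsec:update}, each call to \texttt{updateDTOT} strictly postpones the start time of $\mathrm{DTOT}(v^i)$ by at least one sampling-period quantum, so the number of iterations is upper bounded by the maximum admissible cumulative delay. Since each of the $n$ already confirmed vehicles occupies the intersection for only $\mathcal{O}(L_m)$ time steps (routes have length at most $L_m$ and speeds are bounded below), the latest possible exit time of any confirmed vehicle---and hence the greatest amount by which $v^i$ can be asked to wait---is $\mathcal{O}(nL_m)$. Consequently the loop executes at most $\mathcal{O}(nL_m)$ times, contributing a total cost of $\mathcal{O}(nL_m) \cdot \bigl(\mathcal{O}(L_m) + \mathcal{O}(nL_m^2)\bigr) = \mathcal{O}(n^2 L_m^3)$, which dominates the preprocessing and yields the stated bound.

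The main obstacle is precisely this iteration count: one has to justify carefully (i) that each invocation of \texttt{updateDTOT} advances $v^i$'s entry time by at least a fixed quantum, so that iterations cannot stall or chatter on fractional delays, and (ii) that the worst-case cumulative delay is no larger than $\mathcal{O}(nL_m)$, i.e., that successive iterations cannot adversarially keep introducing fresh conflicts beyond what the total ``time mass'' of the $n$ confirmed DTOTs permits. The liveness guarantees already proved in Propositions~\ref{prop:deadlock} and \ref{prop:starvation} ensure termination but do not directly yield this quantitative bound; a short amortized argument charging each iteration to a disjoint time-step of some confirmed DTOT should close the gap.
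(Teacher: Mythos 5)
There is a genuine gap: you have miscosted the subroutine \texttt{getOTI()}. In the original DICA the occupancy time interval of an occupancy is not available in constant time; as described in Section~\ref{subsec:oti}, the entrance time $\tau_{lb}$ and exit time $\tau_{ub}$ are found by scanning the same vehicle's DTOT for the nearest earlier and later occupancies that do not overlap the given one, which costs $\mathcal{O}(N_m)=\mathcal{O}(L_m)$ per call. (Replacing this scan by a constant-time kinematic estimate is precisely one of the enhancements of Chapter~\ref{chapter:e-dica}, Section~\ref{sec:imp3}, so your accounting describes the enhanced algorithm rather than the one in Proposition~\ref{prop:dica}.) With the correct cost, one call to \texttt{getCV()} is $\mathcal{O}(nL_m^3)$, not $\mathcal{O}(nL_m^2)$, and combined with your $\mathcal{O}(nL_m)$ bound on the number of while-loop iterations your decomposition yields $\mathcal{O}(n^2L_m^4)$, which overshoots the claimed bound. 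Your final answer comes out right only because this undercount is compensated by an overcount elsewhere.

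The compensating overcount is the iteration bound. The paper bounds the while loop by $n$ iterations, not $\mathcal{O}(nL_m)$: since confirmed DTOTs are frozen and \texttt{updateDTOT()} only ever delays $v^i$ past the earliest conflicting vehicle, each iteration removes at least one vehicle from the conflict set $\mathcal{C}$ for good, so the loop runs at most $\vert\mathcal{S}\vert = n$ times. With that bound and the $\mathcal{O}(nL_m^3)$ cost per \texttt{getCV()} call one obtains $\mathcal{O}(n^2L_m^3)$ as claimed. Your delay-quantum/time-mass argument is a legitimate alternative way to prove termination, but it is strictly looser than the combinatorial one and cannot be paired with the true per-iteration cost to reach the stated complexity. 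To repair the proof, replace the iteration count by the ``one conflicting vehicle eliminated per iteration'' argument and restore the $\mathcal{O}(L_m)$ cost of \texttt{getOTI()}; otherwise you would need an amortization argument over all \texttt{getOTI()} calls, which neither you nor the paper supplies.
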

\begin{proof}
Let $v^i$ be the vehicle which is currently being processed by ICA for intersection crossing confirmation. 
Also let $N_m := \max_{k \in \mathcal{S}'} N^k$ where $\mathcal{S}' = \mathcal{S} \cup \{ v_i \}$ and $N^k$ is the number of occupancies in the vehicle $k$'s DTOT.
Then, in line 3 (Algorithm \ref{code:dica}), it is easy to see that creating DTOT from the TSS and vehicle size information in the vehicle $v^i$'s REQUEST message involves only $\mathcal{O}(N_m)$ computational complexity.
In line 4 (Algorithm \ref{code:dica}), as explained in Section \ref{sec:dtot}, the front vehicle checking function \verb+checkFV()+ does a simple comparison with every confirmed vehicle in $\mathcal{S}$ to see if there are any vehicles which might affect the vehicle $v^i$'s DTOT and modifies the DTOT if it is necessary to ensure enough separation time and distance between the vehicle $v^i$ and other vehicles in front. And this process requires $\mathcal{O}(nN_m)$ computational complexity. 
Then, in line 5 (Algorithm \ref{code:dica}), the function \verb+getCV()+ is called to identify the set $\mathcal{C}$ of vehicles in $\mathcal{S}$ whose DTOTs might be in space-time conflict with the vehicle $v^i$'s DTOT. (Note that, as shown in Algorithm \ref{code:GetCV}, $\mathcal{C}$ is an ordered set according to time of collision and it is clearly $\mathcal{C} \subseteq \mathcal{S}$.) Thus, to return the set $\mathcal{C}$ from the set $\mathcal{S}$, this function performs $n$ times of space-time conflict checking between the vehicle $v^i$ and vehicles in $\mathcal{S}$. If a non-empty set $\mathcal{C}$ is returned in line 5 (Algorithm \ref{code:dica}), then, in lines 6 $\sim$ 10 (Algorithm \ref{code:dica}), the vehicle $v^i$'s DTOT is iteratively updated until the set $\mathcal{C}$ becomes empty within the \verb+while+ loop. (As one can see in Algorithms \ref{code:dica} and \ref{code:GetCV}, these steps are indeed the main part of the DICA algorithm and involve some computationally expensive operations. Hence, we describe the computational complexity of steps within the \verb+while+ loop separately in the next paragraph.)  
After the \verb+while+ loop, as the last steps in Algorithm \ref{code:dica} in lines from \ref{store} to \ref{send}, the space-time conflict free DTOT for the vehicle $v^i$ is stored, converted into TSS, and then sent to $v^i$ so that the vehicle can cross the intersection according to the DTOT. Clearly, these steps are fairly simple in terms of computation and in fact require $\mathcal{O}(1)$ complexity.
Next, we analyze the computational complexity of steps within the \verb+while+ loop. 

\emph{Space-time conflict checking steps}: 
As described in Section \ref{sec:dtot}, space-time conflict checking in \verb+getCV()+ is done using DTOTs of vehicles. Specifically, the two nested \verb+if+ blocks from line 6 to line 14 in Algorithm \ref{code:GetCV} perform this operation. For space conflict checking, it is checked if there exist non-empty intersections between two occupancies: one from DTOT of the vehicle $v^i$ and another from DTOT of one of the vehicles in the set $\mathcal{S}$. This is done in the outer \verb+if+ block and requires $n \cdot N_m^2$ times of iteration in the worst case. If two vehicles have a space conflict, then Algorithm \ref{code:GetCV} proceeds to check for time conflict. To check time overlapping between two space conflicting occupancies, the function needs to calculate time intervals for these occupancies during which each vehicle occupies its occupancy. This can be done easily by comparing occupancy time between occupancies within the same DTOT. As an example, for a given occupancy $O^i_k$ which is the $k$-th occupancy within the vehicle $v^i$'s DTOT, the lower and upper bounds for the occupancy time can be determined by space overlapping checking between the occupancies $O^i_k$ and $O^i_{k'}$ for $k' = \{ 1, \cdots, N_m \} \setminus k$. Thus, the two function calls to \verb+getOTI()+ within the \verb+if+ block involve the computational complexity of $\mathcal{O}(N_m)$. Once the occupancy time intervals are determined, it is a straightforward calculation to check time overlapping as shown in line 9 of Algorithm \ref{code:GetCV} and it takes $\mathcal{O}(1)$ computational complexity. After identifying all space-time conflicting vehicles from the set $\mathcal{S}$ and storing them to the set $\mathcal{C}$, Algorithm \ref{code:GetCV} then sorts the set $\mathcal{C}$ according to the ascending order of occupancy times of space-time conflicting occupancies and returns the set. Note that $\vert \mathcal{C} \vert \le n$ and $n \ll N_m$ in general. Hence, this sorting operation can be done with $\mathcal{O}(n log_2 N_m)$ computational complexity. If we consider all these calculation steps in the \verb+getCV()+ function, then one can see that the overall computational complexity for space-time conflict checking steps in \verb+getCV()+ is $\mathcal{O}(n N_m^3)$.

\emph{DTOT adjustment for collision avoidance}: 
Once the set $\mathcal{C}$ is returned from the function \verb+getCV()+, the DICA algorithm updates the vehicle $v^i$'s DTOT to avoid space-time conflict with DTOTs of the vehicles in the set $\mathcal{C}$. In line 7 (Algorithm \ref{code:dica}), it is shown that the first vehicle $v^j$ in the set $\mathcal{C}$ is considered for updating the vehicle $v^i$'s DTOT. As described in Section \ref{sec:dtot}, our update strategy to avoid space-time conflicts is to make the vehicle $v^i$ enter the intersection area a little bit late so as to give enough time for vehicle $v^j$ to cross the intersection safely. For this, the DICA algorithm first needs to compute the delay time needed to avoid the space-time conflict with the vehicle $v^j$. Since the occupancy time interval $I(O^j_k)$ for the vehicle $v^j$'s earliest space-time conflicting occupancy has already been determined from the function \verb+getCV()+, it is easy to calculate this delay time in this update process. Once the delay time is determined, then the remaining step is simply to change the occupancy times of all the occupancies in the vehicle $v^i$'s DTOT to be delayed and this results in $\mathcal{O}(N_m)$ computational complexity.

As described above, the number of vehicles in the set $\mathcal{S}$ is $n$ when the function \verb+getCV()+ is called for the first time in line 5 (Algorithm \ref{code:dica}). Then, within the \verb+while+ loop, the function \verb+updateDTOT()+ adjusts the vehicle $v^i$'s DTOT to avoid collision with the first vehicle in the set $\mathcal{C}$ and this step reduces the number of vehicles in the set $\mathcal{C}$ that can potentially collide with the vehicle $v^i$ at least by one. Thus, in the worst case, the number of vehicles in the set $\mathcal{C}$ returned by the second call of $\verb+getCV()+$ within the \verb+while+ loop is $(n-1)$. If we assume the worst case for all following iterations within the \verb+while+ loop until the set $\mathcal{C}$ becomes empty, then it is easy to see that the functions \verb+getCV()+ and \verb+updateDTOT()+ are called $n$ times within the \verb+while+ loop. This implies that since the computational complexity of the function \verb+updateDTOT()+ is significantly lower than that of the function \verb+getCV()+, the overall computational complexity of the \verb+while+ loop can be considered as $\mathcal{O}(n^2 N_m^3)$. 

Note that the maximum number of occupancies $N_m$ depends on both the time that it takes for a vehicle to cross the intersection and the discrete time step used to construct the DTOT by ICA. If we let $h$ be the discrete time step used by ICA and $T_m$ be the time it takes for a vehicle to completely cross an intersection when the vehicle starts from rest and accelerates to cross the intersection as quickly as possible, then we have $\bar{N}_m := T_m/h$ as an upper bound for $N_m$. Note that $T_m$ depends on the length of an intersection crossing route that a vehicle takes to cross an intersection. If we let $L_m$ be the maximum length out of all intersection crossing routes for an intersection, then $\bar{N}_m$ can be expressed in terms of $L_m$ instead of $T_m$. Specifically, if $L_m$ is long enough so that a vehicle can reach its maximum allowed speed $v_m$ within an intersection before it completely crosses the intersection, then it can be shown that $\bar{N}_m = {(2a_m L_m + v_m^2) / (2a_m v_m h)}$ where $a_m$ is the maximum acceleration rate of a vehicle. On the other hand, if $L_m$ is not long enough for a vehicle to reach $v_m$ while crossing an intersection, then it is also relatively straightforward to show that $\bar{N}_m = (\sqrt{2 L_m / a_m})/h$. 
(These two different cases are illustrated in Figure \ref{fig:complexity}.)  
If we fix values for $h, v_m$, and $a_m$, then one can see that $\bar{N}_m$ for the former case is proportional to $L_m$ while, for the latter case, $\bar{N}_m$ is proportional to the square root of $L_m$. Hence, if we substitute $L_m$ for $N_m$ in the computational complexity $\mathcal{O}(n^2 N_m^3)$ that we derived above, then we finally have $\mathcal{O}(n^2 L_m^3)$ as the overall computational complexity of DICA.

\begin{figure}[!bt]
\centering{\includegraphics[width=2.8in]{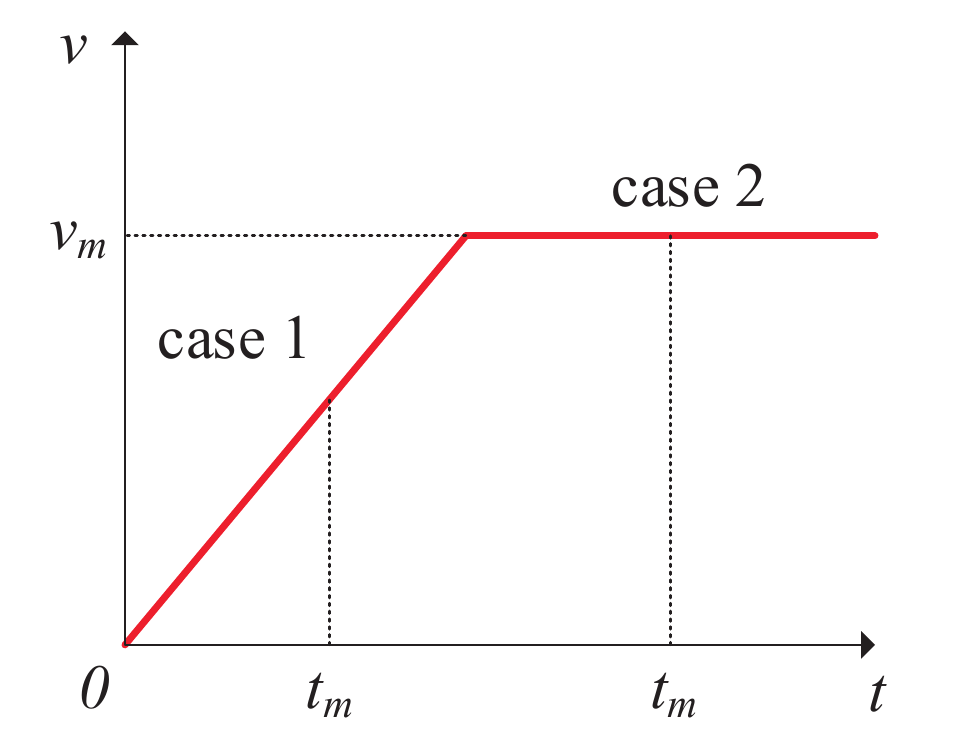}}
\caption[Two different cases for shortest intersection crossing time ($T_m$) calculation]{Two different cases for shortest intersection crossing time ($T_m$) calculation. (Case 1 is the situation when $L_m$ is too short to reach $v_m$ and case 2 is the situation when $L_m$ is long enough to reach $v_m$ while a vehicle is crossing an intersection.) \label{fig:complexity}}
\end{figure}
\end{proof}

\section{Algorithm Improvements} \label{sec:improv}
According to the computational complexity analysis result described in the previous section, it is true that the original DICA algorithm that is shown in Algorithms \ref{code:dica} and \ref{code:GetCV} is somewhat conservative in terms of computational cost to be used in practice. 
In this section, we present several approaches that can be used to improve the overall computational efficiency of the algorithm.

\subsection{Reduced Number of Vehicles for Space-Time Conflict Check} \label{sec:imp1}
As shown in Algorithm \ref{code:GetCV}, all confirmed vehicles in the set $\mathcal{S}$ are examined to obtain the set of space-time conflicting vehicles $\mathcal{C}$ for a new unconfirmed head vehicle $v^i$. However, we see that this computation process can be improved by excluding vehicles that cannot be in space-time conflict with the vehicle $v^i$ under any circumstances from the set $\mathcal{S}$. For example, a confirmed vehicle $v^j \in \mathcal{S}$ who has an intersection crossing time interval that is not overlapping with the vehicle $v^i$'s intersection crossing time interval can be excluded. Note that the intersection crossing time interval of a confirmed vehicle can be easily determined by the lower bound of the occupancy time $\tau_{lb}(O_{first})$ of the vehicle's first occupancy $O_{first}$ and the upper bound of the occupancy time $\tau_{ub}(O_{last})$ of the vehicle's last occupancy $O_{last}$ in the vehicle's confirmed DTOT. In addition to these vehicles, vehicles in the set $\mathcal{S}$ whose intersection crossing routes are compatible with that of vehicle $v^i$ can also be excluded. 
Hence, if we let $\mathcal{S}^*$ be the subset of all confirmed vehicles in set $\mathcal{S}$ that can be obtained after excluding all above-mentioned vehicles in determining the set $\mathcal{C}$, then the resulting computational complexity for the space-time conflict checking in function \verb+getCV()+ becomes $\mathcal{O}(\alpha_1 n N_m^3)$ where $\alpha_1 := \tilde{n}/n$, $\tilde{n} = \vert \mathcal{S}^*\vert$, $n = \vert \mathcal{S} \vert$, and $N_m$ is the maximum number of occupancies of all vehicles that are in the set $\mathcal{S}$ and also the vehicle that is currently under consideration for confirmation. (See the proof of Proposition \ref{prop:dica} for the precise definition of $N_m$.)

\subsection{Efficient Space Conflict Check} \label{sec:imp2}
Note that, for any two vehicles coming from different directions, they can collide with each other only within some parts of their intersection crossing routes. Thus, not all occupancies of a vehicle's DTOT needs to be checked for space conflict with another vehicle's DTOT. For example, the two vehicles $v^i$ and $v^j$ in Figure \ref{fig:Example} have very short ranges of intersection crossing routes that are space conflicting with each other. Thus, the occupancies to be checked can be reduced to {\{$O^i_2, O^i_3$\} and \{$O^j_5, O^j_6$\} from their entire DTOTs. 
Since the number of occupancies in a DTOT is very large in general, this can improve computational speed considerably. 
Note that, since the intersection crossing routes are fixed for a specific intersection, we can predetermine these space conflicting short ranges offline only one time for all pairs of incompatible intersection crossing routes. Hence, this extra preparation process does not incur an additional computational cost during the online operation of DICA.
If we use DTOT$^*$ to denote the subset of the original DTOT for a vehicle that can be obtained from this approach, then the computational complexity of the function \verb+getCV()+ in Algorithm \ref{code:GetCV} can be expressed as $\mathcal{O}(\alpha_2^3 n N_m^3)$ where $\alpha_2 := \tilde{N}_m / N_m$ and $\tilde{N}_m$ is the maximum number of occupancies of all vehicles that are in the set $\mathcal{S}^*$ and the vehicle that is currently under consideration for confirmation.

\subsection{Approximate Occupancy Time Interval Calculation} \label{sec:imp3}
 As explained in Section 3, ICA checks if an occupancy of a vehicle is conflicting in time with another vehicle's occupancy using occupancy time intervals that can be obtained from each vehicle's DTOT. However, the way to obtain an occupancy time interval presented in the proof of Proposition \ref{prop:dica} is somewhat naive in the sense of computational complexity. In fact, as analyzed in the proof, such an exhaustive search involves a computational complexity of $\mathcal{O}(N_m)$. To simplify this computation process, we propose to estimate the occupancy time interval for a certain occupancy based on the vehicle's speed, length, and acceleration rate instead of performing the exhaustive search. To clarify this idea, let us consider an example. For simplicity of explanation, we consider a case when a vehicle is moving in a straight line as shown in Figure \ref{fig:oti}. Let $O^i_k$ be the occupancy for which the DICA algorithm needs to determine the occupancy time interval $I(O^i_k) = [\tau_{lb}(O^i_k), \tau_{ub}(O^i_k)]$, $L(v^i)$ be the vehicle length of the vehicle $v^i$, $h$ be the sampling time interval, $x_k$ be the center position of the $O^i_k$ along the straight line. Then the algorithm first estimates the vehicle's speed and acceleration rate around the occupancy $O^i_k$ from $x_k$, $x_{k-1}$, $x_{k+1}$, and $h$. Occupancies at $x_{k-1}, x_{k+1}$ are very close to the occupancy $O^i_k$ and are not shown in Figure \ref{fig:oti} for simplicity. Specifically, if we let $V_{k^-}(v^i)$ and $V_{k^+}(v^i)$ be the speed of the vehicle $v^i$ from $O^i_{k-1}$ to $O^i_k$ and from $O^i_{k}$ to $O^i_{k+1}$ respectively, then these speeds can be approximated as follows:
\begin{equation} 
	V_{k^-}(v^i) \approx \frac{x_k - x_{k-1}}{h} , \quad V_{k^+}(v^i) \approx \frac{x_{k+1} - x_{k}}{h}
\end{equation}
From these speeds, we now approximate the acceleration rate of the vehicle as follows:
\begin{equation} 
	A_k(v^i) \approx \frac{V_{k^+}(v^i) - V_{k^-}(v^i)}{h}
\end{equation}
where $A_k(v^i)$ denotes the acceleration of the vehicle $v^i$ at the occupancy $O^i_k$. If we take the average of the speeds around $O^i_k$, then we can also approximate $V_k(v^i)$ which is the speed of the vehicle $v^i$ at $O^i_k$. Note that since the length of a vehicle $L(v^i)$ is just a few meters in general, the actual motion of the vehicle $v^i$ within the occupancy $O^i_k$ can be approximated fairly accurately by $V_k(v^i)$ and $A_k(v^i)$. 

Now, since it is a straightforward process to estimate $\tau_{lb}(O^i_k)$ and $\tau_{ub}(O^i_k)$ from $L(v^i)$,  $V_k(v^i)$, and $A_k(v^i)$, we omit the details of these calculations. For the case when the vehicle is moving on a curved path, we can still use the same method to approximate $V_k(v^i)$ and $A_k(v^i)$. But, in this case, we may need to add a short extra distance to the $L(v^i)$ to estimate $\tau_{lb}(O^i_k)$ and $\tau_{ub}(O^i_k)$ more accurately. Such an extra distance can be simply determined by the curvature of the path that is represented by the DTOT of a vehicle. 
Finally, if we apply this approximation method for an occupancy time interval calculation in the \verb+getOTI()+ function, then the computational complexity of the function \verb+getCV()+ improves from $\mathcal{O} (n^2 N_m^3)$ to $\mathcal{O}(n^2 N_m^2)$.  

\begin{figure}[!t]
\centering{\includegraphics[width=3.in]{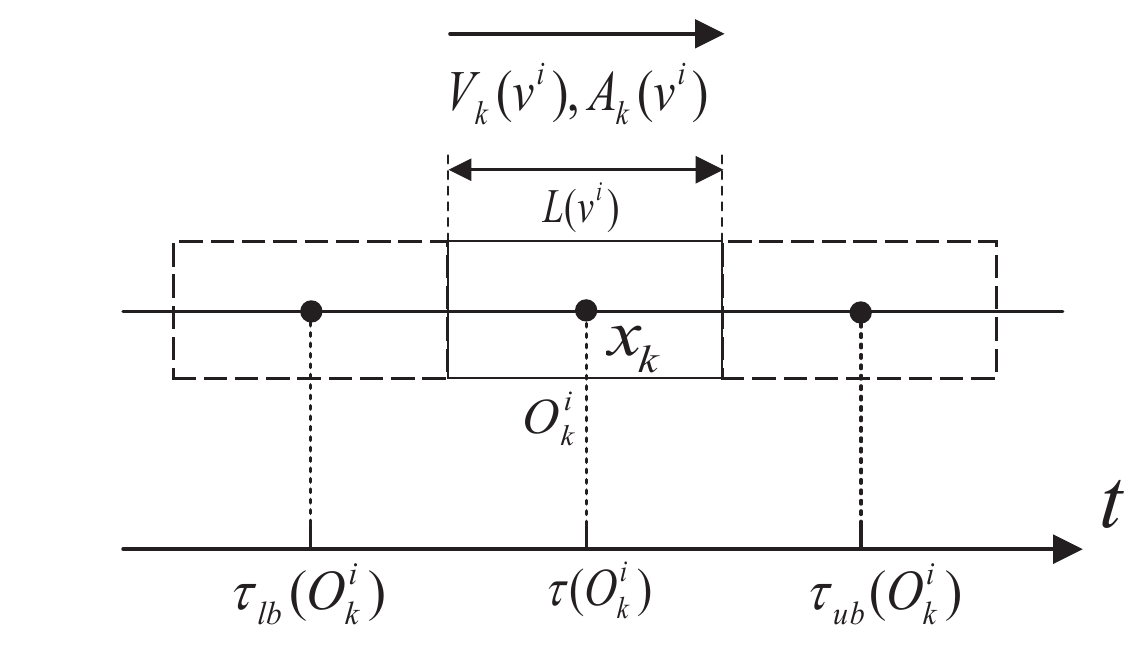}}
\caption{Approximate occupancy time interval calculation for a vehicle with through route. \label{fig:oti}}
\end{figure}

\subsection{Efficient Occupancies Comparison} \label{sec:imp4}
In addition to all the techniques described above, the overall computational complexity of Algorithm \ref{code:dica} can be improved further if we employ an efficient searching method such as the bisection method in the process of time-conflict checking between two DTOT$^*$s.
If we employ this bisection approach for time-conflict checking as shown in Algorithm \ref{code:Enhanced_GetCV}, then the computational complexity of the function \verb+getCV()+ can be improved significantly from $\mathcal{O}(n^2 N_m^3)$ to $\mathcal{O}( n^2 N_m^2 \log_2 N_m)$.

All of the improvement techniques discussed in this section are incorporated into the function \verb+getCV()+ to improve the overall computational complexity of the space-time conflict checking process. Algorithm \ref{code:Enhanced_GetCV} shows this modified \verb+getCV()+ function which is now called \verb+enhanced_getCV()+. In Algorithm \ref{code:Enhanced_GetCV}, $\mathcal{S}^*$ represents the set of already confirmed vehicles that is obtained from the process in Section \ref{sec:imp1} and DTOT$^*$ represents the subset of original DTOT for a vehicle that can be obtained from the approach in Section \ref{sec:imp2}. The function \verb+getOTI()+ within the \verb+while+ loop is now replaced by the new function \verb+getEstOTI()+ that calculates the occupancy time interval approximately as described in Section \ref{sec:imp3}. Lastly, the approach for efficient time conflict checking that is presented in Section \ref{sec:imp4} is implemented throughout the \verb+while+ loop of the DICA algorithm.

\begin{algorithm}[!t]   
	\caption{enhanced\_getCV$(\mathcal{S}^*, DTOT(v^i))$}
		\begin{algorithmic}[1]
		\STATE Let $\mathcal{S}^*$ be the reduced set of $\mathcal{S}$.
		\STATE Let $DTOT^*$ be the reduced $DTOT$.
		\STATE $\mathcal{C} = \emptyset$
		\FOR {$v^j$ in $\mathcal{S}^*$}
		\FOR {$O^j_{k_j}$ in $DTOT^*(v^j)$} \label{O}
		\IF {$v^j$ not in $\mathcal{C}$}
		\STATE $high = \vert DTOT^*(v^i) \vert - 1$
		\STATE $low = 0$
		\WHILE {$low \ne high$}
		\STATE $middle = {(high + low)/ 2}$
		
		\STATE Call getEstOTI($O^j_{k_j})$ $\rightarrow I(O^j_{k_j})$
		\STATE Call getEstOTI($O^i_{middle}$) $\rightarrow I(O^i_{middle})$
		\IF {$I({O}^j_{k_j}) \cap I({O}^i_{middle}) \neq \emptyset$}
		\STATE Assign $\tau_{lb}({O}^j_{k_j}) \rightarrow v^j.firstTimeAtCollision$
		\STATE Push $v^j$ into $\mathcal{C}$
		\STATE Sort $\mathcal{C}$ in ascending order of \emph{firstTimeAtCollision}
		\ELSIF{$\tau({O}^j_{k_j}) > \tau({O}^i_{middle})$}
		\STATE $low = middle$
		\ELSIF{$\tau({O}^j_{k_j}) < \tau({O}^i_{middle})$}
		\STATE $high = middle$
		\ENDIF

		\ENDWHILE
		\ENDIF
		\ENDFOR
		\ENDFOR

	\end{algorithmic}
	
	\label{code:Enhanced_GetCV}
\end{algorithm}

\begin{prop} \label{prop:edica}
Enhanced DICA has $\mathcal{O}(\alpha n^2 L_m  \log_2 L_m)$ computational complexity where $\alpha := \alpha_1^2 \alpha_2 \ll 1$, $n$ is the number of vehicles already confirmed to cross an intersection, and $L_m$ is the maximum length of intersection crossing routes in an intersection.
\end{prop}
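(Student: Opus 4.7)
The plan is to build on the complexity analysis already carried out in the proof of Proposition~\ref{prop:dica}, but to re-derive the bound for each component of the \verb+while+ loop after all four improvements from Sections \ref{sec:imp1}--\ref{sec:imp4} are incorporated into \verb+enhanced_getCV()+ (Algorithm~\ref{code:Enhanced_GetCV}). The central observation is that each improvement touches exactly one of the three factors ($n$, $N_m$ in the outer occupancy loop, $N_m$ in the inner search) or the per-comparison cost, and the improvements compose cleanly.

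First I would revisit \verb+enhanced_getCV()+ line by line. The outer \verb+for+ loop ranges over the reduced set $\mathcal{S}^*$ rather than $\mathcal{S}$, contributing $\tilde n = \alpha_1 n$ iterations by the argument of Section~\ref{sec:imp1}. The next loop ranges over the reduced trajectory $DTOT^*(v^j)$, which by Section~\ref{sec:imp2} has $\tilde N_m = \alpha_2 N_m$ occupancies. The inner bisection \verb+while+ loop from Section~\ref{sec:imp4} runs $\lceil \log_2 \tilde N_m \rceil = \mathcal{O}(\log_2 N_m)$ times (the constant $\alpha_2$ inside the log is absorbed). Finally, each iteration of the bisection invokes \verb+getEstOTI()+, which by Section~\ref{sec:imp3} is $\mathcal{O}(1)$ instead of $\mathcal{O}(N_m)$. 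Multiplying these factors gives
\[
\textrm{cost of one call to \texttt{enhanced\_getCV()}} \;=\; \mathcal{O}\!\left(\alpha_1 \alpha_2 \, n \, N_m \log_2 N_m\right).
\]

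Next I would reason about the outer \verb+while+ loop of DICA itself. As in the proof of Proposition~\ref{prop:dica}, \verb+updateDTOT()+ removes at least one conflicting vehicle from $\mathcal{C}$ per iteration and is dominated computationally by \verb+enhanced_getCV()+, so in the worst case the loop executes $|\mathcal{C}| \le \tilde n = \alpha_1 n$ times. Multiplying the per-call cost by $\alpha_1 n$ yields
\[
\mathcal{O}\!\left(\alpha_1^2 \alpha_2 \, n^2 \, N_m \log_2 N_m\right) \;=\; \mathcal{O}\!\left(\alpha \, n^2 \, N_m \log_2 N_m\right),
\]
with $\alpha := \alpha_1^2 \alpha_2$. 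All steps outside the \verb+while+ loop (conversion of TSS to DTOT, \verb+checkFV()+, storage, and conversion back) are dominated by this bound, just as in Proposition~\ref{prop:dica}.

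To finish, I would carry over verbatim the geometric argument from the end of the proof of Proposition~\ref{prop:dica} that expresses $N_m$ in terms of the maximum route length $L_m$: either $\bar N_m \propto L_m$ or $\bar N_m \propto \sqrt{L_m}$ depending on whether a vehicle reaches $v_m$ inside the intersection. Taking the worst of the two cases substitutes $L_m \log_2 L_m$ for $N_m \log_2 N_m$, yielding the claimed $\mathcal{O}(\alpha n^2 L_m \log_2 L_m)$. The only subtlety I expect to watch for is the double appearance of $\alpha_1$: because the reduction from $\mathcal{S}$ to $\mathcal{S}^*$ simultaneously shrinks the outer loop of \verb+enhanced_getCV()+ and bounds the number of \verb+while+-loop iterations in DICA, the factor $\alpha_1$ must be counted twice, which is what produces $\alpha_1^2$ rather than $\alpha_1$ in the final constant.
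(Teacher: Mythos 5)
Your proposal is correct and follows essentially the same route as the paper's own proof: it counts $\alpha_1 n$ iterations of the outer loop in the enhanced conflict-checking function, $\alpha_2 N_m$ occupancies per reduced DTOT, an $\mathcal{O}(\log_2 N_m)$ bisection with $\mathcal{O}(1)$ occupancy-time-interval estimation, multiplies by the $\alpha_1 n$ calls from the main loop to get $\mathcal{O}(\alpha_1^2\alpha_2\, n^2 N_m \log_2 N_m)$, and then substitutes $L_m$ for $N_m$ exactly as in Proposition~\ref{prop:dica}. Your explicit remark on why $\alpha_1$ appears squared matches the paper's reasoning, and your direct loop-by-loop multiplication is if anything a slightly cleaner presentation than the paper's two-stage derivation.
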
 
\begin{proof}
First, note that the only part in Algorithm \ref{code:dica} that is affected by this proposed enhancement is that the number of confirmed vehicles to be considered for a space-time conflict check is reduced from $n = \vert \mathcal{S} \vert$ to $\tilde{n} = \vert \mathcal{S}^* \vert$ where $\tilde{n} = \alpha_1 n$ and $\alpha_1 \in (0, 1]$. Thus, in Algorithm \ref{code:dica}, the functions \verb+enhanced_getCV()+ and \verb+updateDTOT()+ are now called $\alpha_1 n$ times. 
Next, we also note that, since nothing is changed due to this improvement in the \verb+updateDTOT()+ function whose computational complexity is already significantly lower than that of the function \verb+getCV()+, it suffices to analyze the computational complexity of the function \verb+enhanced_getCV()+ presented in Algorithm \ref{code:Enhanced_GetCV} for the overall computational complexity of the enhanced DICA. 

Now, as one can see in Algorithm \ref{code:Enhanced_GetCV}, the entire block within the outer \verb+for+ loop is executed for $\alpha_1 n$ times since the number of confirmed vehicles to be checked for a space-time conflict with the vehicle $v^i$ is reduced from $n$ to $\alpha_1 n$ due to the approach discussed in Section \ref{sec:imp1}. Then, within the \verb+for+ loop, for each vehicle $v^j$ in the set $\mathcal{S}^*$, occupancies from each vehicle's DTOT are evaluated for space and time conflict which \emph{typically} requires $N_m^2$ times occupancy comparison operation where $N_m$ is the maximum number of occupancies in a vehicle's DTOT. However, in the \verb+enhanced_getCV()+ function, we first note that the maximum number of occupancies for each vehicle's DTOT to be tested for space-time conflict is reduced from $N_m$ to $\tilde{N}_m$ where $\tilde{N}_m = \alpha_2 N_m$ and $\alpha_2 \in (0, 1]$ due to the approach presented in Section \ref{sec:imp2}. Another important improvement is that the computational complexity for the occupancy time interval calculation is improved from $\mathcal{O}(N_m)$ to $\mathcal{O}(1)$ within another enhanced function \verb+getEstOTI()+ as discussed in Section \ref{sec:imp3}. Therefore, the overall computational complexity of the outer \verb+for+ loop can be estimated as $\mathcal{O}(\alpha_1 \alpha_2^2 n N_m^2)$. However, note that this is the case when we use the same occupancies comparison method as used in the original \verb+getCV()+ function. As shown in Algorithm \ref{code:Enhanced_GetCV}, the process of occupancies comparison is now performed based on the bisection search method. Roughly speaking, for given $n$ and $N_m$, this efficient search method improves the overall computational complexity of the function from $\mathcal{O}(n N_m^2)$ to $\mathcal{O}(n N_m \log_2 N_m)$ as discussed in Section \ref{sec:imp4}. If we combine this and others discussed above for the overall computational complexity of the \verb+enhanced_getCV()+ function, then we have $\mathcal{O}(\alpha_1 \alpha_2 n N_m \log_2 N_m)$. Recall that the \verb+enhanced_getCV()+ function is called at $\alpha_1 n$ times in the main \verb+while+ loop as discussed above, we have $\mathcal{O}(\alpha_1^2 \alpha_2 n^2 N_m \log_2 N_m)$ as the overall computational complexity of DICA.

As we have analyzed already in the proof of Proposition \ref{prop:dica}, $N_m$ is linearly proportional to the maximum length of intersection crossing routes $L_m$. Hence, if we substitute $L_m$ for $N_m$, then we finally have $\mathcal{O}(\alpha n^2 L_m \log_2 L_m)$ as the overall computational complexity of enhanced DICA where $\alpha := \alpha_1^2 \alpha_2 \ll 1$. 
\end{proof}

\section{Simulation} \label{sec:sim}

In this section, we present simulation results that demonstrate the improved performance of the enhanced DICA over the original algorithm. The performance of the enhanced algorithm is also compared with that of an optimized traffic light intersection control.

\subsection{Simulation Setup}
To evaluate the performance of the original DICA and the enhanced DICA, we implemented both algorithms in SUMO \cite{krajzewicz2012recent}, and performed extensive intersection traffic simulations. The simulated situation is an intersection crossing traffic on a typical isolated four-way intersection with three incoming lanes, one of which is a dedicated lane for left-turning vehicles, and two outgoing lanes on each road. 
We set $70\ km/h$ as the maximum allowed speed $v_m$ for all incoming vehicles. We let vehicles approach an intersection with different speeds when they enter into the communication region of the intersection to make the simulation more realistic. Specifically, when a new vehicle is spawned outside of the communication region, we assign the initial speed of the vehicle randomly within the range from 40\% to 100\% of the maximum allowed speed $v_m$. Thus, a vehicle keeps this random initial speed until it enters the communication region and then it either follows another vehicle or is confirmed by ICA with a feasible DTOT.
The maximum acceleration ($a_{max}$) and deceleration ($a_{min}$) rates for vehicles that are used in simulations are $2\ m/s^2$ and $4.5\ m/s^2$, respectively. The size of a vehicle used in simulations is $5$ meters long and $1.8$ meters wide. The distance from the enter line of the communication region to that of the intersection region is set as $50\ m$.
The time step that is used in simulations is $0.05$ seconds. In most cases, a simulation terminates when the simulation time reaches $10$ minutes. 

In our simulations, vehicles are spawned according to several random variables in order to generate various traffic volumes as well as traffic patterns. Specifically, $p_V$ is the probability that a vehicle is spawned. $p_L, p_S, p_R$ are the probabilities for this new vehicle to have a route of left-turning, through or right-turning. Thus, by adjusting $p_V$, we can generate various traffic volumes.  
As shown in Table \ref{table:params-e-dica}, we set $p_L = 0.2$, $p_S = 0.6$, and $p_R = 0.2$ for all traffic volume cases to generate $20\%$ of all incoming vehicles for left turing, $60\%$ for going straight, and the other $20\%$ for right turning. 
We use three random seeds to generate three different intersection traffic patterns for each traffic volume. Thus, to obtain simulation data for each traffic volume, we run three simulations with different traffic patterns for each simulation and then use the averages of these simulation results as the result for each traffic volume case. 
The intersection crossing traffics generated in most of our simulations are balanced traffics in the sense that the numbers of vehicles generated in each incoming road are about the same. However, for a simulation to show the starvation free property of the proposed DICA algorithm, the intersection traffic is purposely designed to be unbalanced where the number of vehicles for minor approaching roads is roughly 30\% of the vehicles on major roads.

 \begin{table}[!t]
  \begin{center}
  \caption{Parameters used for various traffic volumes and patterns. ($*$ Expected number of vehicles per 10 minutes.)\label{table:params-e-dica}}
{\begin{tabular}{lc}\toprule
Parameter & Value \\
 		\midrule 
 		Traffic volumes$*$ & 100 / 200 / 300 / 400 / 500 \\
 		$p_V$ & 0.03 / 0.06 / 0.08 / 0.11 / 0.14 \\ 
 		$p_L$  & 0.20  \\ 
 		$p_S$  & 0.60   \\ 
 		$p_R$ & 0.20  \\ 
 		Random seeds  & 12 / 21 / 66   \\ 
		\bottomrule
	\end{tabular}}{}
  \end{center}
\end{table}

In the following discussion on our simulation results, \emph{simulation time} means the simulated time used in simulation program and \emph{computation time}, which will be discussed later in Section \ref{sec:sim:time}, means the actual elapsed time that it takes for a computer to run a simulation. 
Also, in Section \ref{sec:sim:perf}, the traffic control performance of the enhanced DICA is compared with that of a traffic light algorithm with fixed cycles. To have a comparable traffic light program, we computed the optimal signal cycles for different traffic volume cases by using the exponential cycle length model $C_0 = 1.5 L e^{1.8 Y}$ from \cite{cheng2005development}. In the model, $L$ represents the total lost time within the cycle. The lost time for each phase is assumed to be $4$ seconds \cite{HCM2000}. Thus, $L = 4 \times 4\ s = 16\ s$. $Y$ is the sum of critical phase flow ratios. The duration of the yellow light of each phase is 3 seconds. 

All simulations were run on a 64bit Windows computer, and its processor is Intel(R) Core(TM) i7-4770 CPU @ 3.40 GHz with 8 GB RAM. The interface programs with SUMO were coded in Python.

\subsection{Simulation Results}
%

Computation times and performances of three different traffic patterns for all five volume cases are recorded from simulations. 

\subsubsection{Computation Time} \label{sec:sim:time}
 
\begin{figure}[!b]
\centering{\includegraphics[width=5.8in]{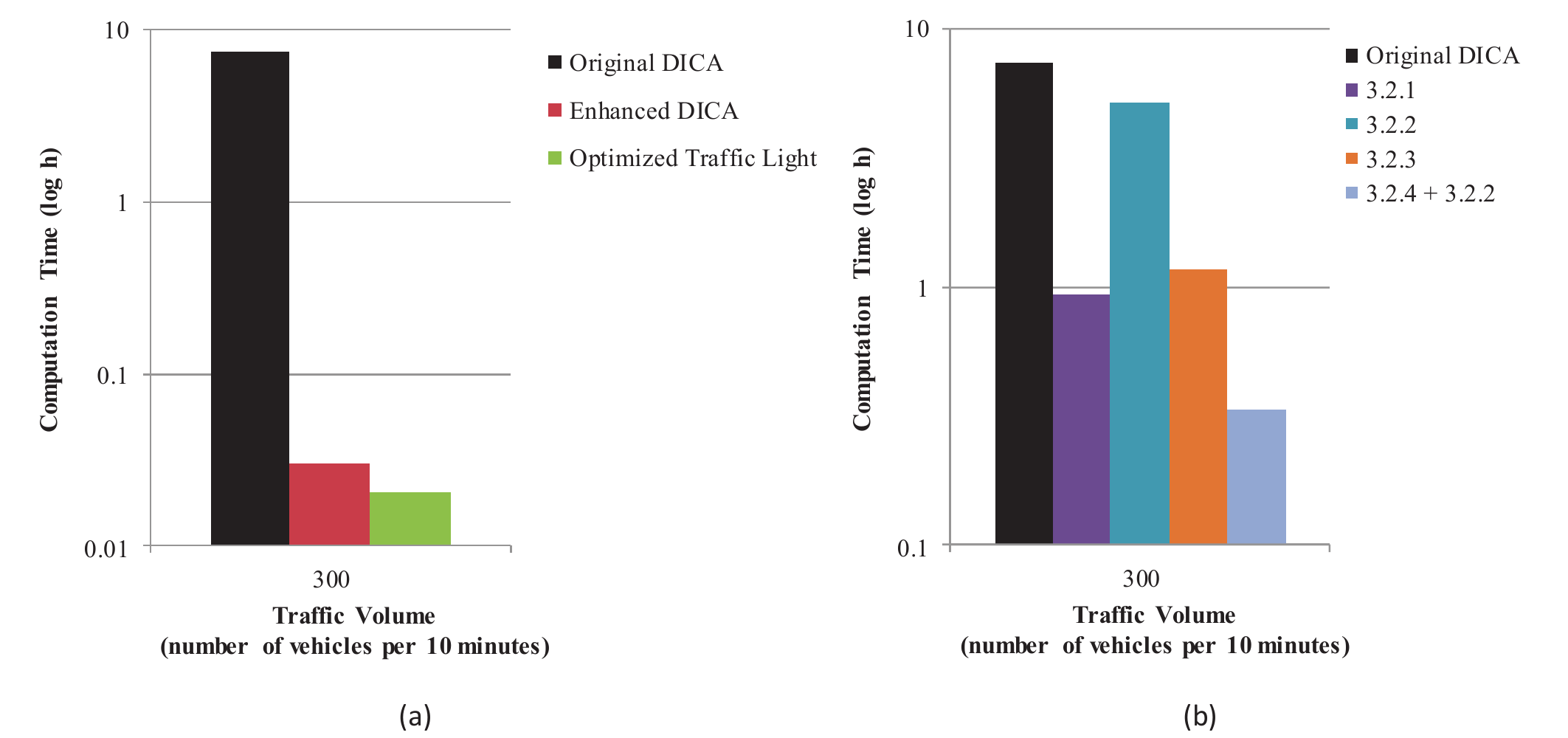}}
\caption[Computation times comparison]{Computation times comparison for traffic volume with 300 vehicles per 10 minutes. (The symbol 3.2.x represents the improvement technique in Section \ref{sec:improv}.x where x = $\{$ 1, 2, 3, 4 $\}$.)
(a) original DICA with different algorithms (b) original DICA with different improvement techniques \label{fig:SimulationTime} 
}
\end{figure}

Figure \ref{fig:SimulationTime} (a) compares the computation times of the original DICA, the enhanced DICA, and the optimized traffic light algorithm. Figure \ref{fig:SimulationTime} (b) shows how much computational improvement was made through each computational improvement technique discussed in Sections \ref{sec:imp1}, \ref{sec:imp2}, \ref{sec:imp3}, and \ref{sec:imp4}. Note that since the computational improvement technique in Section \ref{sec:imp4} is implemented based on the computational improvement technique in Section \ref{sec:imp2}, we had to combine techniques from both Sections \ref{sec:imp4} and \ref{sec:imp2} to show the improvement due to the technique in Section \ref{sec:imp4} indirectly. 
Here, we show the computation times comparison for only one traffic volume case with 300 vehicles per 10 minutes since the trends for other volume cases are similar. 
The vertical axis in Figure \ref{fig:SimulationTime} is the computation time in hour unit which is represented in logarithmic scale. As shown in Figure \ref{fig:SimulationTime} (a), the enhanced DICA that implements all improvements discussed in Section \ref{sec:improv} takes significantly less computation time, i.e. only $0.4\%$ computation time of the original algorithm. 
When we apply each computational improvement technique individually, our result shows that it takes about $11\%$ of the computation time of the original DICA with the technique in Section \ref{sec:imp1}, $59\%$ with the technique in Section \ref{sec:imp2}, $13\%$ with the technique in Section \ref{sec:imp3}, and $6\%$ with techniques in Sections \ref{sec:imp2} and Section \ref{sec:imp4} together. If we combine all of these individual improvements altogether to estimate the collective improvement, then we have about $0.45\%$ computation time of the original DICA which is similar to the computation time result with the enhanced DICA in which all these techniques are implemented. 

Table \ref{table:time} compares the computation times between the enhanced DICA and the optimized traffic light algorithm for all five traffic volume cases. 
From the results shown in the table, we note that the computation time for the optimized traffic light algorithm gradually increases as the traffic volume increases. 
However, since the optimized traffic light algorithm has $\mathcal{O}(1)$ computational complexity, its computation time cannot be affected by the number of vehicles around an intersection. Thus, roughly speaking, one can say that the computation time of the optimized traffic light for a particular traffic volume case is in fact the time required for the simulation software SUMO to run a simulation with the number of vehicles for that particular traffic volume case. Therefore, the actual computation time of the enhanced DICA for a particular traffic volume case can be roughly approximated by subtracting the computation time of the optimized traffic light for the case from the computation time of the enhanced DICA presented in the table. For example, for the traffic volume with 500 vehicles, the actual computation time for the enhanced DICA can be approximated as $0.031 (= 0.058 - 0.027)$ hours which is 1.86 minutes.    
Note that this 1.86 minutes is the computation time taken by the algorithm to handle 500 vehicles. Thus this in turn implies that it takes only 0.2232 seconds to handle each vehicle. 
An exception to this approximation is the case with 100 vehicles traffic volume case where the computation time for optimized traffic light takes longer time than that of the enhanced DICA. The reason for this result can be understood by considering the fact that, in such a low traffic volume situation, the average number of vehicles to be simulated by SUMO at each simulation time step is smaller in the enhanced DICA case since vehicles are crossing an intersection much faster without waiting at an intersection under the enhanced DICA than the optimized traffic light as shown in Section \ref{sec:sim:perf}.

 \begin{table}[tb]	
 	\centering
 	\vspace{-0.5cm}
 	\caption{Computation time comparison between enhanced DICA and optimized traffic light}
 	\smallskip
 	\label{table:time}
 	\begin{tabular}{cccccc}
 		\toprule 
 		Traffic volume &  \multirow{2}{*}{100} &  \multirow{2}{*}{200}& \multirow{2}{*}{300} & \multirow{2}{*}{400} & \multirow{2}{*}{500} \\ 
 		(Number of vehicles per 10 minutes) &  & &  &  &  \\
 		
 		\midrule 
 		
 		Optimized Traffic light (h) & 0.014 & 0.017 & 0.020 & 0.024 & 0.027 \\ 
 		Enhanced DICA (h) & 0.011 & 0.024 & 0.026 & 0.042 & 0.058 \\ 
 		
 		\bottomrule 
 		
 	\end{tabular} 
 \end{table} 
 
\subsubsection{Liveness and Safety}
\begin{figure}[!t]
	\centering
	\includegraphics[width=3.4in]{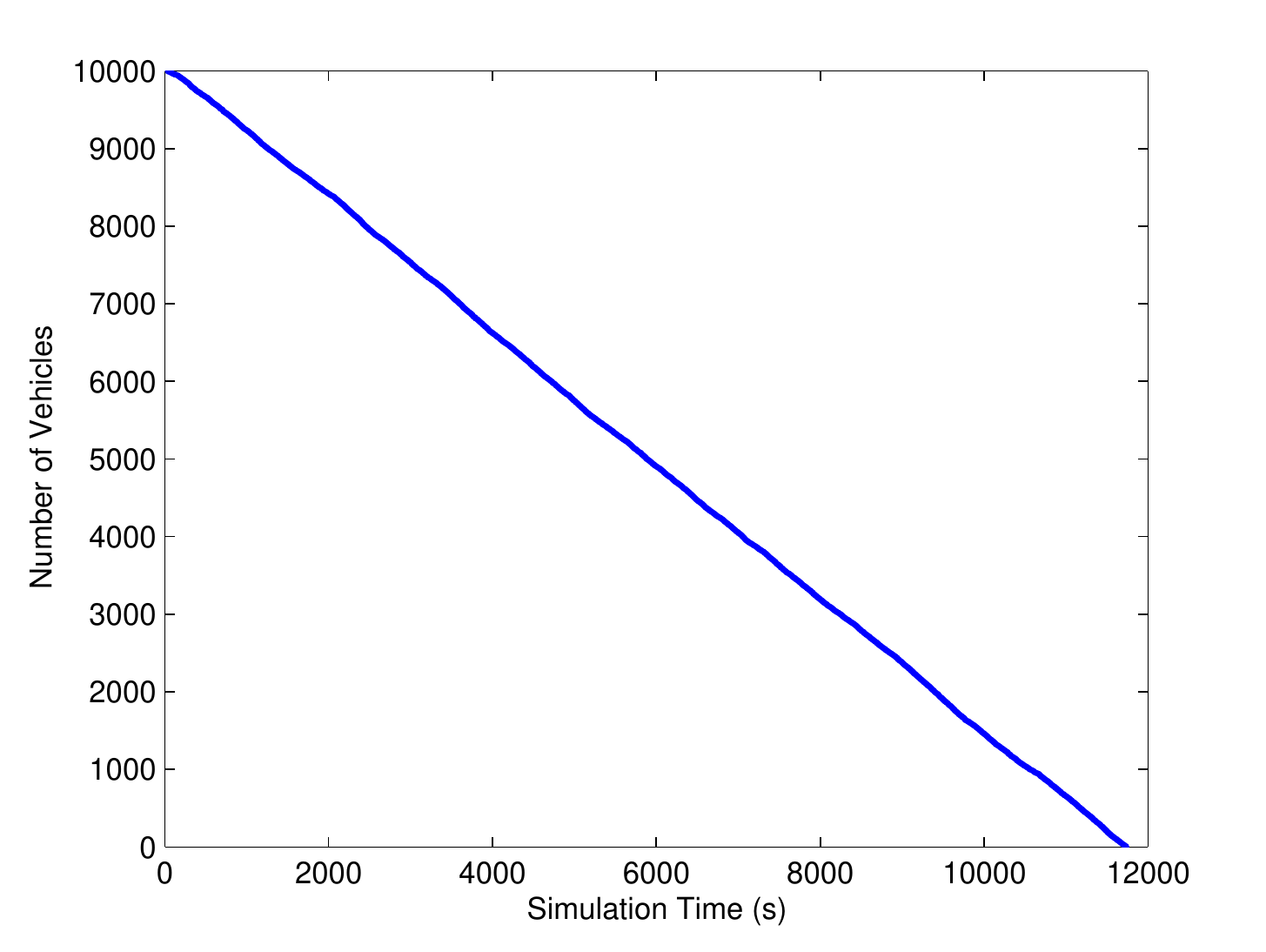}
	\caption{The number of vehicles which wait to cross the intersection over time.}
	\label{fig:waitToCross}
\end{figure}

Although we have theoretically showed the liveness of DICA, it is better to have simulation results that support the theory. Since the simulation in this section is only a verification, we run a simulation with 10, 000 vehicles instead of giving a restriction on the simulation time. The simulation ends after all 10, 000 vehicles have exited the simulation scene. We recorded the number of vehicles that are waiting to cross the intersection at each simulation time step and plot the number profile in Figure \ref{fig:waitToCross}. As shown in the figure, the number of vehicles drops to zero in almost a linear way within a finite time which demonstrates that every vehicle was able to cross the intersection eventually which proves the proposition 1 in Section \ref{sec:anal}. 
We also performed a set of simulations for the case of unbalanced traffic situation where the number of vehicles on minor roads is only 30\% of that of major roads to demonstrate the fairness of DICA.
To show the fairness of the algorithm, we recorded the average trip times for major roads and minor roads respectively for every traffic volumes. 
As shown in Table \ref{table:unbalanced}, one can find that the average trip time of the minor roads is about the same as that of the major roads. This shows that there is not a case that some vehicles cannot get confirmation or will experience a very long time to be confirmed which demonstrates that DICA is starvation free.

\begin{table}[!b]
\begin{center}
\caption{Average trip time comparison between major roads and minor roads in an unbalanced traffic\label{table:unbalanced}}
{\begin{tabular}{llllll}\toprule 
		Traffic volume &  \multirow{2}{*}{100} &  \multirow{2}{*}{200}& \multirow{2}{*}{300} & \multirow{2}{*}{400} & \multirow{2}{*}{500} \\ 
		(Number of vehicles per 10 minutes) &  & &  &  &  \\
		
		\midrule 

		Average trip time on major roads (s) & 6.17 & 6.60 & 7.38 & 8.15 & 10.15 \\ 
		Average trip time on minor roads (s) & 6.21 & 6.57 & 7.38 & 7.90 & 9.63 \\ 
		
\bottomrule
\end{tabular}}{}
\end{center}
\end{table}

 \begin{figure}[!b]
	\centering
	\includegraphics[width=3.8in]{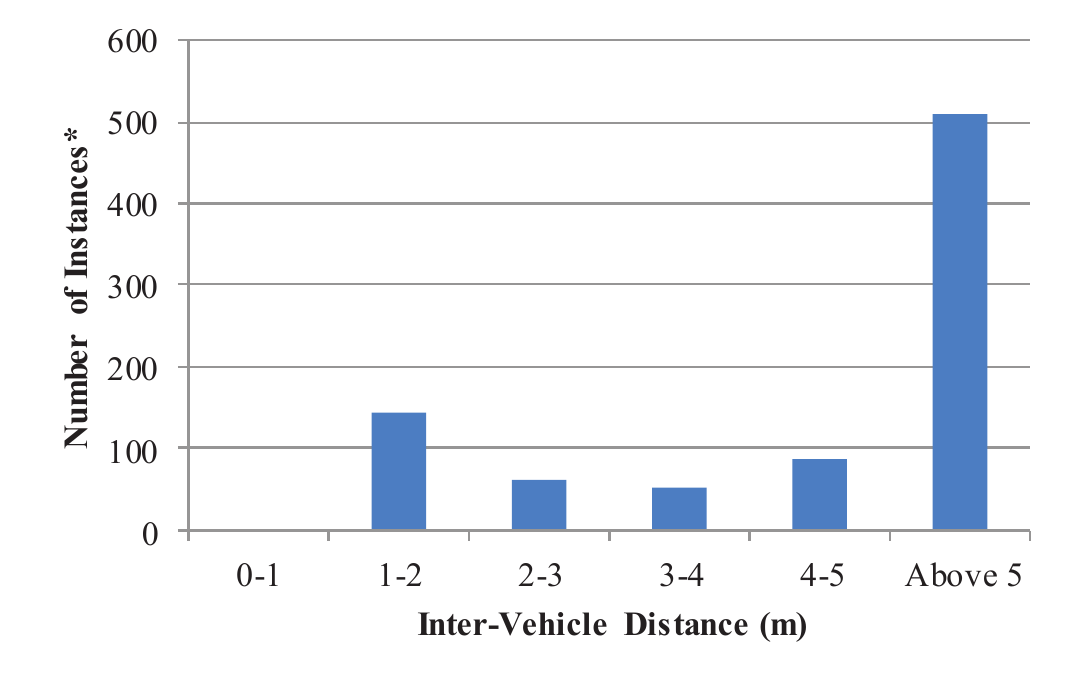}
	\caption[Histogram of the inter-vehicle distance within the intersection.]{Histogram of the inter-vehicle distance within the intersection. (* An instance means the situation when a pair of vehicles are separated by the calculated inter-vehicle distance.)}
	\label{fig:minDistance}
\end{figure}

To validate the safety property (i.e., collision freeness) of DICA through simulation, we computed the inter-vehicle distance between every pair of vehicles within an intersection at every second in simulation time. Since each vehicle is represented as a polygon, a $5\ m$ long and $1.8\ m$ wide rectangle more precisely, we obtained this data based on an algorithm of the shortest distance calculation between two polygons. 
A histogram of the recorded inter-vehicle distances is shown in Figure \ref{fig:minDistance}. 
Clearly, the inter-vehicle distance must be less than or equal to zero if two vehicles are in a collision and must be positive otherwise. 
As one can see from the figure, there is no instance observed throughout the entire simulation with less than $1 m$ inter-vehicle distance, which is a clear indication that there is no collision inside the intersection. Note that Figure \ref{fig:minDistance} is demonstrating the safety of the DICA algorithm, the safety problem that vehicles cannot follow confirmed DTOT correctly pertaining to the robustness of DICA will be studied in our future work.

\subsubsection{Control Performance} \label{sec:sim:perf}

\begin{figure*}[!tb]
\centering{\includegraphics[width=5.6in]{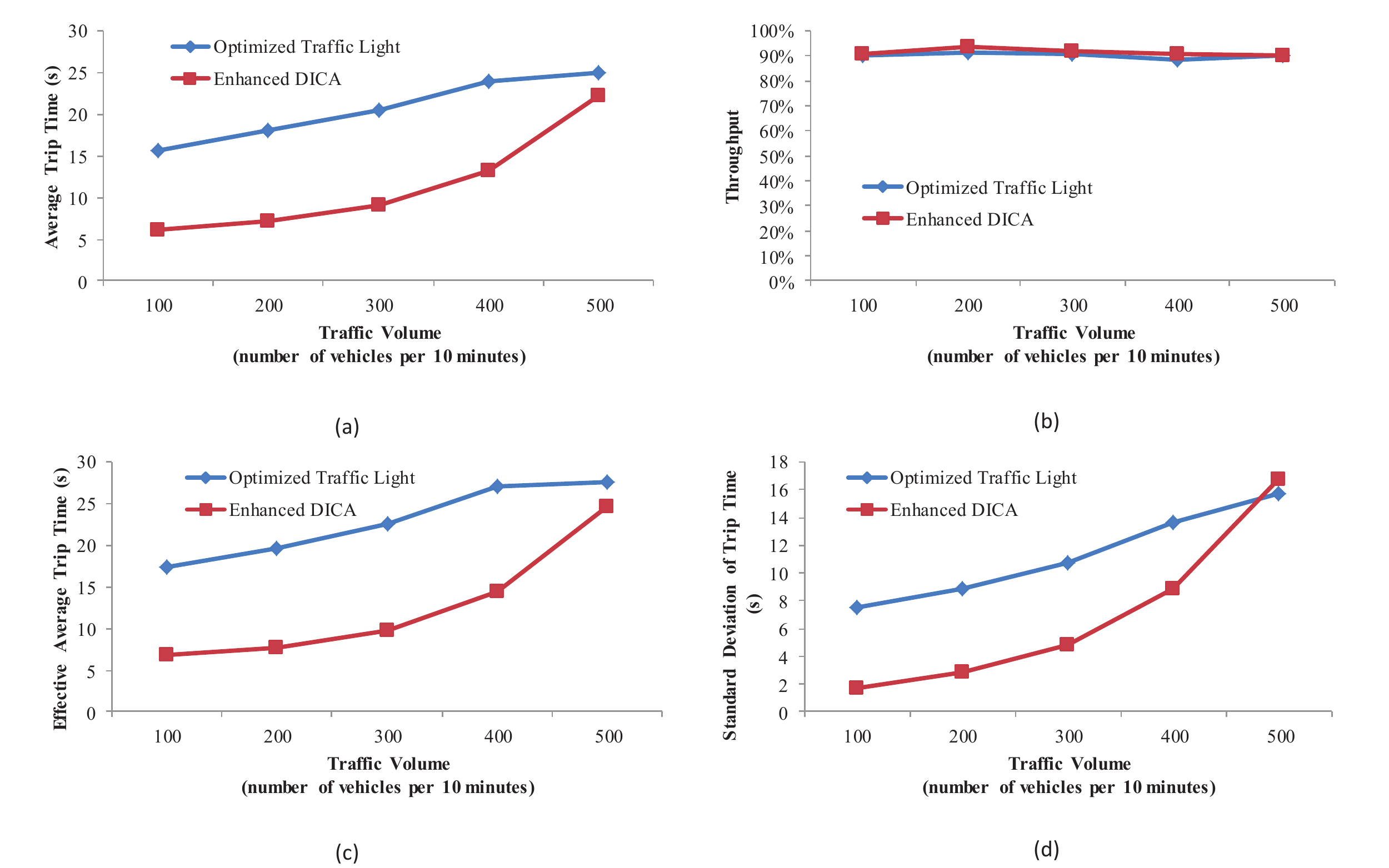}}
\caption[Performance comparison between enhanced DICA and optimized traffic light]{Performance comparison between enhanced DICA and optimized traffic light. 
(a) average trip time (b) throughput (c) effective average trip time (d) standard deviation of trip time \label{fig:performance}}
\end{figure*}

The overall traffic control performance of the enhanced DICA is also evaluated and compared with that of the optimized traffic light algorithm based on the following performance measures. For each vehicle, we recorded the \emph{trip time} that is the time taken for a vehicle from the moment when it enters into the communication region of an intersection until the vehicle completely crosses the intersection region. From the recorded trip time data for all crossed vehicles, we calculated several related statistic information which is the \emph{average trip time} and the \emph{standard deviation of trip time}. Besides these trip time related performance measures, we also calculated the percentage of all crossed vehicles' number against the total number of generated vehicles, which we call the \emph{throughput}.
However, note that neither the average trip time nor the throughput alone is sufficient to correctly evaluate the performance of an algorithm. In fact, both of these measures should be considered together to correctly compare and evaluate the performances of different intersection traffic control algorithms. For this reason, we calculated the ratio of average trip time to throughput, which we call the \emph{effective average trip time}, and believe that this could show the performance of an algorithm better.
Comparison of the performance between the enhanced DICA and the optimized traffic light control algorithm is shown in Figure \ref{fig:performance}. 
From this result, we can see that, since the throughputs of the two algorithms are always similar, the profiles of average trip time and effective average trip time also show similar trends. The enhanced DICA always performs better than the optimized traffic light for the first four traffic volume cases. In the case of the traffic volume with 500 vehicles, the average trip time performance of the enhanced DICA becomes closer to that of the optimized traffic light. Also, the enhanced DICA has a bit larger standard deviation of trip time than the optimized traffic light. In short, the enhanced DICA performs much better than the optimized traffic light from low to medium traffic volume cases while its performance becomes worse and closer to the performance of the optimized traffic light for heavy traffic volumes.  

We note that this result is mainly due to the fundamental difference between individual vehicle based traffic coordination algorithms and traffic flow based coordination algorithms. To see this, we can consider a heavy traffic situation when all incoming roads are congested. In such a situation, we know that most vehicles start to cross an intersection at rest when they are allowed to cross the intersection either by green light under traffic light algorithm or confirmation under the proposed DICA. Under a traffic light control, if a vehicle is crossing an intersection, then it is highly likely that a few more following vehicles can also cross the intersection without being stopped. However, in the case when vehicles are controlled by an individual vehicle based coordination algorithm like our enhanced DICA, it is possible to have a situation where vehicles from different roads are permitted alternatively to cross an intersection, which inevitably results in more frequent stops than the case of traffic light control. This is the reason why the enhanced DICA is performing worse and closer to the optimized traffic light in the heavy traffic volume situation.
In fact, this result reveals an important point that to achieve the best throughput performance, it is necessary to combine both strategies: an individual vehicle based coordination in normal traffic volume and a traffic flow based coordination in a congested situation.  
According to this result, we are currently developing algorithms that incorporate the advantage of traffic flow based algorithms when congested into the proposed enhanced DICA.    

\begin{figure}[!b]
	\centering{\includegraphics[width=3.6in]{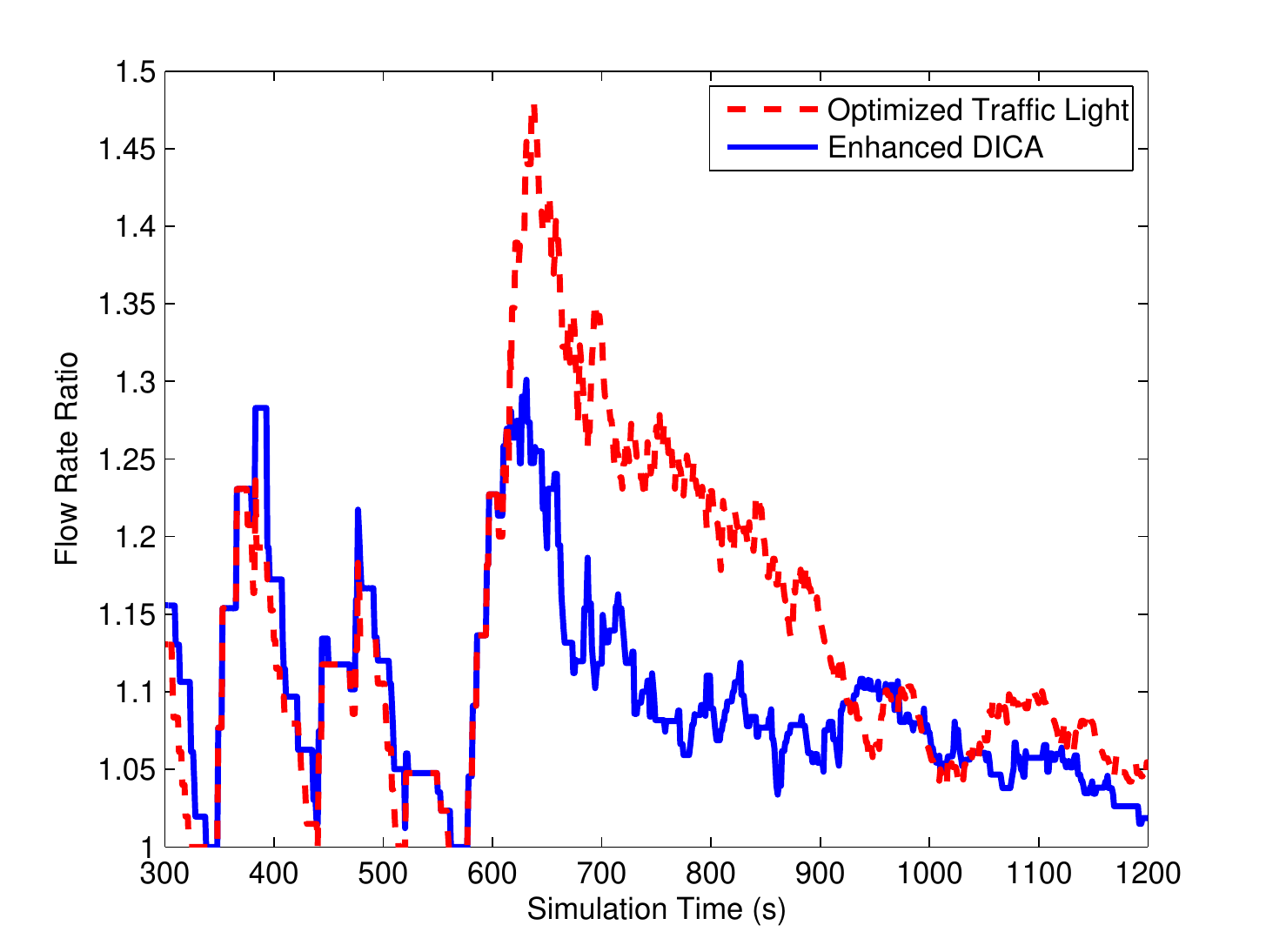}}
	\caption{Flow rate ratio when traffic volume changes from 100 to 500.	\label{fig:ChangeProb}}
\end{figure}

Another simulation was performed to validate the transient traffic control performance of DICA when the traffic volume is changing. We run a simulation with $20$ minutes long simulation time during which the traffic volume increases from the case of $100$ vehicles to $500$ vehicles per $10$ minutes. 
At each simulation time step, the ratio of the vehicle number generated to the number of vehicles that have exited the intersection, which we call the \emph{flow rate ratio}, was calculated to see how much congestion can occur and also how long it takes to address the congestion. 
The flow rate ratio measured during the simulation time is plotted in Figure \ref{fig:ChangeProb}. In this figure, if the flow rate ratio is close to $1$, then it means that all vehicles approached an intersection have already crossed the intersection and there are no vehicles waiting to cross at that time.  
The simulation time starts from $300\ s$ in the figure since the flow rate ratio needs some time to be stable. From the figure, we can also see that before the increase of the traffic volume, the flow rate ratios of the two algorithms are very similar. After $600\ s$ at when the traffic volume is changed to the $500$ vehicles case, the flow rate ratio of the optimized traffic light increased a lot. Figure \ref{fig:ChangeProb} shows that DICA is more resilient to the change of traffic volume than the optimized traffic light.

\section{Conclusion} \label{sec:conc}

In this chapter, We analyzed the computational complexity of the original DICA and enhanced the algorithm so that it can have better overall computational efficiency. Simulation results show that the computational efficiency of the algorithm is improved significantly after the enhancement and the properties of starvation free and safety are guaranteed. We also validated that the overall throughput performance of our enhanced DICA is better than that of an optimized traffic light control mechanism in the case when the traffic is not congested. 


\chapter{Reactive DICA: an Approach for Expedited Crossing of Emergency Vehicles} \label{chapter:r-dica}
The problem of evacuating emergency vehicles as quickly as possible through autonomous and connected intersection traffic is addressed in this chapter. DICA is augmented to allow emergency vehicles cross intersections faster and keep the influence on other vehicles' travel as minimum as possible.

\section{Reactive DICA} \label{sec:R-DICA}

The problem we want to solve is that how to let EVs which are driven autonomously cross an intersection as soon as possible under the connected and autonomous traffic environment. 
In the mean time, we aims to keep all other vehicles having similar travel times as when there are no EVs in the traffic. 
In short, our objective is to evacuate EVs through an intersection as quickly as possible while other vehicles' travel times are minimally affected. Note that for simplicity the term emergency vehicle in this dissertation means a vehicle in an emergency status (i.e. with siren and the lights on). 
Same assumptions with our previous work \cite{lu2016intelligent, ieee} are employed in this problem. Overtaking and lane-changing inside the communication region are not allowed which means that vehicles on each lane will keep its lane once it enters the communication region. 
As an approach to give preference to EVs in autonomous traffic, we give priority to EVs in an intersection crossing traffic by optimizing the sequence of crossing vehicles. 
Also, since we are augmenting the original DTOT-based intersection control algorithm, the new algorithm will only be used to coordinate vehicles when there is an EV within the communication region of an intersection while the crossing traffic is controlled the same way as before when all vehicles are normal vehicles inside the communication region. 
Thus, the entering of an EV activates the new algorithm, so we call the augmented DICA the \emph{Reactive DICA} (R-DICA).
DICA is only taking care of head vehicles which reduces computational complexity and communication load of ICA a lot.
However, unlike in DICA, more vehicles are needed to be considered in R-DICA in order to allow EVs to cross an intersection as fast as possible. Specifically, all vehicles on the lane of an EV which are ahead of the EV should be included in the set of vehicles whose intersection crossing order are to be optimized. In the sequel, we call all those vehicles as \emph{vehicles on EV's lane}.
Thus, the set of vehicles that we need to consider for vehicle ordering are all unconfirmed vehicles on EV's lane and also all confirmed vehicles which are not on EV's lane.
All these vehicles can be divided into two types: vehicles whose DTOTs cannot be modified (vehicles who have already entered the intersection or cannot make a stop at the enter line even with maximum deceleration), and vehicles whose DTOTs could be changed (vehicles who are stopping at the enter line of the intersection or are able to make a stop at the enter line, or unconfirmed vehicles who are ahead of the EV). The sequence of vehicles of the latter type is what we can optimize to expedite the crossing of EVs. We define the set of these vehicles as $\mathcal{S}^*$.

Roughly speaking, our approach for fast crossing of emergency vehicles is to assign the highest priority to them and delay confirmation for all other normal vehicles. Thus, incorporating a priority based ordering of vehicles into the basic DICA framework would achieve this goal. 
To find such an optimal vehicle ordering, we formulate an optimization problem based on the \emph{entrance time} of vehicles which is the time a vehicle enters the line of an intersection.  
Let $\mathcal{P}(\mathcal{S}^*)$ be the set of ordered vehicle sequences (or simply called a \emph{sequence} in the sequel) from the set of vehicles in $\mathcal{S}^*$.
Then, if we use $T_e^v$ to represent entrance time of vehicle $v$, a reasonable objective function for our optimization problem would be:
\begin{equation}
	\min_{\mathcal{P}(\mathcal{S}^*)} T_e^{EV}
\end{equation}
where $T_e^{EV}$ is the entrance time of an EV at an intersection.
Thus, to solve this optimization problem, we first need to introduce an approach that determines the entrance time of an EV. 


\begin{figure}[!htb]
	
	\centering

	\includegraphics[width=5.7in]{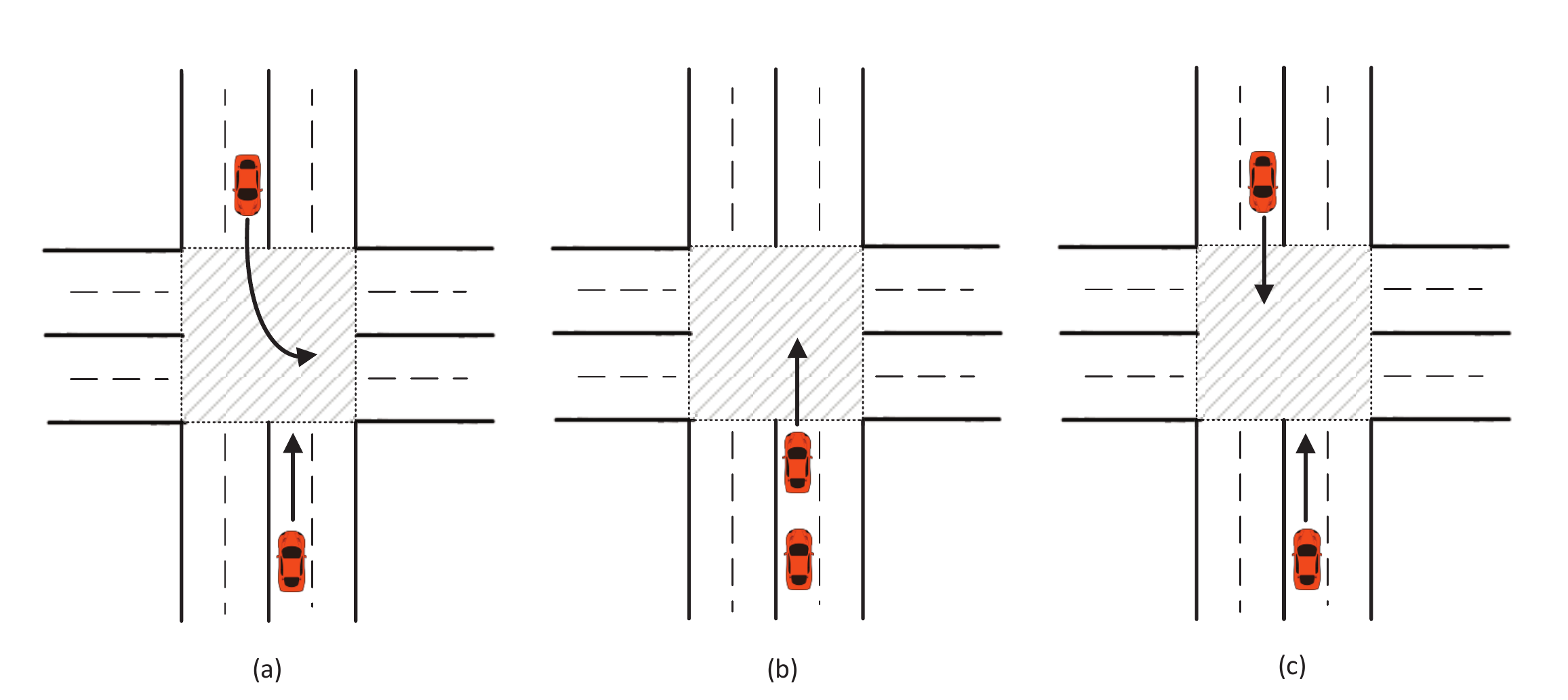}
	\caption{Three different situations for separation time.}
	
	\label{fig:separationTime}
	
\end{figure}

First, we note that some sequences in $\mathcal{P}(\mathcal{S}^*)$ can be eliminated if we impose some constraints for optimal vehicle ordering. For example, the order of vehicles on EV's lane cannot be altered and hence should be preserved. Also, since all confirmed vehicles $\mathcal{S}^*$ are able to stop before the enter line of an intersection, we can allocate higher priorities for vehicles on EV's lane than those in other lanes. 
We use $\bar{\mathcal{P}}(\mathcal{S}^*)$ to denote the set of ordered sequences of vehicles satisfying these constraints.
Now, let us consider a sequence $s$ in the set $\bar{\mathcal{P}}(\mathcal{S}^*)$. 
Then, if we consider the first vehicle $v$ in the sequence $s$, then it is easy to see that the vehicle is always a head vehicle on EV's lane and has a confirmed DTOT. Hence the entrance time of this vehicle $v$ can be determined simply by its $\tau(\mathcal{O}^v_1)$ which is the time when the vehicle $v$ occupies the first occupancy of its DTOT. 
For any other vehicles which are not the first vehicle in the sequence $s$, the way to compute its entrance times is a bit different. We need a time interval between any two successive vehicle in a sequence to ensure safety. This time interval is called \emph{separation time} $\tau_s$. In this chapter, as shown in Figure \ref{fig:separationTime}, we define three separation times for different situations between two vehicles.

\begin{equation}
\tau_s = 
\begin{cases}
\delta_c & \quad 	v_i \otimes v_j \mbox{ Figure \ref{fig:separationTime} (a), or} \\
\delta_s & \quad v_i \prec v_j \text{ or } v_j \prec v_i \mbox{ Figure \ref{fig:separationTime} (b), or} \\
0 & \quad v_i  \odot v_j \mbox{ Figure \ref{fig:separationTime} (c)}
\end{cases}
\end{equation}
where symbols $\odot$ and $\otimes$ are used to represent two vehicles' routes are compatible and conflicting respectively. $v_i \prec v_j$ represents that vehicles $v_i$ and $v_j$ are on a same lane and $v_i$ is following $v_j$.
The separation time's value depends on pavement conditions, vehicle mechanical errors and weather conditions. The focus of this chapter is proposing a coordination algorithm not the determination of these values. Thus, we just approximate the values from current empirical estimations which is widely accepted \cite{chen2010markov}.
Then the expression to compute the entrance time of $v^j$ which is not the first vehicle $v^1$ in the sequence is:

\begin{equation}
T_e^j = max\{T_a^j, T_e^i + \tau_s\}
\end{equation}
where $v^i$ is the immediate predecessor of $v^j$ in the sequence, $T_a^j$ is the \emph{predicted arrival time} of the vehicle $v^j$ which is the shortest time for the vehicle to arrive at the enter line of an intersection under the constraints of maximum acceleration and speed without considering other vehicles in a traffic. $T_e^i$ is $v^i$'s entrance time and $\tau_s$ is the separation time between $v^i$ and $v^j$. Starting from the second vehicle in sequence, this equation is iteratively used to compute the entrance time of of each vehicle in the sequence until the entrance time of the emergency vehicle is computed.

Now the complete form of an optimization problem for optimal vehicle ordering to minimize the entrance time of the EV is formulated as follows:
Given predicted arrival times $T_a^{v_i}$ for all $v_i \in \mathcal{S}^*$, find $s^*$ such that 
\begin{eqnarray} \label{eq:opt}
	s^* &=& \min_{s \in \bar{\mathcal{P}}(\mathcal{S}^*)} T_e^{EV} \\ \nonumber
	&s.t.& |T_e^i - T_e^j| \geq 
\begin{cases}	
0 & \quad	v_i  \odot v_j \\
\delta_c & \quad	v_i \otimes v_j \\
\delta_s & \quad	v_i \prec v_j \text{ or } v_j \prec v_i
\end{cases} \\ \nonumber
	&&  T_e^v \geq T_a^v \quad \forall v \in \mathcal{S}^*
\end{eqnarray}

A naive approach to solve the optimization problem in (\ref{eq:opt}) is an exhaustive search in all possible sequences that can be generated from the set $\mathcal{S}^*$. If we suppose that there are $n$ numbers of vehicles in $\mathcal{S}^*$ (i.e. $n=|\mathcal{S}^*|$) and there are $n_{EV}$ numbers of vehicles on EV's lane, then there are $n!/n_{EV}!$ numbers of sequences in $\mathcal{P}(\mathcal{S}^*)$. 
However, if $n$ is becoming large, then the computational time and resources required to solve the optimization problem are increasing significantly. 
Hence it might not be an efficient approach to use an exhaustive search method when we want to solve the problem (\ref{eq:opt}) with many vehicles. Such computation issues of the problem present the need to seek heuristic approaches which are good at solving complex problems in a very short time compared with exhaustive search. Several heuristic optimization approaches like genetic algorithm, ant colony system, artificial neural networks exist in literature. 
\cite{murata1996} used permutation encoding scheme and solved the flowshop scheduling problem with an objective of minimizing the makespan. \cite{yan2013autonomous} proposed a genetic algorithm to optimize the groups of compatible vehicles in a very short time. \cite{hart2005} and \cite{werner2011} reviewed many researches that genetic algorithms can be used to solve job scheduling problems which can meet our requirements. Thus, we also choose to use genetic algorithm (GA) to obtain optimal sequence of vehicles. 

The high level architecture of R-DICA combining GA and DICA is shown in Figure \ref{fig:R-DICA}. 
R-DICA activates GA when ICA detects an EV. Then ICA stop accepting any confirmation of new vehicles which are detected after the EV. All vehicles who belong to $\mathcal{S}^*$ are rearranged to obtain the optimal sequence for the EV's crossing by GA. Then ICA only confirms vehicles who are already included in the set $\mathcal{S}^*$ until the EV exits the intersection. Once the EV is complete out of intersection, ICA switchs back to use DICA to manage normal intersection crossing traffic. 

\begin{figure}[!hb]
	
	\centering

	\includegraphics[width=3.5in]{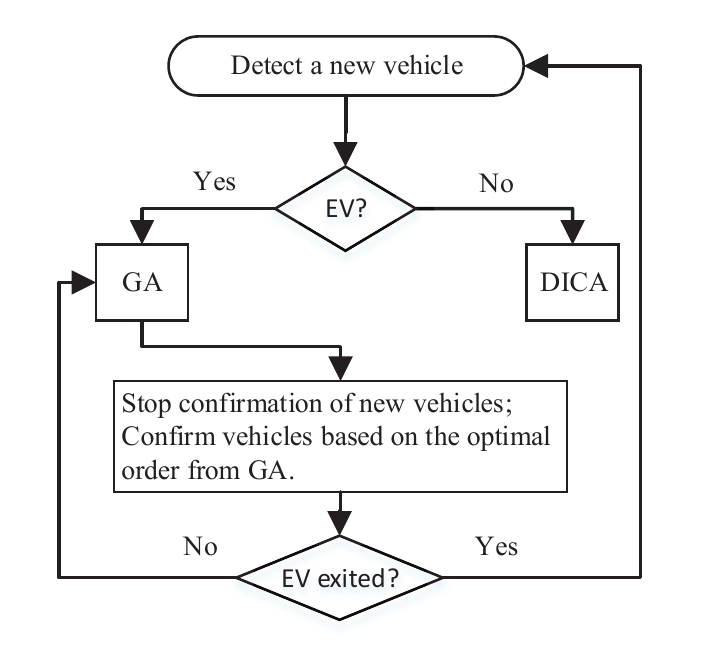}
	
	\caption{Control flow diagram of ICA in R-DICA.}
	
	\label{fig:R-DICA}
	
\end{figure}

\section{Genetic Algorithm for Vehicle Ordering} \label{sec:ga}

In this section, we discuss the details of how GA is used to find the optimal vehicle sequence in (\ref{eq:opt}). 

Genetic Algorithms, which have been widely used to solve problems in computer science, artificial intelligence, information technology and engineering, are techniques of self-organized and self-adapting artificial intelligence mimicking the evolutionary process of creatures in nature \cite{dahal2007evolutionary, hart2005}. A solution in GA is called an \emph{individual} which is encoded compactly to facilitate the processes of crossover and mutation that are essential in a genetic algorithm. A group of individuals is called a \emph{population} in which some individuals are selected as parents to generate offspring through crossover and mutation. Based on some features of each individual, some individuals survive and others die among all the original population and new individuals. Individuals who correspond or near correct solution have a better chance to survive during evolving since they have high objective values which is called \emph{fitness}. Fitness function should be defined properly to evaluate each individual. As introduced above, solutions in GA evolve to adapt the objective problem. Optimal or near-optimal solutions are expected to be obtained after a certain number of generations.
In this chapter, we propose a GA to solve the complex traffic control problem for emergency vehicles in a short time. Permutation encoding scheme is used in the algorithm. And crossover and mutation operators suitable for permutation encoding scheme are devised. 
The proposed GA for vehicle ordering is shown in Algorithm \ref{code:ga}.
Detailed discussion for permutation scheme, crossover, mutation, etc. of the proposed GA are given in the following sections.

\begin{algorithm}[!thb] 
	
	\caption{Genetic Algorithm for Vehicle Ordering}
	
	\begin{algorithmic}[1]
		
		
		\STATE Generate $N_{pop}$ different individuals randomly for $n$ vehicles in $\mathcal{S}^* \rightarrow \mathcal{I}$
		\STATE $feasibilityCheck(\mathcal{I}) \rightarrow \mathcal{I}$
		\STATE $k = 0$
		\STATE $j = 0$
		\STATE $fitness\_best\_last = 0$
		\STATE
		\WHILE{$k < N_{max} $ and $j < N_{noChange}$}

		
		
		\STATE $crossOver(P_c, \mathcal{I}) \rightarrow \mathcal{I}$
		
		\STATE $feasibilityCheck(\mathcal{I}) \rightarrow \mathcal{I}$
		
		
		
		\STATE $mutation(P_m, \mathcal{I}) \rightarrow \mathcal{I}$
		\STATE $feasibilityCheck(\mathcal{I}) \rightarrow \mathcal{I}$
		\STATE 
		
		\STATE $fitness\_best, individual\_best = fitness(\mathcal{I})$
		\IF {$fitness\_best > fitness\_best\_last$}
		\STATE $j = 0$
		\ELSE 
		\STATE $j = j + 1$
		\ENDIF

		\STATE top $N_{pop}$ individuals $\rightarrow \mathcal{I}$ 
		
		\STATE $k = k + 1$
		
		\ENDWHILE
		
		\STATE Decode $individual\_best$
	\end{algorithmic}
	
	\label{code:ga}
	
\end{algorithm}

In the proposed GA, we first generate a random population $\mathcal{I}$ that contains $N_{pop}$ individuals which are encoded by permutation scheme. 
The function $feasibilityCheck()$ takes a set of individuals and makes modification to the infeasible individuals. Feasible individual corresponds to a sequence of vehicles that does not violate the order of vehicles on EV's lane. After proper modification, the function returns a set containing individuals which are all feasible. 
The function $crossOver()$ then perform crossover on randomly selected pairs of individuals from the population $\mathcal{I}$ with a probability $P_c$ to generate new offspring. Then the feasibility of the offspring is checked. Notice that after crossover, the number of individuals is larger than $N_{pop}$.
Mutation on the produced offspring with probability of $P_m$ is done by function $mutation()$. The mutated individuals also need to be checked for feasibility and modified if needed.
Based on given conditions, each individual in $\mathcal{I}$ is evaluated by a fitness function $fitness()$ which computes the reciprocal of the entrance time of the emergency vehicle in that individual. The highest fitness value and the corresponding individual are recorded. Notice that the fitness can also be obtained by using other metrics like the exit time of the EV, or the trip time of the EV, etc. These metrics will give us similar results. We choose the entrance time because we have the predicted arrival time for each vehicle. Thus, it is easy to implement the algorithm. 
Then we use the top $N_{pop}$ individuals from the original population and offspring to form the new population.
If any of the stopping criteria (maximum number of iterations or the best solution is not updated for a certain number of generations) are met, then the algorithm terminates. Otherwise the algorithm repeats the steps inside the \emph{while} loop.


\subsection{Chromosome Encoding and Feasibility Check} \label{encoding}

\begin{figure} [!ht]
	
	\centering		
	\includegraphics[width=2.6in]{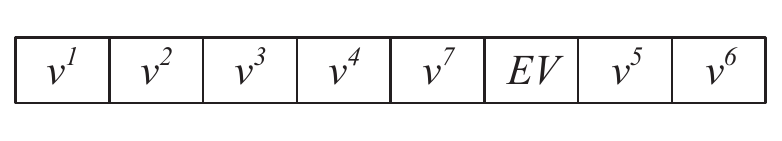} 
	\caption{An example of the permutation encoding scheme, the left-most vehicle has the highest priority while the right-most one has the lowest priority.}
	\label{fig:permutation}
	
\end{figure}

\begin{figure} [!h]
	
	\centering	
	\includegraphics[width=4.3in]{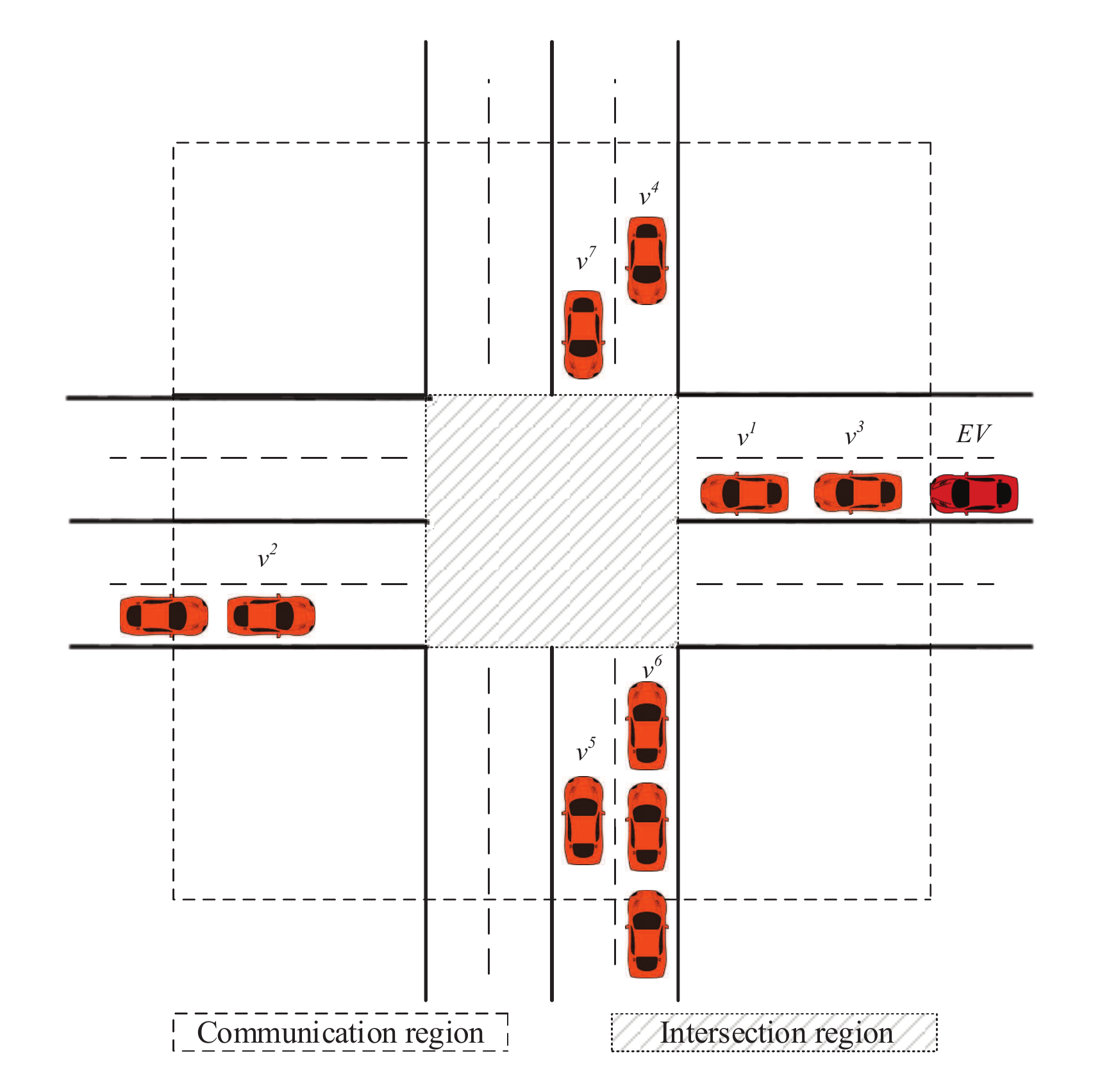}
	\caption{Example situation of vehicles whose sequence to be optimized in intersection space, note: vehicles on EV's lane are: $v^1, v^3, EV$}
	
	\label{fig:vehiclesSequence}
	
\end{figure}

Instead of using the popular binary encoding scheme for genetic algorithms, we choose to use permutation encoding scheme which is more suitable to find an optimal sequence for vehicle ordering. As shown in Figure \ref{fig:permutation}, the individual corresponds to a sequence of vehicles which is $\{v^1, v^2, v^3, v^4, v^7, EV, v^5, v^6\}$ with $v^1$ on the leftmost is the first and the rightmost vehicle $v^6$ is the last one. Different chromosomes denote different sequences of vehicles. Once an individual is created, it is not always true that the corresponding sequence is a feasible one since vehicles' order on EV's lane cannot be altered. Every new generated individual should be checked against the sub-sequence of vehicles on EV's lane for feasibility. Figure \ref{fig:vehiclesSequence} is provided to have a visual impression of the situation when vehicles' sequence needs to be optimized. In the figure, $v^1, v^3$ and $EV$ are the vehicles on EV's lane whose order could not be altered. And note that except vehicles on EV's lane, all other vehicles who are not a head vehicle are not part of the sequence. The vehicles from South and West who are not head vehicles are such vehicles that will be confirmed only after EV exits. If an individual is not feasible, the corresponding bits of vehicles on EV's lane should be changed to conform the correct relative order. The function $feasibilityCheck()$ is making the corresponding modifications on an infeasible individual. An example of adjustment according to the sequence of vehicles who are ahead of the EV on the same lane is shown in Figure \ref{fig:crossover}.

\subsection{Crossover and Mutation}

Two individuals perform crossover to generate offspring if they are selected to be parents. The offspring inherit features (i.e. gene structures) from their parents. Different encoding scheme has different crossover operator since they have different gene structures. 
For the most popular binary encoding scheme, it is easy to do crossover and mutation since a chromosome only contains binary bits. For our permutation encoding scheme, we choose to apply one-point crossover \cite{dahal2007evolutionary} which is implemented in the function $crossOver()$. 
As shown in Figure \ref{fig:crossover}, the same bits may exist in one chromosome after the parts behind the randomly chosen position are swapped. In the second step in the figure, the two children have same bits $\{v^2, v^3\}$ and $\{v^5, v^6\}$ respectively. To generate correct chromosomes, we adjust the chromosome of one child by swapping those same bits from another child's chromosome while preserving the relative ordering of parents. 
Note that the new chromosomes may also not be feasible since the order of vehicles on EV's lane in a chromosome may not be the same as the actual order. If this happens, since the order of the vehicles on EV's lane cannot be changed, we manually adjust the relative order of vehicles to be the correct order to have a feasible chromosome. For example in Figure \ref{fig:crossover}, we adjust the order of $v^1$ and $v^3$ for the second child in the last step.  Feasibility check and adjustment are done by the function $feasibilityCheck()$.

\begin{figure} [!t]

	\centering

	\includegraphics[width=5.4in]{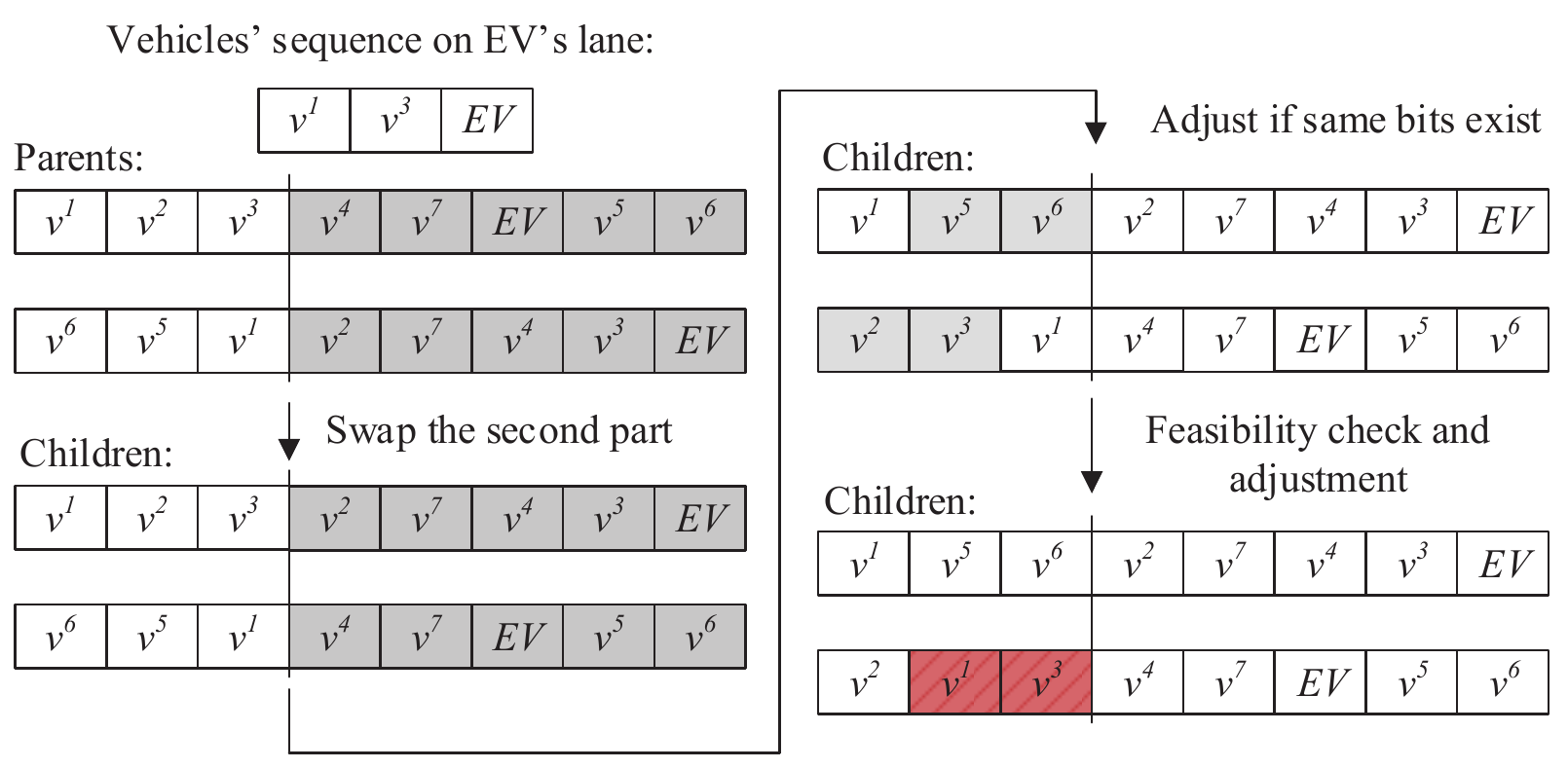} 

	\caption{An example of one-point crossover. Relative ordering of parents is preserved when the chromosomes are adjusted due to the existence of same bits. Feasibility is checked for the two children based on vehicles' sequence on EV's lane and adjustments are made.}

	\label{fig:crossover}

\end{figure}

Similar to probabilistically selecting two individuals for crossover, we apply mutation on produced chromosomes based on a given probability by the function $mutation()$. Different with binary encoding scheme's mutation which could be done by simply changing the value of a randomly selected bit from 1 to 0 or 0 to 1, our permutation encoding scheme exchanges the bits on two randomly chosen positions to obtain a new chromosome. As shown in Figure \ref{fig:mutation}, positions of $v^2$ and $EV$ are randomly chosen to exchange values and feasibility check based on vehicles' order on EV's lane is performed after mutation. 

\begin{figure} [!ht]

	\centering

	\includegraphics[width=2.6in]{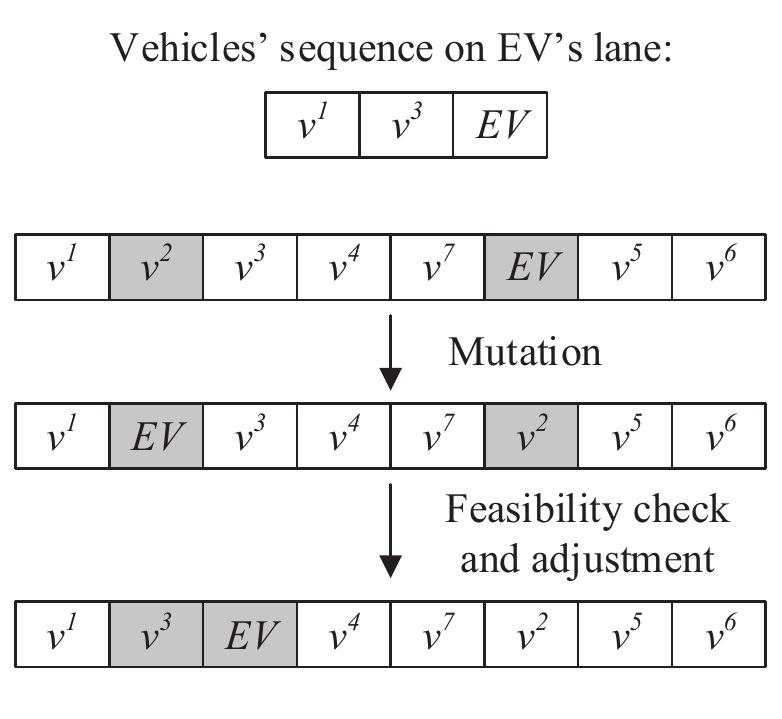}

	\caption{An example of mutation. Feasibility is checked for the mutated chromosome based on vehicles' sequence on EV's lane and adjustment is made. $v^2$ and $EV$ are randomly chosen to swap to perform mutation. $EV$ is swapped with $v^3$ to conform vehicles' sequence on EV's lane.}

	\label{fig:mutation}

\end{figure}

\subsection{Fitness and Generating New Generation}

The reciprocal of the entrance time of the emergency vehicle is defined as the fitness of an individual in our proposed GA. The entrance time of the emergency vehicle can be determined as discussed in Section \ref{sec:R-DICA}. For an individual in a population, the higher the fitness is, the closer the corresponding solution is to the optimal solution. Among all individuals in the population and the offspring produced, the top $N_{pop}$ individuals with respect to the fitness values are selected to form the next generation.

\subsection{Stopping Criterion}
The constant $N_{max}$ represents the number of maximum generations and $N_{noChange}$ represents the number of continuous generations that solutions are not changed. As shown in In Algorithm \ref{code:ga}, if the best solution is not updated after $N_{noChange}$ generations or the $N_{max}$th generation has been reached, then the algorithm terminates and stops searching a better solution.

\section{Simulation} \label{sec:sim}

The performance of the proposed optimization approach for EVs is evaluated against the DICA and a reactive traffic light algorithm which is explained below. All simulations are implemented in an open-source traffic simulator SUMO (Simulation of Urban MObility) \cite{Krajzewicz2012}. The default traffic management for intersections in SUMO is not used and the control algorithms are programmed as Python applications. The TraCI is used for the interaction between the Python applications and SUMO. Configurations for intersections in the simulation and corresponding results are described in this section, followed by discussions on obtained results. 

\subsection{Simulation Setup}

Extensive simulations of different traffic volumes are performed on an isolated perfect 4-way intersection where each approach has three incoming lanes and two exit lanes. 
Similar to real intersections in the United States, among the three incoming lanes, the left-most lane is dedicated for left-turn vehicles, and through vehicles can use the other two lanes. The right-most lane can also be used by right-turn vehicles. 
All roads have the speed limit of $v_m = 70 km/h$. The maximal acceleration ($a_{max}$) and deceleration ($a_{min}$) for all vehicles are set to be $2 m/s^2$ and $-4.5 m/s^2$. In the simulation, for simplicity we used the same size for normal vehicles and EVs that they both have $5$ meters length and $1.8$ meters width. We let vehicles approach an intersection with different speeds when they enter into the communication region of the intersection to make the simulation more realistic. In detail, when a new vehicle is spawned outside of the communication region, the speed of the vehicle is set with a random value within the range from $40\%$ to $100\%$ of the maximum allowed speed $v_m$. 
In those cases where vehicles need to stop just before the enter line of the intersection region to avoid potential collisions with other confirmed vehicles, the distance between the enter line of the communication region and the enter line of the intersection region should be long enough for a vehicle to be able to stop from maximum allowed speed $v_m$. 
Thus, it is easy to conclude that the distance should be at least $-v_m^2 / (2a_{min}) \approx 42.01 m$. So, in simulation, we set the distance between the enter lines of the communication region and the intersection region as $50 m$.  

Vehicles are generated randomly on each road with a randomly assigned intersection route. Every generated vehicle has the probability of $p_{EV}$ to be an EV, otherwise it will be a normal vehicle. 
In our simulation, an emergency vehicle is generated only when there is no such vehicle inside the communication region. 
To create variations on the traffic pattern, we use several different random seed numbers to generate different traffic patterns and make the simulations reproducible. 
Table \ref{rv_simulation} summarizes the parameters used for various traffic volumes and patterns that were employed in many of our simulations where $p_V$ corresponds to traffic volumes, $p_L, p_S$ and $p_R$ are the probabilities for a generated vehicle to take Left, Straight, and Right routes respectively. For every traffic volume, we run three simulations with different traffic patterns and then use the averages of these simulation results as the result for each traffic volume case. For the genetic algorithm, we set $N_{pop} = 100, N_{noChange} = 10, P_c = 0.85, P_m = 0.05$ and $N_{max} = 100$. 

Simulations were run by $0.05$s time step. We terminate each simulation when the simulation time reaches one hour. The \emph{simulation time} here represents the simulated time in simulation programs. And the \emph{computation time} which will be used in the following discussion is the time that a computer takes to run a simulation program. All simulations were run on a 64bit Windows computer, and its processor is Intel(R) Core(TM) i7-4770 CPU @ 3.40 GHz with 8 GB RAM.

\begin{table}[tb]
	\centering
	\vspace{-0.5cm}
	\caption{Parameters used for various traffic volumes and patterns. ($*$ Expected number of vehicles per 10 minutes.)}
	\smallskip
	\label{rv_simulation}
	\begin{tabular}{lc}
		\toprule
		Parameter & Value \\
		\midrule
		Traffic volumes$*$ & 100 / 200 / 300 / 400 / 500 \\
		
		$p_V$ & 0.03 / 0.06 / 0.08 / 0.11 / 0.14 \\ 
		
		$p_L$  & 0.20  \\ 
		
		$p_S$  & 0.60   \\ 
		
		$p_R$ & 0.20  \\ 
		
		$p_{EV}$ & 0.02  \\ 
		
		Random seeds  & 12 / 21 / 66   \\ 
		\bottomrule	
		
	\end{tabular} 
\end{table}

\begin{figure} [!h]
	
	\centering
	
	\includegraphics[width=3.5in]{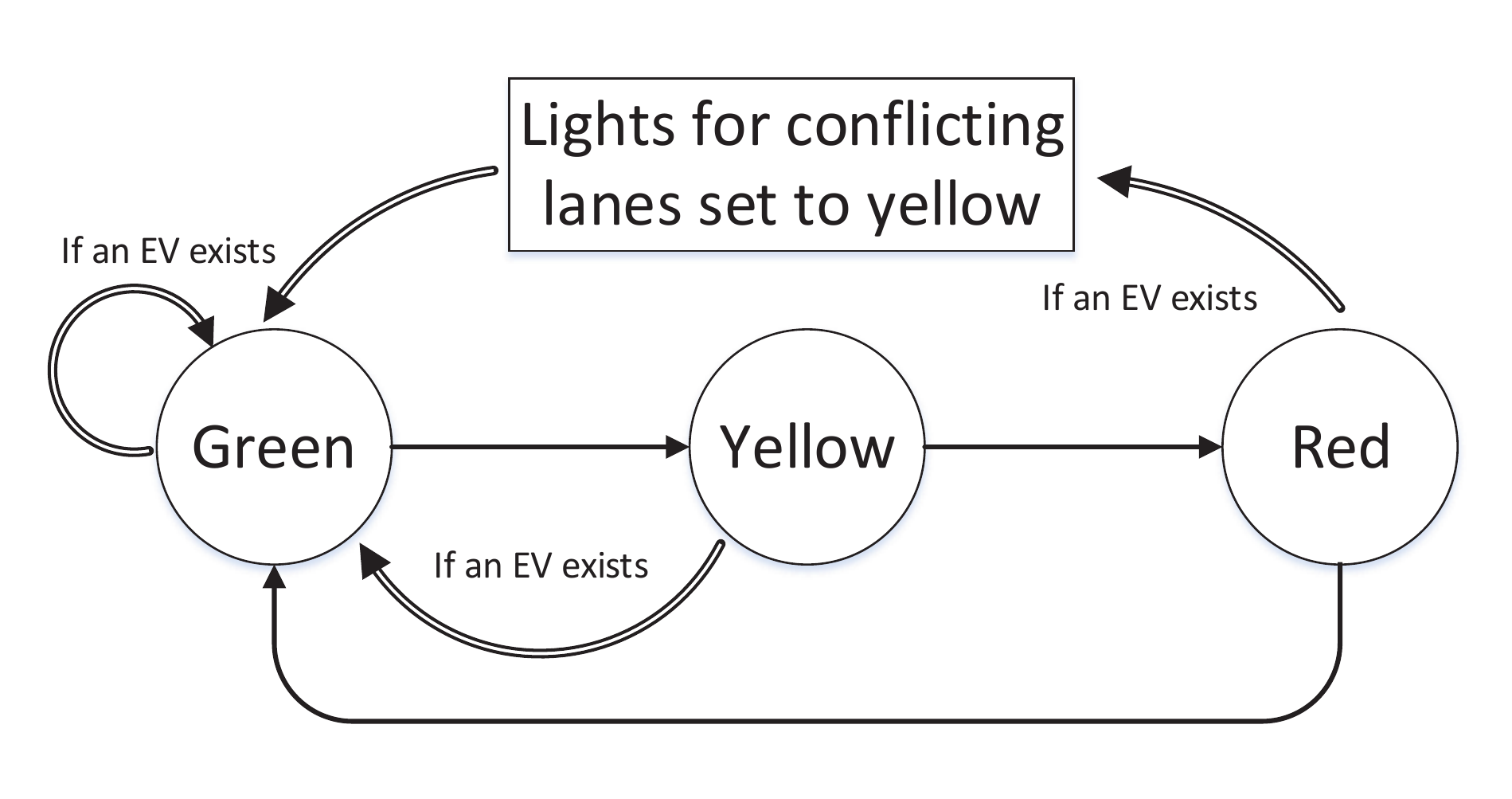}
	
	\caption{Reactive traffic light diagram.}
	
	\label{fig:RTL}
	
\end{figure}

\subsection{Reactive Traffic Light}
To show the effectiveness of the proposed R-DICA for emergency vehicles, a reactive traffic light algorithm for emergency vehicles is implemented and tested. As shown in Figure \ref{fig:RTL}, the traffic light for the lane of an EV changes to green as quickly as possible when an EV is detected on the boundary of communication region. Arrows with single line represent the state transitions (i.e. conventional traffic light algorithm) when there is no EV inside the communication region while arrows with double lines show the actions the algorithm will perform if an EV exists. The conventional traffic light algorithm we used is the default traffic light implemented in SUMO which has 31, 13 and 83 seconds durations for green, yellow and red light phases respectively. As shown in Figure \ref{fig:RTL}, if the current status is yellow or green when an EV is detected, the algorithm changes the light back to green or just extends the green light for a fixed amount of time respectively. If the current status of the lane is red when an EV enters the communication region, the algorithm immediately sets the green lights of conflicting lanes to yellow and then the lane of the EV will have green light after the yellow phase of the conflicting lanes. This augmentation of reactive mechanism in traditional traffic light system certainly help an EV to cross an intersection as quickly as possible.

\begin{figure}[!b]
\centering
\includegraphics[width=5.8in]{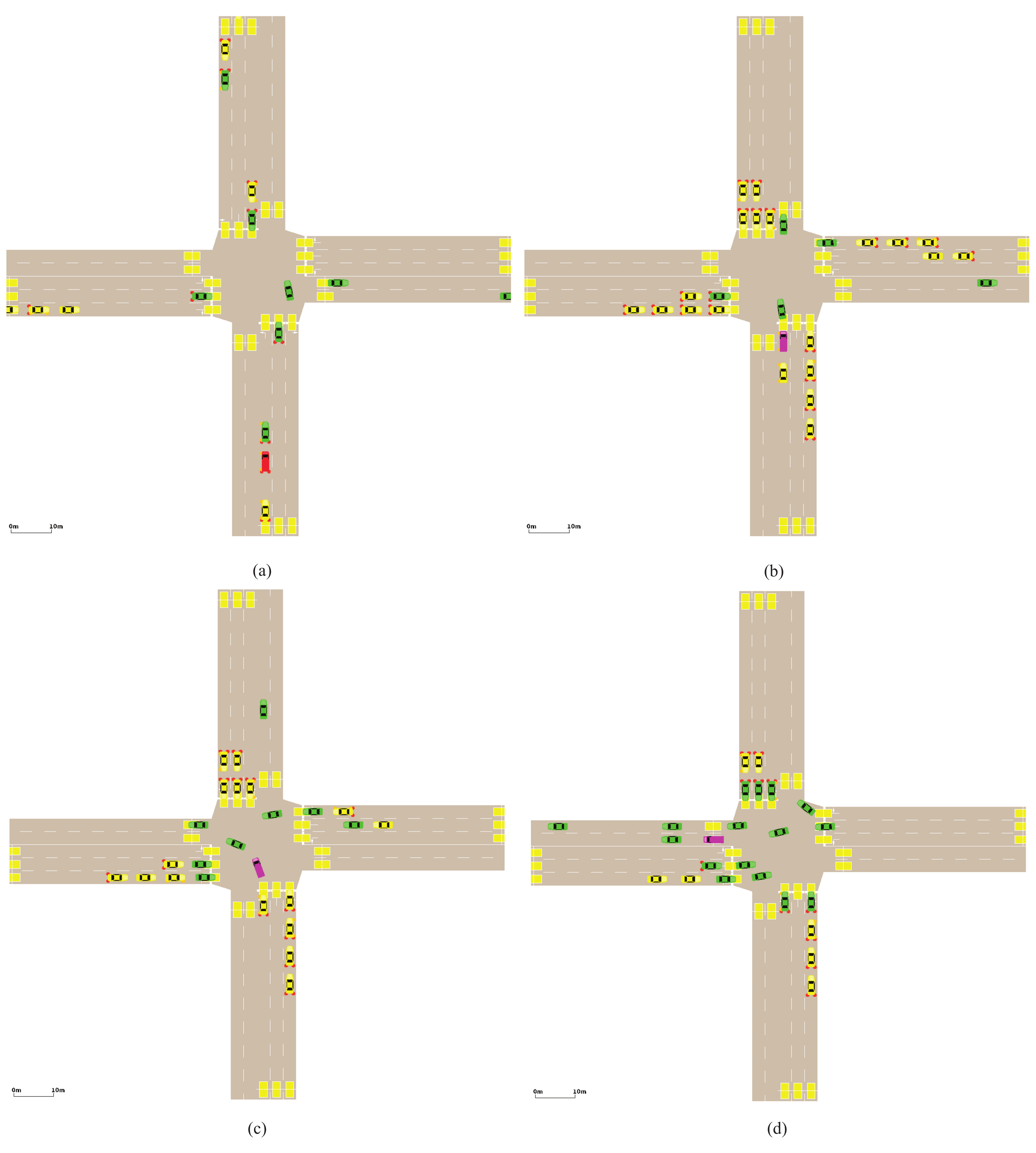}
\caption[A series of screenshots of simulation]{A series of screenshots of simulation which illustrates a situation when an EV is crossing the intersection.}
\label{fig:screenshot}
\end{figure}

\subsection{Simulation Results} \label{sec:res}
Performances of three different traffic patterns for all five volume cases are recorded from simulations. 
Figure \ref{fig:screenshot} shows a series of screenshots of simulation employing R-DICA in SUMO when an EV (the vehicle in red) is crossing the intersection from South. In the simulation, normal vehicles in yellow are not confirmed by ICA while green normal vehicles are the confirmed ones. In Figure \ref{fig:screenshot} (a), we can see that R-DICA activated GA algorithm which establishes an optimal order of vehicles to expedite the crossing of the EV. As one can note that in Figure \ref{fig:screenshot} (a) and (b) the head vehicle on the right lane of West road is not confirmed which means that this vehicle has a lower priority than the EV. All vehicles whose DTOTs cannot be modified are confirmed vehicles. And head vehicles who have a higher priority than the EV are confirmed. Figure \ref{fig:screenshot} (b) and (c) show that the EV is crossing the intersection unhindered while lower priority vehicles are waiting before the intersection. As shown in Figure \ref{fig:screenshot}, as soon as the EV exits the intersection, all head vehicles get confirmed which means R-DICA operates the same way as DICA. The optimal vehicle-passing sequence from the genetic algorithm ensures the fast crossing an intersection for the EV.

\subsubsection{Computation Time}

To show the computational efficiency of R-DICA using GA, we implemented R-DICA in two different versions: One with GA and the other one with the Exhaustive Search (ES) method to solve the optimization problem. Computation times of different volume cases are recorded for both methods. Simulation results are shown in Table \ref{computationTime} where `N/A' means that the computer was not able to complete the simulation due to memory errors.

\begin{table}[h]
	\centering
	\vspace{-0.5cm}
	\caption{Computation time comparison between ES and GA}
	\smallskip
	\label{computationTime}
	\begin{tabular}{cccccc}
		\toprule 
		Traffic volume &  \multirow{2}{*}{100} &  \multirow{2}{*}{200}& \multirow{2}{*}{300} & \multirow{2}{*}{400} & \multirow{2}{*}{500} \\ 
		(Number of vehicles per 10 minutes) &  & &  &  &  \\
		
		\midrule

		Computation time of ES (h) & 0.05 & 0.52 & N/A & N/A & N/A \\ 
		Computation time of GA (h) & 0.05 & 0.15 & 0.35 & 0.65 & 0.72 \\ 
		
		\bottomrule 
		
	\end{tabular} 
\end{table}

From the result, we can see that R-DICA with GA has definite advantages over R-DICA with ES in terms of computational efficiency. As shown in the table, the exhaustive search method only works for light traffic volumes while it has memory issues for traffics of higher volumes. 

\subsubsection{Performance of EVs}

The following performance measures are obtained to compare the performance of R-DICA with DICA and the reactive traffic light: a vehicle's \emph{trip time} ($\tau$) is defined as the time taken for a vehicle from the moment when it enters into the communication region of an intersection until the vehicle exits the intersection. Based on the measurement of $\tau$ for all crossed vehicles, we obtained the \emph{average trip time} ($\bar{\tau}$) and the \emph{maximum trip time} ($\tau_m$) which show the performance of the crossed vehicles. Besides these performance measures, we also calculated \emph{throughput} ($\rho$) which is the percentage of all crossed vehicles against total number of generated vehicles. 
We calculated the rate of average trip time to throughput, which we call the \emph{effective average trip time} ($\bar{\tau}_e$). Detailed explanation for this metric can be found in \cite{lu2016intelligent}.

\begin{figure}[!b]
	\centering
	\includegraphics[width=5.7in]{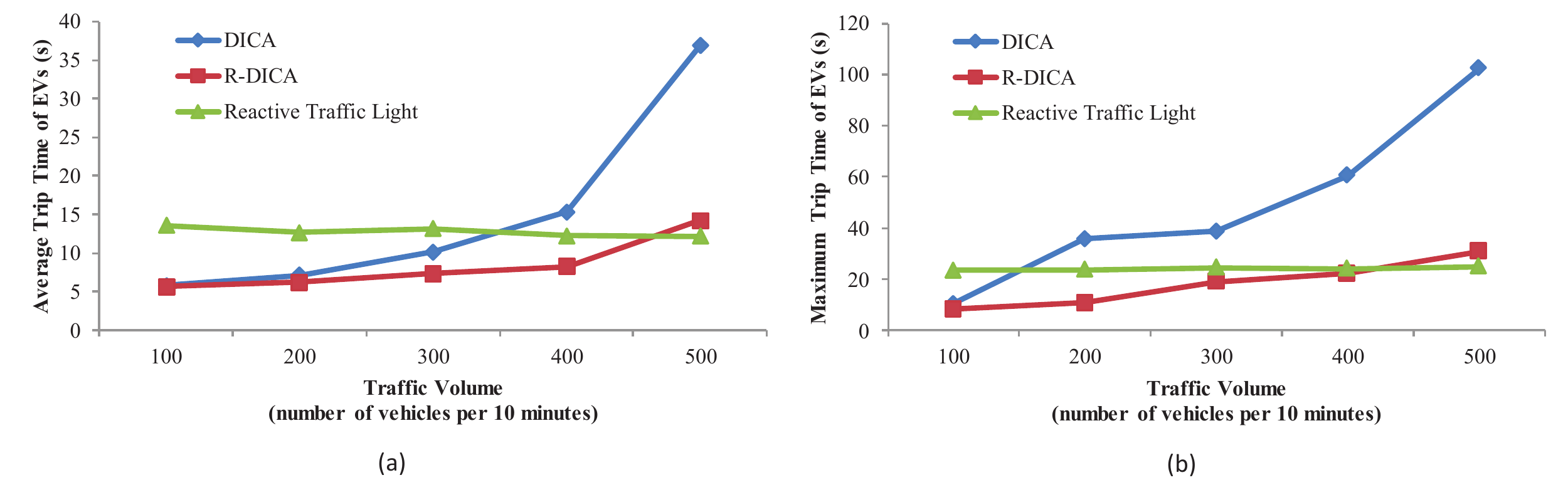}
	\caption[Performance comparison of EVs]{Performance comparison of EVs for DICA, R-DICA, and the reactive traffic light: (a) average trip time, (b) maximum trip time}
	\label{fig:EV}
\end{figure}

As shown in Figure \ref{fig:EV}, the average trip times of EVs in all three algorithms: DICA, R-DICA and the reactive traffic light are compared. 
For traffic volumes from 100 to 400, R-DICA has least average trip time of EVs than the other two algorithms. Especially in light traffic volumes, R-DICA reduces the EVs' average travel time by more than 50\% from the reactive traffic light. However, the algorithm has a bit longer average trip time for EVs than that of reactive traffic light in 500 traffic volume. 
The worst case for EVs' travels is illustrated by the maximum trip time of EVs in (b) of Figure \ref{fig:EV}. The maximum trip time of R-DICA increases and becomes greater than that of the reactive traffic light. 
Both average trip time and maximum trip time of EVs for DICA are increasing along the volumes. One may note that the average trip time and the maximum trip time of the reactive traffic light keep almost the same with the increase of the traffic volume. Through observation of the simulations, part of this is because too many vehicles accumulate before the intersection when the lane of the EV is under red light. At this situation, the EV is not detected and is stopping outside the communication region. When the light for the lane of the EV turns green, the EV accelerates from rest to enter the communication region which results in a higher speed for the EV. Thus, for the heavier traffic volumes, EVs always have a higher speed when detected and are expedited to cross by preference. The trip time within the communication region is then reduced compared with R-DICA. 

\subsubsection{Performance of Normal Vehicles}

\begin{figure}[!t]
\centering
\includegraphics[width=5.7in]{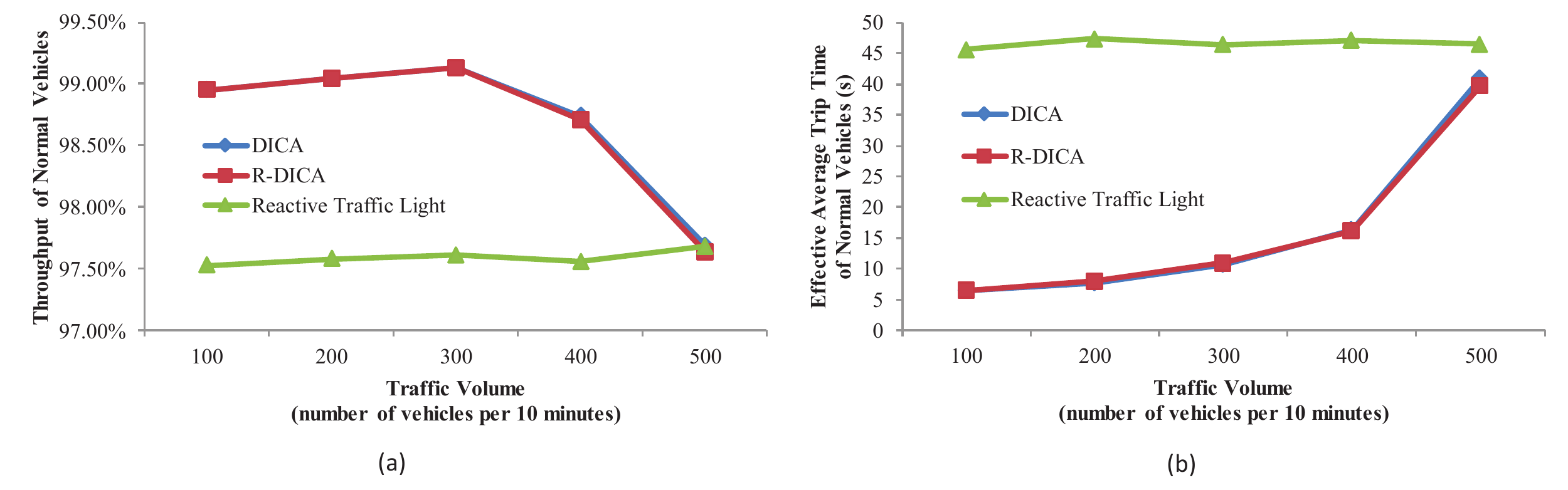}
\caption[Performance comparison of normal vehicles]{Performance comparison of normal vehicles for DICA, R-DICA, and the reactive traffic light: (a) throughput, (b) effective average trip time}
\label{fig:normal}
\end{figure}

Comparison of the performance for normal vehicles for all three algorithms is shown in Figure \ref{fig:normal}. 
From this result, we can see that the throughput and effective average trip time of R-DICA are nearly the same as those of DICA which shows that the performance of normal vehicles is minimally affected by EVs. Both the throughput and effective average trip time of normal vehicles become worse with the increase of traffic volumes. This is consistent with the result in our previous work \cite{lu2016intelligent, ieee}. Also, one can see from Figure \ref{fig:normal} that the reactive traffic light has steady and worse performance for normal vehicles than the other two algorithms.

To investigate more about the negative effect of prioritizing EVs on other normal vehicles, we compare the maximum trip time of normal vehicles for DICA and R-DICA in Table \ref{table}. The maximum trip time of R-DICA is very close to that of DICA and their difference increases with traffic volumes. This shows that it will bring more negative effect on normal vehicles to evacuate an EV in congested traffic.

\begin{table}[h]
	\centering
	\vspace{-0.5cm}
	\caption{Comparison of maximum trip times of normal vehicles between DICA and R-DICA}
	\smallskip
	\label{table}
	\begin{tabular}{cccccc}
		\toprule 
		Traffic volume &  \multirow{2}{*}{100} &  \multirow{2}{*}{200}& \multirow{2}{*}{300} & \multirow{2}{*}{400} & \multirow{2}{*}{500} \\ 
		(Number of vehicles per 10 minutes) &  & &  &  &  \\
		
		\midrule

		Maximum trip time of  & \multirow{2}[1]{*}{17.88}& \multirow{2}[1]{*}{35.80} & \multirow{2}[1]{*}{55.37} & \multirow{2}[1]{*}{89.47} & \multirow{2}[1]{*}{132.90} \\
		normal vehicles: DICA (s) \\ 
		Maximum trip time of & \multirow{2}[1]{*}{18.62} & \multirow{2}[1]{*}{38.52} & \multirow{2}[1]{*}{61.68} & \multirow{2}[1]{*}{99.15} & \multirow{2}[1]{*}{150.00} \\ 
		normal vehicles: R-DICA (s)\\
		\bottomrule 
		
	\end{tabular} 
\end{table}

\section{Summary} 

In this chapter, we have shown that the DICA algorithm can be augmented to allow emergency vehicles to cross intersections faster. A genetic algorithm based approach is proposed as part of the augmented algorithm, called R-DICA, to optimize the sequence of vehicles which gives the emergency vehicle the highest priority and keeps the influence on other vehicles' travel times as minimum as possible. The R-DICA operates the same way as DICA if there is no EV inside the communication region and optimizes vehicle-passing sequence if an EV enters the communication region. A reactive traffic light and DICA algorithms are also implemented for simulation and their results are compared with R-DICA to evaluate the performance of R-DICA. Simulation results show that R-DICA is effective to reduce travel times of EVs and has better performance than the reactive traffic light for normal vehicles.
We conclude that the performance of normal vehicles is not noticeably affected based on the simulation results of DICA and R-DICA.


\chapter{Conclusion and Future Work} \label{chapter:conclusion} 
This dissertation has presented several control algorithms for connected and autonomous intersection traffic. 
All algorithms are validated through extensive simulations. This chapter summarizes the dissertation briefly and gives potential future work.

\section{Conclusion} \label{sec:con}
In Chapter \ref{chapter:dica}, we developed an intelligent intersection control algorithm DICA employing the concept DTOT. V2I interaction protocol has been established for interactions between vehicles and intersection. The chapter introduced the concept of DTOT by which ICA is able to manage limited intersection space at a more accurate and efficient way. Theoretical analysis shows that DICA is free from deadlocks and starvation problems. Simulation results show that our algorithm achieves less Effective Average Trip Time compared with concurrent intersection control algorithms.

In Chapter \ref{chapter:e-dica}, we analyzed the computational complexity of the original DICA and enhanced the algorithm so that it can have better overall computational efficiency. The enhancement was done through several computational techniques like determining conflicting spaces offline, employing bisection method in time-conflict checking, etc. Simulation results show that the computational efficiency of the algorithm is improved significantly after the enhancement and the properties of starvation free and safety are guaranteed. We also validated that the overall throughput performance of our enhanced DICA is better than that of an optimized traffic light control mechanism in the case when the traffic is not congested.

In Chapter \ref{chapter:r-dica}, we have shown that the DICA algorithm can be augmented to allow emergency vehicles to cross intersections faster. A genetic algorithm based approach is proposed as part of the augmented algorithm, called R-DICA, to optimize the sequence of vehicles which gives the emergency vehicle the highest priority and keeps the influence on other vehicles' travel times as minimum as possible. The R-DICA operates the same way as DICA if there is no EV inside the communication region and optimizes vehicle-passing sequence if an EV enters the communication region. A reactive traffic light and DICA algorithms are also implemented for simulation and their results are compared with R-DICA to evaluate the performance of R-DICA. Simulation results show that R-DICA is effective to reduce travel times of EVs and has better performance than the reactive traffic light for normal vehicles.We conclude that the performance of normal vehicles is not noticeably affected based on the simulation results of DICA and R-DICA.


\section{Future Work}

In the future, assumptions like perfect communication, accurate prediction of DTOT can be relaxed and methods to deal with car failures will be studied to make the algorithm more applicable to real situations.

In addition to giving priority to special vehicles e.g. emergency vehicles by forming optimal sequence like R-DICA, R-DICA can be enhanced to allow special vehicles' faster crossings through efficient usage of intersection space. For example, ICA may modify both occupancies’ positions and times of a vehicle’s DTOT in order to form a passage for special vehicles.

We will study algorithms based on DICA on a network of intersections to have a global optimal performance on a city level. Also, we will work to integrate the grouping strategy used in traffic flow based intersection control algorithms into our DICA to achieve better performances in more congested situations. DICA can be generalized to work with mixed traffic where autonomous vehicles and human-driven vehicles coexist. More interesting and difficult problem could be including pedestrians in the intersection traffic.

\newpage
\addcontentsline{toc}{chapter}{References}
\renewcommand{\bibname}{References}
\bibliographystyle{IEEEtran} 
\bibliography{comprehensive}


\end{document}